\newcommand{\ud}{\mathrm{d}}
\newcommand{\ii}{\mathrm{i}}
\newcommand{\de}{\partial}
\newcommand{\cH}{\mathcal{H}}
\newtheorem{theorem}{Theorem}[section]
\newtheorem{definition}{Definition}[section]
\newtheorem{remark}{Remark}[section]
\newtheorem{lemma}{Lemma}[section]
\newtheorem{proposition}{Proposition}[section]
\title{Gross-Pitaevskii non-linear dynamics for pseudo-spinor condensates\\
}
\author{\footnotesize Alessandro Michelangeli\thanks{alemiche@sissa.it}
}
\affil{SISSA -- International School for Advanced Studies\\ 
	Via Bonomea 265, 34136 Trieste (Italy)
	\\
}
\author{\footnotesize Alessandro Olgiati\thanks{aolgiati@sissa.it
	}
}
\affil{SISSA -- International School for Advanced Studies\\ 
	Via Bonomea 265, 34136 Trieste (Italy)
	\\
}
\begin{document}

\maketitle
\thispagestyle{empty}

\begin{abstract}
We derive the equations for the non-linear effective dynamics of a so called pseudo-spinor Bose-Einstein condensate, which emerges from the linear many-body Schr\"{o}dinger equation at the leading order in the number of particles. The considered system is a three-dimensional diluted gas of identical bosons with spin, possibly confined in space, and coupled with an external time-dependent magnetic field; particles also interact among themselves through a short-scale repulsive interaction. The limit of infinitely many particles is monitored in the physically relevant Gross-Pitaevskii scaling. In our main theorem, if at time zero the system is in a phase of complete condensation (at the level of the reduced one-body marginal) and with energy per particle fixed by the Gross-Pitaevskii functional, then such conditions persist also at later times, with the one-body orbital of the condensate evolving according to a system of non-linear cubic Schr\"{o}dinger equations coupled among themselves through linear (Rabi) terms. The proof relies on an adaptation to the spinor setting of Pickl's projection counting method developed for the scalar case. Quantitative rates of convergence are available, but not made explicit because evidently non-optimal. In order to substantiate the formalism and the assumptions made in the main theorem, in an introductory section we review the mathematical formalisation of modern typical experiments with pseudo-spinor condensates.\vspace{0.2cm}\\
{\bf Keywords: }{effective non-linear evolution equations, many-body quantum dynamics, pseudo-spinor Bose-Einstein condensates, partial trace, reduced density matrix,
	Gross-Pitaevskii scaling, cubic NLS, coupled non-linear Schr\"odinger system}\vspace{0.2cm}\\
{\bf 2000 Mathematics Subject Classification:}{ 35J10, 35Q40, 35Q55, 35Q70, 81-05, 81Q0, 81U05, 81V70, 82C10}
\end{abstract}

\section{Introduction: pseudo-spinor Bose-Einstein condensation}\label{sec:intro}

It is customary to refer to pseudo-spinor condensates as gases of ultra-cold atoms that exhibit a macroscopic occupation of the same one-body state (Bose-Einstein condensation) and possess internal spin degrees of freedom which are often coupled to an external resonant micro-wave or radio-frequency radiation field, however, with no significant spin-spin internal interaction (whence the \emph{pseudo}-spinor terminology). 
The order parameter of the condensation is therefore a multi-component vector, unlike scalar condensates such as liquid $^4\mathrm{He}$, and the dynamical evolution of these quantum fluids observed in the experiments shows an excellent matching with a non-linear effective dynamics for the order parameter.

In this work we want to present a rigorous derivation of such non-linear equations from the `first principle' many-body linear Schr\"{o}dinger dynamics.

The study of multi-component Bose-Einstein condensates (henceforth also BEC) was spurred on in 1997-1998 by experiments on ultra-cold Rubidium, with condensation coexisting in two different hyperfine states of $^{87}\mathrm{Rb}$ \cite{MBGCW-1997,Matthews_HJEWC_DMStringari_PRL1998,HMEWC-1998,Hall-Matthews-Wieman-Cornell_PRL81-1543} and soon extended to multi-BEC for heteronuclear mixtures such as $^{41}\mathrm{K}$-$^{87}\mathrm{Rb}$ \cite{Modugno-Ferrari-Inguscio-etal-Science2001_multicompBEC}, $^{41}\mathrm{K}$-$^{85}\mathrm{Rb}$ \cite{Modugno-PRL-2002}, $^{39}\mathrm{K}$-$^{85}\mathrm{Rb}$ \cite{MTCBM-PRL2004_BEC_heteronuclear}, $^{85}\mathrm{Rb}$-$^{87}\mathrm{Rb}$ \cite{Papp-Wieman_PRL2006_heteronuclear_RbRb}. In the last two decades the field has expanded through a huge amount of experimental and theoretical studies, for a survey of which we refer to the comprehensive reviews \cite{Ketterle_StamperKurn_SpinorBEC_LesHouches2001, Malomed2008_multicompBECtheory, Hall2008_multicompBEC_experiments,StamperKurn-Ueda_SpinorBose_Gases_2012} 
(see also \cite[Chapter 21]{pita-stringa-2016}).


In order to place the present work into the appropriate mathematical setting, it is instructive to revisit, within the general formalism of many-body quantum mechanics, the essential steps of a typical experiment -- for concreteness we refer to the 1998 pioneering experiment \cite{Matthews_HJEWC_DMStringari_PRL1998,HMEWC-1998,Hall-Matthews-Wieman-Cornell_PRL81-1543}.

First and foremost, the experiment involves only a few hyperfine levels of the considered atomic species -- for $^{87}\mathrm{Rb}$ these are the $5S_{1/2}|F=1,m_f=-1\rangle$ and $5S_{1/2}|F=2,m_f=1\rangle$ states: this results in the effective one-body Hilbert space
\begin{equation}\label{eq:h-one-body}
\mathfrak{h}\;:=\;L^2(\mathbb{R}^3)\otimes\mathbb{C}^2\;\cong\;L^2(\mathbb{R}^3)\oplus L^2(\mathbb{R}^3)
\end{equation}
as for one spin-$\frac{1}{2}$ particle in three spatial dimensions. (However, for the final measurement process the effective Hilbert space to consider is a larger one, as we shall explain later.) The corresponding many-body bosonic Hilbert space is
\begin{equation}\label{eq:H-many-body}
\cH_N\;:=\;\mathfrak{h}^{\,\otimes_\mathrm{sym} N}\,,
\end{equation}
the  \emph{symmetric} $N$-fold tensor product of $\mathfrak{h}$. Elements of $\mathfrak{h}$ are spinors $\begin{pmatrix} u_\uparrow \\ u_\downarrow\end{pmatrix}$ with $u_\uparrow,u_\downarrow\in L^2(\mathbb{R}^3)$, equivalently, $u\cdot\begin{pmatrix} c_1 \\ c_2\end{pmatrix}$ with $u\in L^2(\mathbb{R}^3)$, $c_1,c_2\in\mathbb{C}$. With reference to the two actual hyperfine levels entering the experiment, we denote $|1,-1\rangle\equiv|\uparrow\rangle\equiv\begin{pmatrix} 1 \\ 0\end{pmatrix}$ and $|2,1\rangle\equiv|\downarrow\rangle\equiv\begin{pmatrix} 0 \\ 1\end{pmatrix}$.

Through a very ingenious confining and cooling procedure, $N\sim 10^5$ atoms are prepared inside an optical trap and brought to complete condensation onto the one-body state $\begin{pmatrix} u_0 \\ 0\end{pmatrix}$. The experimental evidence is that \emph{no noticeable non-condensed fraction remains}, thus implying that the \emph{c}onfined and \emph{c}ooled many-body state $\Psi_{N,\mathrm{cc}}\in\cH_N$ displays a 100\% macroscopic occupation of the orbital $\begin{pmatrix} u_0 \\ 0\end{pmatrix}$. Ideally this would mean that 
\begin{equation}\label{eq:initial0u}
\Psi_{N,\mathrm{cc}}\;\sim\;\begin{pmatrix} u_0 \\ 0\end{pmatrix}^{\!\!\otimes N}\qquad (\,\|u_0\|_2\;=\;1\,)
\end{equation}
holds as an identity in $\cH_N$, or at least in a thermodynamic limit $N\to \infty$. 
However, as customary in the mathematical formalisation of complete BEC \cite{LSeSY-ober,am_equivalentBEC}, it is more appropriate to infer the meaning of occupation numbers from the eigenvalues of the one-body reduced density matrix associated with the many-body state -- for otherwise even the negligible change of one single one-body orbital out of $N$ in the state $\begin{pmatrix} u_0 \\ 0\end{pmatrix}^{\!\!\!\!\otimes N}$ would result in a new state that would be essentially orthogonal to the original one.

Let us recall that, associated to each $\Psi_{N}\in\cH_N$, or more generally to each many-body density matrix $\gamma_N$ on $\cH_N$,  is the so-called one-body marginal (or one-body reduced density matrix)
\begin{equation}\label{eq:partial_trace1}
\gamma_N^{(1)}\;=\;\mathrm{Tr}_{N-1}\,\gamma_N\,,
\end{equation}
where the map $\mathrm{Tr}_{N-1}:\mathcal{B}_1(\cH_N)\to\mathcal{B}_1(\mathfrak{h})$ is the \emph{partial trace} from trace class operators acting on $\cH_N$ to trace class operators acting on $\mathfrak{h}$. $\mathrm{Tr}_{N-1}\,\gamma_N$ is defined, by duality, by
\begin{equation}\label{eq:def_partial_trace_without_basis}
 \mathrm{Tr}_\mathfrak{h}(A\cdot\mathrm{Tr}_{N-1}\,\gamma_N)\;=\;\mathrm{Tr}_{\cH}(A\otimes\mathbbm{1}_{N-1})\cdot \gamma_N))\qquad\forall A\in\mathcal{B}(\mathfrak{h})
\end{equation}
(here $\mathcal{B}$ denotes the bounded linear operators on $\mathfrak{h}$ and $\mathcal{B}_1$ the corresponding trace class).
In terms of an arbitrary orthonormal basis $(\Xi_k)_k$ of $\cH_{N-1,\mathrm{sym}}$ one then has
\begin{equation}\label{eq:def_partial_trace_with_basis}
\langle \varphi,(\mathrm{Tr}_{N-1}\,\gamma_N)\psi\rangle_\mathfrak{h}\;=\;\sum_{k}\langle\varphi\otimes\Xi_k,\gamma_N\,\psi\otimes\Xi_k\rangle_{\cH_{N-1,\mathrm{sym}}}\qquad\forall\varphi,\psi\in\mathfrak{h}\,,
\end{equation}
and the l.h.s.~of \eqref{eq:def_partial_trace_with_basis} is independent of the choice of the basis.
Thus, $\gamma_N^{(1)}$ is obtained by ``tracing out'' $N-1$ degrees of freedom from $\gamma_N$.
As a non-negative, bounded, and self-adjoint operator on $\mathfrak{h}$,  $\gamma_N^{(1)}$ has a complete set of real non-negative eigenvalues that sum up to 1, that is, there is an orthonormal basis $(\varphi_j^{(N)})_{j=0}^\infty$ of $\mathfrak{h}$ consisting of eigenvectors of $\gamma_N^{(1)}$ with eigenvalues $(n_j^{(N)})_{j=0}^\infty$\, so that 
\begin{equation}
\begin{array}{c}
\gamma_N^{(1)}\;=\;\sum_{j=0}^\infty\,n_j^{(N)}\,|\varphi_j^{(N)}\rangle\langle\varphi_j^{(N)}|\,, \\
1\geqslant n_0^{(N)}\geqslant n_1^{(N)}\geqslant\cdots\geqslant 0\,,\qquad\sum_{j=0}^\infty n_j^{(N)}=1\,.
\end{array}
\end{equation}
Thanks to the bosonic symmetry, each such eigenvalue has the natural interpretation of \emph{occupation number}: indeed, since the one-body observable (given by the global symmetrization of) $\mathcal{O}_j\equiv|\varphi_j^{(N)}\rangle\langle\varphi_j^{(N)}|\otimes\mathbbm{1}_{N-1}$ has expectation $n_j^{(N)}$ in the many-body state $\gamma_N$, as follows from (\ref{eq:def_partial_trace_without_basis}), then $n_j^{(N)}$ expresses \emph{in the sense of the reduced density matrix}, the fraction of particles of the many-body state $\gamma_N$ which occupy the one-body state $\varphi_j^{(N)}$. Complete BEC of the many-body state $\Psi_N$ onto the one-body orbital $\varphi_0\in\mathfrak{h}$ is by definition the occurrence $n_0^{(N)}\sim 1$, $\varphi_0^{(N)}\sim\varphi_0$ and $\gamma_N^{(1)}\sim |\varphi_0\rangle\langle\varphi_0|$, where an underlying thermodynamic limit $N\to\infty$ is tacitly assumed. This is so in the ideal case of a completely factorised $\Psi_N=\varphi_0^{\otimes N}$, however $\gamma_N^{(1)}\sim |\varphi_0\rangle\langle\varphi_0|$ is a much weaker statement, since an amount of correlations that are not negligible in the many-body norm may be present in $\Psi_N$.

Thus, the many-body state $\Psi_{N,\mathrm{cc}}$ after the initial confinement and cooling is prepared as in \eqref{eq:initial0u} in the sense of the reduced density matrix, that is,
\begin{equation}
\gamma_{N,\mathrm{cc}}^{(1)}\;\approx\;\Big|\!\begin{pmatrix} u_0 \\ 0\end{pmatrix}\!\Big\rangle\Big\langle\!\begin{pmatrix} u_0 \\ 0\end{pmatrix}\!\Big|\;=\;\begin{pmatrix} |u_0\rangle\langle u_0|\;\; & \mathbb{O} \\ \mathbb{O}\;\; & \mathbb{O}\end{pmatrix}.
\end{equation}
Here $u_0$ is the minimiser of a suitable energy functional, which corresponds to the fact that $\Psi_{N,\mathrm{cc}}$ is a ground state, in the sector of `all spin up particles', of the (effective) many-body Hamiltonian
\begin{equation}\label{eq:HNprep}
\sum_{j=1}^N\big(-{\textstyle\frac{\:\hbar^2}{\,2m\,}}\,\Delta_j+U^\mathrm{trap}(x_j)\big)\;+\!\sum_{1\leqslant j<k\leqslant N}V(x_j-x_k)\,,
\end{equation}
where $V$ is the potential for the two-body interaction that depends only on the particle spatial configuration, and
\begin{equation}
U^\mathrm{trap}(x)\;=\;\;\begin{pmatrix} U^{\mathrm{trap}}_\uparrow(x) & 0 \\ 0 & U^{\mathrm{trap}}_\downarrow(x)\end{pmatrix}
\end{equation}
models the (typically harmonic) external trapping potential. Since $\Psi_{N,\mathrm{cc}}$ consists (essentially) only of `spin up' particles, it is only subject to the confining potential $U^{\mathrm{trap}}_\uparrow$. In fact, in the experiment $U^{\mathrm{trap}}_\uparrow(x)\approx U^{\mathrm{trap}}_\downarrow(x)$ to within $\sim$0.3\%.

The next step of the experiment is a `two-photon transition', consisting of a very quick ($\sim400\mu\mathrm{s}$) pulse of an external oscillating radiofrequency field that couples with the spin of the particles and is tuned close to the hyperfine splitting energy $2V_{\mathrm{hf}}$ between the two levels, so as to connect the $|1,-1\rangle$ state to the $|2,1\rangle$, with an action on each spinor which in the `rotating wave approximation' is generated by
\begin{equation}\label{eq:2ph}
 \ii\hbar\partial_t\begin{pmatrix} \psi_1 \\ \psi_2\end{pmatrix}\;=\;\begin{pmatrix} -V_{\mathrm{hf}} & \hbar\Omega \,e^{\ii\hbar\omega t} \\ \hbar\Omega \,e^{-\ii\hbar\omega t} & V_{\mathrm{hf}}\end{pmatrix}
 \begin{pmatrix} \psi_1 \\ \psi_2\end{pmatrix},\quad V_{\mathrm{hf}}\;=\;{\textstyle\frac{1}{2}}\hbar\omega\,,\qquad \begin{pmatrix} \psi_1(0) \\ \psi_2(0)\end{pmatrix}\;=\;\begin{pmatrix} u_0 \\ 0\end{pmatrix},
\end{equation}
with $\Omega\sim 2\pi\cdot 625\,\mathrm{Hz}$ and $\omega\sim 2\pi\cdot 6.8\,\mathrm{GHz}$. (Thus, $\omega\gg\Omega$, as appropriate for the rotating wave approximation.) The evolution governed by \eqref{eq:2ph} involves only the spin degrees of freedom, based on the fact that the duration of the applied pulse is much shorter than the characteristic time for the spatial wave function of each spinor to change appreciably. That this is actually so as a consequence of first principles must of course be demonstrated. With the transformation
\begin{equation}
\begin{pmatrix} \widetilde{\psi}_1 \\ \widetilde{\psi}_2\end{pmatrix}\;:=\;
\begin{pmatrix} e^{-\frac{1}{2}\ii\hbar\omega t} & 0 \\ 0 & e^{\frac{1}{2}\ii\hbar\omega t}\end{pmatrix}
\begin{pmatrix} \psi_1 \\ \psi_2\end{pmatrix}
\end{equation}
\eqref{eq:2ph} reads
\begin{equation}\label{eq:2ph2}
 \ii\hbar\partial_t\begin{pmatrix} \widetilde{\psi}_1 \\ \widetilde{\psi}_2\end{pmatrix}\;=\;\begin{pmatrix} 0 & \hbar\Omega\\ \hbar\Omega  & 0\end{pmatrix}
 \begin{pmatrix} \widetilde{\psi}_1 \\ \widetilde{\psi}_2\end{pmatrix}\,,\qquad \begin{pmatrix} \widetilde{\psi}_1(0) \\ \widetilde{\psi}_2(0)\end{pmatrix}\;=\;\begin{pmatrix} u_0 \\ 0\end{pmatrix},
\end{equation}
whence
\begin{equation}
\begin{pmatrix} e^{-\frac{1}{2}\ii\hbar\omega t}\psi_1(t) \\ e^{\frac{1}{2}\ii\hbar\omega t}\psi_2(t)\end{pmatrix}\;=\;
\begin{pmatrix} \widetilde{\psi}_1(t) \\ \widetilde{\psi}_2(t)\end{pmatrix}\;=\;
\begin{pmatrix} \cos\Omega t & -\ii\,\sin\Omega t \\ -\ii\,\sin\Omega t & \cos\Omega t\end{pmatrix}\begin{pmatrix} \psi_1(0) \\ \psi_2(0)\end{pmatrix}\;=\;\begin{pmatrix} u_0\,\cos\Omega t \\ -\ii\,u_0\,\sin\Omega t\end{pmatrix}.
\end{equation}

It is worth underlying that in the course of the two-photon transition the external field couples simultaneously with the spin of \emph{each} particle and yields then a many-body Schr\"{o}dinger equation that is the product of $N$ copies of \eqref{eq:2ph}). The particles in the (almost) factorised state $\Psi_{N,\mathrm{cc}}$ are therefore all rotated the same way and at the end of this phase, say, at time $t_0$, the many-body state is then transformed into a `\emph{r}otated' one as
\begin{equation}\label{eq:psicc-psir}
\Psi_{N,\mathrm{cc}}\;\longmapsto\;\Psi_{N,\mathrm{r}}\;\sim\;\begin{pmatrix} u_0\,e^{\frac{1}{2}\ii\hbar\omega t_0}\,\cos\Omega t_0 \\ -u_0\,\ii\,e^{-\frac{1}{2}\ii\hbar\omega t_0}\,\sin\Omega t_0\end{pmatrix}^{\!\!\!\otimes N}\,.
\end{equation}
$\Psi_{N,\mathrm{r}}$ still exhibits complete BEC, however on a one-body orbital with the same spatial wave-function and rotated spin.

Whereas the measurement process, that we shall describe later below and is a destructive procedure, may take place already at this stage for the state $\Psi_{N,\mathrm{r}}$, other steps may be also performed in the experiment, before the final measurement. One possible further step is to let $\Psi_{N,\mathrm{r}}$ relax in the trap until it reaches the ground state of the Hamiltonian \eqref{eq:HNprep} -- having been rotated, now $\Psi_{N,\mathrm{r}}$ is not an eigenstate any more for \eqref{eq:HNprep}. It is expected and observed (and need be proved also from first principles) that this relaxation does not alter substantially the almost factorised structure and produces a \emph{r}otated and \emph{r}elaxed many-body state
\begin{equation}\label{eq:psirr}
\Psi_{N,\mathrm{cc}}\;\longmapsto\;\Psi_{N,\mathrm{r}}\;\longmapsto\;\Psi_{N,\mathrm{rr}}\;\sim\;\begin{pmatrix} u_0' \\ v_0'\end{pmatrix}^{\!\!\!\otimes N},\qquad\|u_0'\|_2^2+\|v_0'\|_2^2=1\,.
\end{equation}
Because of the actual experimental values of the trapping and interaction potentials in \eqref{eq:HNprep}, typically $u_0'$ and $v_0'$ are essentially supported on almost disjoint regions of space -- a phenomenon customarily referred to as \emph{phase separation} -- which in particular makes them orthogonal: $u_0'\perp v_0'$.

Another possible further step in the experiment, right after the rotation or the relaxation, consists of switching off the confinement too ($U^{\mathrm{trap}}\equiv 0$) and letting the gas expand hydrodynamically, subject only to the mutual inter-particle interaction, for a period of some $\sim20\,\mathrm{ms}$. 
In the course of this expansion spatial correlations are developed between the $N$ initially factorised spinors: it is an experimental evidence, that one expects to demonstrate also from first principles, that for times that are much smaller then the time scale at which the condensate deteriorates completely the (almost) factorised structure of the many-body state is essentially preserved.  This part of the experiment then produces an \emph{e}xpanded condensate with many-body state 
\begin{equation}\label{eq:psie}
\Psi_{N,\mathrm{cc}}\;\longmapsto\;\Psi_{N,\mathrm{r}}\;(\textrm{or }\Psi_{N,\mathrm{r}})\;\longmapsto\;\Psi_{N,\mathrm{e}}\;\sim\;\begin{pmatrix} u \\ v \end{pmatrix}^{\!\!\!\otimes N},\qquad\|u\|_2^2+\|v\|_2^2=1\,.
\end{equation}

At the end of the rotation, or the relaxation, or the expansion, information is read out of the many-body state  ($\Psi_{N,\mathrm{r}}$, or $\Psi_{N,\mathrm{rr}}$, or $\Psi_{N,\mathrm{e}}$) with a procedure that is destructive, however the excellent reproducibility of the condensate $\Psi_{N,\mathrm{cc}}$ allows one to repeat the measurement for various different times.

This process can be effectively described in a \emph{larger} one-body Hilbert space than $\mathfrak{h}$ given in \eqref{eq:h-one-body}, for a \emph{third} hyperfine level  in the $5 P_{3/2}\;F=3$ manifold is allowed to be reached. One then considers the space
\begin{equation}
\mathfrak{h}'\;:=\;L^2(\mathbb{R}^3)\otimes\mathbb{C}^3\;\cong\;L^2(\mathbb{R}^3)\oplus L^2(\mathbb{R}^3)\oplus L^2(\mathbb{R}^3)
\end{equation}
where the previous two spinors $|\uparrow\rangle$ and $|\downarrow\rangle$ and the third one $|\mathfrak{m}\rangle$ used for the measurement are identified as 
\begin{equation}
|\uparrow\rangle\equiv\begin{pmatrix} 1 \\ 0 \\ 0\end{pmatrix}\,,\qquad |\downarrow\rangle\equiv\begin{pmatrix} 0 \\ 1 \\ 0\end{pmatrix}\,,\qquad |\mathfrak{m}\rangle\equiv\begin{pmatrix} 0 \\ 0 \\ 1\end{pmatrix}\,.
\end{equation}
At this effective level, the possible experimental manipulations in this process are:
\begin{itemize}
 \item  `pumping' action on $\mathfrak{h}'$ 
 (which is implemented by a short pulse of `repump' light), the effect of which is to produce the change 
\[
 \begin{pmatrix} u \\ v \\ 0\end{pmatrix}\;\stackrel{P}{\longmapsto}\;\begin{pmatrix} 0 \\ u+v \\ 0\end{pmatrix};
\]
 \item  `blowing' action on $\mathfrak{h}'$ 
 (which is implemented by a $\sim2\,\mathrm{ms}$, $\sim60\,\mu\mathrm{W}/\mathrm{cm}^2$ pulse of light that brings $|\downarrow\rangle\mapsto|\mathfrak{m}\rangle$ and has no effect on the $|\uparrow\rangle$ atoms, and `blows away' particles in the hyperfine level $|\mathfrak{m}\rangle$ to a region far from the imaging region, hence practically out of the system), the effect of which is to produce the change 
\[
 \begin{pmatrix} u \\ v \\ 0\end{pmatrix}\;\stackrel{B}{\longmapsto}\;\begin{pmatrix} u \\ 0 \\ 0\end{pmatrix};
\]
 \item `probing' action on $\mathfrak{h}'$ (which is implemented with a $\sigma^+$ circularly polarised probe beam at $\sim17$ MHz that brings $|\downarrow\rangle\mapsto|\mathfrak{m}\rangle$, while atoms in the $|\uparrow\rangle$ state are far (6.8 GHz) from resonance and invisible to the probe beam), the effect of which is to produce the change 
\[
 \begin{pmatrix} u \\ v \\ 0\end{pmatrix}\;\stackrel{C}{\longmapsto}\;\begin{pmatrix} u \\ 0 \\ v\end{pmatrix};
\]
%
%
%
 \item and the actual measurement process, that consists of imaging the shadow of the above-mentioned circularly polarised probe beam onto a charged-coupled device camera (CCD) and hence corresponds first to projecting each spinor orthogonally onto the level $|\mathfrak{m}\rangle$, and then to performing one-body position observations  (this is indeed the set of data that can read out from the CCD) tracing out the spin degrees of freedom: symbolically,
\[
 \begin{pmatrix} u \\ v \\ w\end{pmatrix}\;\stackrel{S}{\longmapsto}\;\begin{pmatrix} 0 \\ 0 \\ w\end{pmatrix}\;\stackrel{}{\longmapsto}\;|w\rangle\langle w|\,.
\]
%
%
\end{itemize}

This way, spatial measurements for the level $|\uparrow\rangle$ are done via the sequence
\begin{equation}
\begin{pmatrix} u \\ v \\ 0\end{pmatrix}\;\stackrel{B}{\longmapsto}\;\begin{pmatrix} u \\ 0 \\ 0\end{pmatrix}\;\stackrel{P}{\longmapsto}\;\begin{pmatrix} 0 \\ u \\ 0\end{pmatrix}\;\stackrel{C}{\longmapsto}\;\begin{pmatrix} 0 \\ 0 \\ u\end{pmatrix}\;\stackrel{S+CCD}{\longmapsto}\;\;|u\rangle\langle u|\,,
\end{equation}
and spatial measurements for the level $|\downarrow\rangle$ are done via a similar sequence that does not include the repump procedure, namely 
\begin{equation}
\begin{pmatrix} u \\ v \\ 0\end{pmatrix}\;\stackrel{C}{\longmapsto}\;\begin{pmatrix} u \\ 0 \\ v\end{pmatrix}
\;\stackrel{S+CCD}{\longmapsto}\;|v\rangle\langle v|\,.
\end{equation}

Since, as we remark once again, all the above-mentioned preparation and measurement procedures involve \emph{simultaneously each spinor} (i.e., it is experimentally impossible to act selectively on some spinors, no multi-body observable of that sort is available), only states of the form $\begin{pmatrix} u \\ v \end{pmatrix}^{\!\!\!\otimes N}$ are manipulated throughout and are accessed to, in the sense of one-body reduced density matrices. For example, no initial state of the form
\begin{equation}
\Big(\begin{pmatrix} w_1 \\ 0 \end{pmatrix}^{\!\!\!\otimes N_1}\!\!\otimes\begin{pmatrix} 0 \\ w_2 \end{pmatrix}^{\!\!\!\otimes N_2}\Big)_{\mathrm{sym}}\qquad (\|w_1\|_2=\|w_2\|_2=1,\,N_1+N_2=N)
\end{equation}
is preparable in this context. Let us then survey what one-body interpretation is possible for the $N$-body states of interest.

First, as discussed already, a generic (pure) state $\Psi_N\in\mathcal{H}_{N,\mathrm{sym}}$ which is accessible only by one-body observables is effectively described by the one-body marginal \eqref{eq:partial_trace1}, where $N-1$ one-body degrees of freedom have been traced out:
 $\gamma_N^{(1)}\;=\;\mathrm{Tr}_{N-1}\,|\Psi_N\rangle\langle\Psi_N|$. Thus, when we deal with pseudo-spinor condensates we shall be only concerned with the class of marginals of the form 
\begin{equation}
\gamma_N^{(1)}\;\approx\;\Big|\!\begin{pmatrix} u \\ v\end{pmatrix}\!\Big\rangle\Big\langle\!\begin{pmatrix} u \\ v\end{pmatrix}\!\Big|\;=\;\begin{pmatrix} |u\rangle\langle u| & |u\rangle\langle v| \\ |v\rangle\langle u| & |v\rangle\langle v|\end{pmatrix},\qquad \|u\|_2^2+\|v\|_2^2=1
\end{equation}
(asymptotically in $N$).

As $\gamma^{(1)}_N$ is a density matrix acting on the one-body Hilbert space $\mathfrak{h}=L^2(\mathbb{R}^3)\otimes\mathbb{C}^2$, natural observables to be evaluated in $\gamma_N^{(1)}$ are orbital-only (i.e., trivial on the spin sector $\mathbb{C}^2$) or spin-only (i.e., trivial on the orbital sector $L^2(\mathbb{R}^3)$). This is precisely what discussed above concerning the experiments: in the measurement process, the $P,B,C,S$ operations are spin-only, whereas the imaging onto the CCD camera is orbital-only. Thus, the expectation on $\gamma_N^{(1)}$ of the spin-only observable of `spin-up' particle is
\begin{equation}
\mathrm{Tr}_{L^2(\mathbb{R}^3)\otimes\mathbb{C}^2}\Big(\gamma_N^{(1)}\cdot\big(\mathbbm{1}\otimes|\uparrow\rangle\langle\uparrow|\big)\Big)\;=\;\mathrm{Tr}_{\mathbb{C}^2}\big(\gamma_{N,\mathrm{spin}}^{(1)}\,|\uparrow\rangle\langle\uparrow|\big)\;=\;\|u\|_2^2
\end{equation}
and the expectation on $\gamma_N^{(1)}$ of the spin-only observable of `spin-down' particle is
\begin{equation}
\mathrm{Tr}_{L^2(\mathbb{R}^3)\otimes\mathbb{C}^2}\Big(\gamma_N^{(1)}\cdot\big(\mathbbm{1}\otimes|\downarrow\rangle\langle\downarrow|\big)\Big)\;=\;\mathrm{Tr}_{\mathbb{C}^2}\big(\gamma_{N,\mathrm{spin}}^{(1)}\,|\downarrow\rangle\langle\downarrow|\big)\;=\;\|v\|_2^2\,,
\end{equation}
where
\begin{equation}
\gamma_{N,\mathrm{spin}}^{(1)}\;:=\;\mathrm{Tr}_{\mathrm{orb}}\gamma_{N}^{(1)}\;=\;\begin{pmatrix} \|u\|_2^2 & \langle v,u\rangle \\ \langle u,v\rangle & \|v\|_2^2\end{pmatrix}
\end{equation}
is the one-body spin-only reduced density matrix acting on $\mathbb{C}^2$ and the partial trace $\mathrm{Tr}_{\mathrm{orb}}$ traces out the orbital degrees of freedom.

Therefore, the many-body state (pseudo-spinor condensate)  $\begin{pmatrix} u \\ v \end{pmatrix}^{\!\!\!\otimes N}$  is to be interpreted as an assembly of $N$ identical bosons for each of which the probability of occupying the level $|\uparrow\rangle$ is $\|u\|_2^2$ and the probability of occupying the level $|\downarrow\rangle$ is $\|v\|_2^2$. By combining this with the above discussion on the actual experiments, and in the sense discussed so far, we are to think of the state $\begin{pmatrix} u \\ v \end{pmatrix}^{\!\!\!\otimes N}$ as a many-body state of identical spin-$\frac{1}{2}$ bosons, of which $N\cdot\|u\|_2^2$ are the fraction of particles in the level $|\uparrow\rangle$ and (normalised) spatial orbital $\|u\|_2^{-1}u$, and $N\cdot\|v\|_2^2$ are those in the level $|\downarrow\rangle$ and (normalised) spatial orbital $\|v\|_2^{-1}v$.

Joint spatial measurements for both the $|\uparrow\rangle$ and the $|\downarrow\rangle$ level can be performed too, through the sequence
\begin{equation}
\begin{pmatrix} u \\ v \\ 0\end{pmatrix}\;\stackrel{P}{\longmapsto}\;\begin{pmatrix} 0 \\ u+v \\ 0\end{pmatrix}\;\stackrel{C}{\longmapsto}\;\begin{pmatrix} 0 \\ 0 \\ u+v\end{pmatrix}\;\stackrel{S+CCD}{\longmapsto}\;\;|u+v\rangle\langle u+v|\,.
\end{equation}
This is a particularly informative measurement when the orbitals $u$ and $v$ are orthogonal, and hence $\|u+v\|_2^2=\|u\|_2^2+\|v\|_2^2=1$, as happens when each spinor relaxes with spatial separation: in this case $|u(x)|^2$ and $|v(x)|^2$ are the spatial density of each spinorial component, whereas $|u(x)|^2+|v(x)|^2$ gives the combined profile of the two.

It is worth stressing the deliberately `weak' formulation of the preceding interpretation -- which is, at any rate, all what can be said concerning the class of preparable states of interest. If one was to measure the probability that the state $\gamma_N^{(1)}$ is precisely of the form $\begin{pmatrix} u \\ 0 \end{pmatrix}$ or of the form $\begin{pmatrix} 0 \\ v \end{pmatrix}$, this would be given by the numbers
\[
\left\langle \begin{pmatrix} u \\ 0 \end{pmatrix},\,\gamma_N^{(1)}\begin{pmatrix} u \\ 0 \end{pmatrix}\right\rangle_{\!\!L^2(\mathbb{R}^3)\otimes\mathbb{C}^2}=\;\|u\|_2^4\,,\qquad \left\langle \begin{pmatrix} 0 \\ v \end{pmatrix},\,\gamma_N^{(1)}\begin{pmatrix} 0 \\ v \end{pmatrix}\right\rangle_{\!\!L^2(\mathbb{R}^3)\otimes\mathbb{C}^2}=\;\|v\|_2^4\,.
\]

With this analysis in mind, we are now ready to state the mathematical problem for the many-body Schr\"{o}dinger evolution of a state initially prepared with a one-body reduced marginal
\begin{equation}
\gamma_N^{(1)}\;\approx\;\Big|\!\begin{pmatrix} u_0 \\ v_0\end{pmatrix}\!\Big\rangle\Big\langle\!\begin{pmatrix} u_0 \\ v_0\end{pmatrix}\!\Big|\;=\;\begin{pmatrix} |u\rangle\langle u_0| & |u_0\rangle\langle v_0| \\ |v_0\rangle\langle u_0| & |v_0\rangle\langle v_0|\end{pmatrix},\qquad \|u_0\|_2^2+\|v_0\|_2^2=1
\end{equation}
and to formulate our main results. This will be the object of the next Section.

\section{Setting of the problem and main result}\label{sect:main}

We consider a system of $N$ identical spin-$\frac{1}{2}$ bosons in three dimensions. The Hilbert space for this system is
\begin{equation}\label{eq:HspaceHN}
\cH_N\;:=\;\left(L^2(\mathbb{R}^3)\otimes\mathbb{C}^2\right)^{\otimes_{\mathrm{sym}} N}\,,
\end{equation}
as already defined in \eqref{eq:h-one-body}-\eqref{eq:H-many-body}. The system be governed by a Hamiltonian $H$ consisting of a potential part, made of two-body spatial interaction potentials, plus the sum of $N$ one-body Hamiltonians containing a kinetic part, an external spatial trapping potential, and an interaction between the spin of each particle and an external magnetic field. Thus, in self-explanatory notation, and in suitable units,
\begin{equation}\label{eq:unscaled_H}
H\;=\;\sum_{j=1}^N\Big(-\Delta_{x_j}+U^{\mathrm{trap}}(x_j)+\mathbf{B}(x_j,t)\cdot\mathbf{\sigma}_j\Big)+\sum_{j<k}^NV(x_j-x_k)\,.
\end{equation}
Clearly, the part $\sum_{j=1}^N\big(-\Delta_{x_j}\big)+\sum_{j<k}^NV(x_j-x_k)$ only acts non-trivially on the spatial degrees of freedom of an element in $\cH_N$. The external potential be matrix-valued and with the form
\begin{equation}
U^{\mathrm{trap}}(x)\;:=\;\begin{pmatrix} U^{\mathrm{trap}}_\uparrow(x) & 0 \\ 0 & U^{\mathrm{trap}}_\downarrow(x)\end{pmatrix},
\end{equation}
so as to possibly act in a different manner on the spatial parts of each spinor, however inducing no spin flipping. For the $j$-th particle, $\mathbf{\sigma}_j$ denotes the vector $\mathbf{\sigma}=(\sigma^x,\sigma^y,\sigma^z)$ of the Pauli matrices 
\[
\sigma^x=\begin{pmatrix} 0 & 1 \\ 1 & 0\end{pmatrix}, \quad \sigma^y=\begin{pmatrix} 0 & -\ii \\ \ii & 0\end{pmatrix}, \quad \sigma^z=\begin{pmatrix} 1 & 0 \\ 0 & -1\end{pmatrix}
\]
relative to the $j$-th spin degree of freedom, thus acting as the identity on all other spin degrees of freedom. The external magnetic field be the real-valued vector field
\begin{equation}
\mathbf{B}(x,t)\;:=\;(B_1(x,t),B_2(x,t),-V_{\mathrm{hf}}(x,t))
\end{equation}
for suitable functions depending on space and time. The notation is chosen consistently with the experiments -- see \eqref{eq:2ph} above -- where $V_{\mathrm{hf}}(x)\equiv V_{\mathrm{hf}}$ is a uniform field inducing the splitting between the two hyperfine levels, and  $B_1(x,t)-\ii B_2(x,t)\equiv \Omega e^{\ii\omega t}$ is the so-called Rabi field for the spin flipping.  No confusion should occur between the hyper-fine coupling potential $V_{\mathrm{hf}}$ and the pair interaction potential $V$.

The exact control of the dynamics generated by the Hamiltonian $H$ at finite (large) $N$ is clearly out of reach, both analytically and numerically, therefore rigorous conclusions are rather sought in the thermodynamic limit $N\to\infty$. This is part of a long-standing major mathematical problem for the dynamics of a Bose gas \cite{S-2007,S-2008,Benedikter-Porta-Schlein-2015}, and in fact no rigorous control of the thermodynamic limit is known so far. What is mathematically doable and physically still meaningful is to mimic the actual thermodynamic limit with some caricature of it realised by scaling the Hamiltonian with $N$ in such a way to retain at any $N$ an amount of relevant physical features of the system \cite{am_GPlim}.

To this aim, we re-scale as customary the pair interaction potential $V$ in \eqref{eq:unscaled_H} through the so-called `Gross-Pitaevskii scaling' \cite[Chapter 5]{LSeSY-ober}, thus replacing $V$ with the function
\begin{equation}\label{eq:GPscaling}
V_N(x)\;:=\;N^2 V(Nx)\,.
\end{equation}
This produces a realistic model for a Bose gas that is very dilute (the effective range and the scattering length of $V_N$ scales as $N^{-1}$, thus much smaller than the mean inter-particle distance $N^{-1/3}$) and with a strong interaction ($\|V_N\|_{\infty}\sim N^2$). Moreover, by regarding $V_N$ as an approximate delta-distribution with a mean-field pre-factor, namely, $V_N(x)\sim N^{-1}(\int_V)\delta(x)$, one concludes that the contribution of the kinetic part of $H$ (made of $N$ terms) and of the potential part of $H$ ($\sim N^2$ terms) are both of order $\mathcal{O}(N)$, which makes the dynamical problem non-trivial also in the limit. Analogous energetic considerations can be made for the typical ground state energy of a dilute Bose gas, which for fairly general interactions is well-known to be asymptotically given by density $\times$ scattering length, and hence in this scaling is a $\mathcal{O}(N)\times\mathcal{O}(N^{-1})=\mathcal{O}(1)$ quantity.

We therefore re-write the Hamiltonian as the $N$-dependent operator
\begin{equation}\label{eq:scaled_H}
H_N\;:=\;\sum_{j=1}^N(-\Delta_{x_j}+S(x_j,t))+N^2\sum_{j<k}^NV(N(x_j-x_k))
\end{equation}
acting self-adjointly on $\cH_N$, having set for convenience
\begin{equation}\label{eq:matrixS}
S(x,t)\;:=\;\begin{pmatrix} U^{\mathrm{trap}}_\uparrow(x)-V_{\mathrm{hf}}(x,t)& & B_1(x,t)-\ii B_2(x,t) \\\\ B_1(x,t)+\ii B_2(x,t) && U^{\mathrm{trap}}_\downarrow(x)+V_{\mathrm{hf}}(x,t)\end{pmatrix}
\end{equation}
(observe that $S$ coincides formally with its adjoint),
and we consider the Cauchy problem for the associated (linear) Schr\"odinger equation
\begin{equation}\label{eq:Cauchy_problem}
\begin{cases}
\;\;\ii \de_t\Psi_N(t)\;=\;H_N\Psi_N(t)\\
\;\;\Psi_N(0)\;=\;\Psi_{N,0}
\end{cases}
\end{equation}
for a given initial datum $\Psi_{N,0}$. Since $H_N$ may depend on time, suitable conditions on the potential $S(x,t)$ will be assumed so as to ensure that the solution to \eqref{eq:Cauchy_problem} exists and is unique in the strong sense for any time.

Following the discussion of Section \ref{sec:intro}, we are concerned with the class of initial data of the form \eqref{eq:initial0u}, \eqref{eq:psicc-psir}, \eqref{eq:psirr}, or \eqref{eq:psie}, that is, $N$-body states whose associated one-body reduced density matrix $\gamma_{N,0}^{(1)}$, defined as in \eqref{eq:partial_trace1}-\eqref{eq:def_partial_trace_with_basis} above, are rank-one projections, more precisely
\begin{equation}\label{eq:initialgamma}
\lim_{N\to\infty}\gamma_{N,0}^{(1)}\;=\;\Big|\!\begin{pmatrix} u_0 \\ v_0\end{pmatrix}\!\Big\rangle\Big\langle\!\begin{pmatrix} u_0 \\ v_0\end{pmatrix}\!\Big|\,, 
\qquad \|u_0\|_2^2+\|v_0\|_2^2=1\,,
\end{equation}
for given one-body orbitals $u_0$ and $v_0$ in $L^2(\mathbb{R}^3)$. 
Even if a priori the limit in \eqref{eq:initialgamma} can be stated in several inequivalent operator topologies, from the trace norm to the weak operator topology, the bounds
\begin{equation}\label{eq:equivalent-BEC-control}
1-\langle\varphi,\gamma_N^{(1)}\varphi\rangle_{\!L^2(\mathbb{R}^3)\otimes\mathbb{C}^2}\;\leqslant\;\mathrm{Tr}_{\!L^2(\mathbb{R}^3)\otimes\mathbb{C}^2}\big|\,\gamma_{N}^{(1)}-|\varphi\rangle\langle\varphi|\,\big|\;\leqslant\;2\sqrt{1-\langle\varphi,\gamma_N^{(1)}\varphi\rangle_{\!L^2(\mathbb{R}^3)\otimes\mathbb{C}^2}}
\end{equation}
(see, e.g., \cite[Eq.~(1.8)]{M-Olg-2016_2mixtureMF})
show that such a convergence can be monitored equivalently in any of them.

While \eqref{eq:initialgamma} encodes as desired the assumption of complete occupation of the one-body spinor $\begin{pmatrix} u_0 \\ v_0\end{pmatrix}$, it does not select yet the appropriate energy scale for the initial datum, compatibly with the adopted scaling limit. Indeed,  \eqref{eq:initialgamma} also includes completely factorised (uncorrelated) many-body states $\begin{pmatrix} u_0 \\ v_0\end{pmatrix}^{\!\!\!\otimes N}\!\!\in \cH_N$, for which however the scaling \eqref{eq:GPscaling} yields anomalously large asymptotics for the expectation of powers $(H_N/N)^k$ ($k\in\mathbb{N}$) of the energy per particle operator. For example, one finds a linear-in-$N$ energy expectation $\langle H_N\rangle\sim N$ with a constant of proportionality macroscopically different that the expected one (due to the emergence in the limit of the first Born approximation $\frac{1}{8\pi}\int_{\mathrm{R}^3} V\ud x$ of the scattering length $a$ of $V$); analogously, one finds an anomalously large cubic-in-$N$ expectation $\langle H_N^2\rangle\sim N^3$ (it is the potential part in the Hamiltonian \eqref{eq:scaled_H} to give such a contribution).
These behaviours are not typical of the ground state of a Bose gas and are due to the lack of short-scale correlations in the factorised $N$-body state, the presence of which would instead compensate the singular short scale behaviour of $V_N$ as $N\to\infty$. In fact short-scale correlations are shown to form dynamically in a very short transient of time \cite{EMS-2008}: in full analogy to the one-component condensation \cite[Chapter 6]{LSeSY-ober}, it is rather to be expected, and we shall include that in the assumptions on the initial states, that in terms of the many-body energy per particle
\begin{equation}\label{eq:manybody_en_part}
\mathcal{E}_N[\Psi_N]\;:=\;\frac{1}{N}\langle\Psi_N,H_N\Psi_N\rangle\,,\qquad\Psi_N\in\cH_N\,,
\end{equation}
and of the \emph{two-component Gross-Pitaevskii energy functional}
\begin{equation}\label{eq:GPfunctional}
\begin{split}
\mathcal{E}^{\mathrm{GP}}[u,v]\;:=\;&	\int_{\mathbb{R}^3}\!\ud x\,\Big(|\nabla u|^2 +|\nabla v|^2 + 4\pi a\big( |u|^4+2 |u|^2|v|^2+ |v|^4\big)\Big)\\
&\quad +\int_{\mathbb{R}^3}\!\ud x\, \Big\langle\!\!\begin{pmatrix}
 u \\   v
\end{pmatrix}\!,\,S(t)\! \begin{pmatrix}
u\\ v
\end{pmatrix}\!\!\Big\rangle_{\!L^2(\mathbb{R}^3)\otimes\mathbb{C}^2}\qquad u,v\in L^2(\mathbb{R}^3)\,,
\end{split}
\end{equation}
where $a$ is the ($s$-wave) scattering length associated to the potential $V$,
the initial state at $t=0$ satisfy the asymptotics  \eqref{eq:initialgamma} \emph{and}
\begin{equation}\label{eq:asymptotic_energy}
\mathcal{E}_N[\Psi_{N,0}]\;\xrightarrow[]{\;\;\;\;N\to\infty\;\;}\;\mathcal{E}^{\mathrm{GP}}[u_0,v_0]\,.
\end{equation}
This is inspired by the intuition that the ground state energy of $H_N$ is indeed captured asymptotically by the minimum of the functional $\mathcal{E}^{\mathrm{GP}}$, which is a theorem for BEC in one component \cite{LSeSY-ober}, and by the intention to explore the many-body dynamics of initial states that are prepared close to the ground state. For the time being, \eqref{eq:initialgamma} and \eqref{eq:asymptotic_energy} form a working hypothesis that at a high level of confidence is expected to cover precisely the class of initial states of the experiments.

At later times $t>0$ the many-body evolution $\Psi_{N,t}$, i.e., the solution to \eqref{eq:Cauchy_problem} with initial datum $\Psi_{N,0}$, is observed in the experiments to preserve complete BEC, in the sense of one-body marginals, onto a time-dependent one-body spinor $\begin{pmatrix}
u_t\\ v_t
\end{pmatrix}$ whose behaviour is governed by the following system of coupled non-linear cubic Schr\"{o}dinger equations \cite{Ketterle_StamperKurn_SpinorBEC_LesHouches2001, Malomed2008_multicompBECtheory, Hall2008_multicompBEC_experiments,StamperKurn-Ueda_SpinorBose_Gases_2012, pita-stringa-2016}
\begin{equation}\label{eq:GPsystem_extended}
\begin{split}
i\de_t u_t\;&=\;(-\Delta +U^{\mathrm{trap}}_\uparrow)u_t+8\pi a(|u_t|^2+|v_t|^2)u_t-V_{\mathrm{hf}}\,u_t+(B_1-\ii B_2)v_t \\
i\de_t v_t\;&=\;(-\Delta +U^{\mathrm{trap}}_\downarrow) v_t+8\pi a(|u_t|^2+|v_t|^2)v_t+V_{\mathrm{hf}}\,v_t+(B_1+\ii B_2)u_t
\end{split}
\end{equation}
with initial data $u_{t=0}\equiv u_0$ and $v_{t=0}\equiv v_0$.
Here, again, $a$ is the ($s$-wave) scattering length associated to the non-scaled pair potential $V$. The explicit dependence of $U^{\mathrm{trap}}$ and of $\mathbf{B}\equiv(B_1,B_2,-V_{\mathrm{hf}})$ on space and time is omitted for short. We have already commented in Section \ref{sec:intro} that the experiment may well have the trapping potential switched off.
In terms of the matrix-valued potential $S(x,t)$ introduced in \eqref{eq:matrixS} and of the `one-body non-linear Hamiltonians'
\begin{equation}\label{eq:onebody-nonlin-Hamilt}
\begin{split}
h^{(u,v)}_{11}\;&:=\;-\Delta +U^{\mathrm{trap}}_\uparrow+8\pi a(|u_t|^2+|v_t|^2)-V_{\mathrm{hf}}\;=\;-\Delta+S_{11}+8\pi a(|u_t|^2+|v_t|^2) \\
h^{(u,v)}_{22}\;&:=\;-\Delta +U^{\mathrm{trap}}_\downarrow+8\pi a(|u_t|^2+|v_t|^2)+V_{\mathrm{hf}}\;=\;-\Delta+S_{22}+8\pi a(|u_t|^2+|v_t|^2)
\end{split}
\end{equation}
we re-write \eqref{eq:GPsystem_extended} in the compact form
\begin{equation} \label{eq:coupled}
\begin{split}
i\de_t u_t\;&=\;h^{(u,v)}_{11} u_t+ S_{12}v_t \\
i\de_t v_t\;&=\;h^{(u,v)}_{22} v_t+ S_{21}u_t\,.
\end{split}
\end{equation}

In order to prove this picture from first principles, and hence to provide a rigorous derivation of the system \eqref{eq:coupled} as the effective non-linear evolution emerging from the many-body linear Schr\"{o}dinger dynamics \eqref{eq:Cauchy_problem}, one must establish at any time $t>0$ the convergence
\begin{equation}
\gamma^{(1)}_{N,t}\;\xrightarrow[]{\;\;\;\;N\to\infty\;\;}\;\Big|\!\begin{pmatrix} u_t \\ v_t\end{pmatrix}\!\Big\rangle\Big\langle\!\begin{pmatrix} u_t \\ v_t\end{pmatrix}\!\Big|
\end{equation}
of the one-body density matrix $\gamma^{(1)}_{N,t}$ associated with $\Psi_{N,t}$ onto the solution $(u_t,v_t)$ to \eqref{eq:coupled} with initial datum  $(u_0,v_0)$, thus closing the diagram
\begin{equation}\label{scheme_for_marginals}
\begin{CD}
\Psi_N @>\scriptsize\textrm{partial trace}>>\gamma_N^{(1)} @>N\to\infty>> \Big| \!\!\begin{pmatrix} u_0 \\ v_0\end{pmatrix}\!\!\Big\rangle\Big\langle\!\! \begin{pmatrix} u_0 \\ v_0\end{pmatrix}\!\!\Big| \\
@ V\scriptsize\begin{array}{c} \textrm{many-body} \\ \textrm{\textbf{linear} dynamics}  \end{array} VV @V  VV               @VV\scriptsize\begin{array}{c}\textrm{\textbf{non-linear}} \\ \textrm{Schr\"{o}dinger eq.} \end{array}V    \\
\Psi_{\! N,t} @>\scriptsize\textrm{\qquad\qquad\qquad\;}>>\gamma_{\! N,t}^{(1)} @>N\to\infty>> \Big| \!\!\begin{pmatrix} u_t \\ v_t\end{pmatrix}\!\!\Big\rangle\Big\langle\!\! \begin{pmatrix} u_t \\ v_t\end{pmatrix}\!\!\Big|\,.
\end{CD}
\end{equation}

To this aim, we impose the following set of assumptions:
\begin{itemize}
 \item[(A1)] The matrix potential $S\equiv(S_{jk})_{j,k\in\{1,2\}}$ be given with
\begin{equation*}
S_{ij}\in C^1 ( \mathbb{R}_t , L^\infty_x (\mathbb{R}^3))\cap W^{1,\infty}( \mathbb{R}_t , L^\infty_x (\mathbb{R}^3))
\end{equation*}
and $S=S^*$.
 \item[(A2)] The real-valued interaction potential $V$ be given such that $V\in L^\infty(\mathbb{R}^3)$, $V$ has compact support, and for almost every $x\in\mathbb{R}^3$ $V$ is spherically symmetric and $V\geqslant 0$. Let $a$ denote the $s$-wave scattering length associated to $V$ (see, e.g., \cite[Appendix C]{LSeSY-ober}). Correspondingly, let $V_N$ the re-scaled potential \eqref{eq:GPscaling} associated to $V$.
 \item[(A3)] Associated to the potentials fixed in (A1)-(A2), let $H_N$ be the many-body Hamiltonian \eqref{eq:scaled_H} acting at each time $t$ on the $N$-body Hilbert space $\cH_N$ fixed in \eqref{eq:HspaceHN}, let $\mathcal{E}_N$ be the the many-body energy-per-particle functional \eqref{eq:manybody_en_part}, and let $\mathcal{E}^{\mathrm{GP}}$ be the two-component Gross-Pitaevskii energy functional \eqref{eq:GPfunctional}.
 \item[(A4)] Two functions $u_0,v_0\in H^2(\mathbb{R}^3)$ be given with $\|u_0\|^2+\|v_0\|^2=1$ and such that the Cauchy problem associated to the non-linear system \eqref{eq:coupled} with initial datum $(u_0,v_0)$ admits a unique solution $(u,v)$ with $u\equiv u_t(x)$, $v\equiv v_t(x)$, and 
	\begin{equation}\label{eq:well_posedness}
	\begin{pmatrix} u \\ v\end{pmatrix}\in C\big(\mathbb{R}_t,H^2_x(\mathbb{R}^3)\otimes\mathbb{C}^2\big)
\,.
	\end{equation}
 \item[(A5)]  Associated to the spinor $\begin{pmatrix} u_0 \\ v_0\end{pmatrix}$ fixed in (A4), a sequence $(\Psi_{N,0})_{N\in\mathbb{N}}$ of initial $N$-body states be given with $\Psi_N\in\cH_N$ and $\|\Psi_N\|=1$, such that the corresponding sequence $(\gamma_{N,0}^{(1)})_{N\in\mathbb{N}}$ of one-body reduced density matrices satisfies the BEC asymptotics \eqref{eq:initialgamma} in the quantitative form
 	\begin{equation} \label{eq:hypconvergence}
	\mathrm{Tr}\,\Big|\gamma_{N,0}^{(1)}-\Big| \!\!\begin{pmatrix} u_0 \\ v_0\end{pmatrix}\!\!\Big\rangle\Big\langle\!\! \begin{pmatrix} u_0 \\ v_0\end{pmatrix}\!\!\Big|\;\leqslant\;\frac{\;\mathrm{const.}}{N^{\eta_1}}
	\end{equation}
\emph{and} the energy asymptotics \eqref{eq:asymptotic_energy} in the quantitative form
\begin{equation}\label{eq:hypconvergence_energy}
 \big|\,\mathcal{E}_N[\Psi_{N,0}]\to \mathcal{E}^{\mathrm{GP}}[u_0,v_0]\,\big|\;\leqslant\;\frac{\;\mathrm{const.}}{N^{\eta_2}}
\end{equation}
for some constants $\eta_1,\eta_2>0$.
\end{itemize}

Some remarks are in order. First, we have already argued that assumption (A5) is expected to select the class of initial states relevant in the experiments. We underline, in particular, that assumption (A1) includes precisely the experimental potentials $V_{\mathrm{hf}}(x)\equiv V_{\mathrm{hf}}$ and $S_{12}(x,t)=B_1(x,t)-\ii B_2(x,t)\equiv \Omega e^{\ii\omega t}$ for suitable constants $V_{\mathrm{hf}},\Omega,\omega\geqslant 0$. Also the repulsive inter-particle pair interaction is consistent with what observed in the experiments.

As a second important remark, we observe that both the dynamical evolutions we deal with in our assumptions, namely the linear many-body Schr\"{o}dinger dynamics and the non-linear Gross-Pitaevskii dynamics, are well posed. Concerning the former, it can be deduced from (A3) by means of standard arguments (see, e.g., \cite{Aiba-Yajima-2013,Jochen-Griesemer-2014} for a recent discussion) that $H_N$ has a time-independent (dense) domain $\mathcal{D}_N\subset \mathcal{H}_N$ of self-adjointness and there exists a unique unitary propagator for \eqref{eq:Cauchy_problem} on $\cH_N$, that is, a  family $\{U_N(t,s)\,|\,t,s\in\mathbb{R}\}$ of unitaries on $\cH_N$, strongly continuous on $\cH_N$ with respect to $(t,s)$, satisfying $U_N(t,s)U_N(s,r)=U_N(t,r)$ and $U_N(t,t)=\mathbbm{1}$ for any $t,s,r\in\mathbb{R}$, and with the additional properties that, equipping $\mathcal{D}_N$ with the graph norm of $H_N|_{t=0}$, each $U_N(t,s)$ is bounded on $\mathcal{D}$, and for each $\Phi_N\in\mathcal{D}_N$ the function $U_N(t,s)\Phi_N$ is continuous in $\mathcal{D}_N$ with respect to $(t,s)$, it is of class $C^1$ in $\cH_N$, and 
\begin{equation}
\ii\partial_t U_N(t,s)\Phi_N\;=\;H_N U_N(t,s)\Phi_N\,,\qquad \ii\partial_s U_N(t,s)\Phi_N\;=\;- U_N(t,s) H_N \Phi_N\,.
\end{equation}
The non-linear Cauchy problem associated to \eqref{eq:coupled} is well-posed too (in fact, it is defocusing and energy sub-critical), which is seen by exploiting an amount of standard analysis that can be found in the closely related works \cite{Jungel-Weishaupl2013_2compNLS_blowup,Antonelli-Weishaupl-2013,Bunoiu-Precup-2016} and which we do not aim at develop explicitly here. 

Under the above assumptions we are able to prove our main result here below. It is a result of persistence in time of pseudo-spinorial BEC and of rigorous derivation of the non-linear effective dynamics. It is formulated as follows.

\begin{theorem}\label{theorem:main} Consider, for each $N\in\mathbb{N}$, $N\geqslant 2$, the sequence of systems consisting of $N$ spin-$\frac{1}{2}$ identical bosons in three dimensions, subject to the Hamiltonian $H_N$ and initialised at time $t=0$ in the state $\Psi_{N,0}$ of complete BEC onto the one-body spinor $\begin{pmatrix} u_0 \\ v_0\end{pmatrix}$, according to the assumptions (A1)-(A5) above. For each $t>0$ let $\Psi_{N,t}$ be the solution to the many-body Schr\"{o}dinger equation \eqref{eq:Cauchy_problem} with initial datum $\Psi_{N,0}$, let $\gamma_{N,t}^{(1)}$ be the associated one-body reduced density matrix, and let $(u_t,v_t)$ be the solution to the non-linear Gross-Pitaevskii system \eqref{eq:coupled} with initial datum $(u_0,v_0)$. Then, at any $t$,
\begin{equation}\label{eq:thesis}
\lim_{N\to\infty}\gamma_{N,t}^{(1)}\;=\;\Big|\!\begin{pmatrix} u_t \\ v_t\end{pmatrix}\!\Big\rangle\Big\langle\!\begin{pmatrix} u_t \\ v_t\end{pmatrix}\!\Big|
\end{equation}
in trace norm, and 
\begin{equation}\label{eq:thm_thesis}
\lim_{N\to\infty}\mathcal{E}_N[\Psi_{N,t}]\;=\;\mathcal{E}^{\mathrm{GP}}[u_t,v_t]\,.
\end{equation}
\end{theorem}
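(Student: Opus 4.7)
The plan is to adapt Pickl's projection-counting method to the spinor setting. On the one-body space $\mathfrak{h}=L^2(\mathbb{R}^3)\otimes\mathbb{C}^2$, introduce the spinor $\Phi_t:=\begin{pmatrix} u_t\\ v_t\end{pmatrix}$ solving \eqref{eq:coupled}, the rank-one projector $p_t=|\Phi_t\rangle\langle\Phi_t|$ and its complement $q_t=\mathbbm{1}-p_t$. For $k=1,\dots,N$ let $p_t^{(k)}$, $q_t^{(k)}$ denote their ampliations to $\cH_N$ acting on the $k$-th tensor factor, and let $\widehat{n}_t=\frac{1}{N}\sum_{k=1}^N q_t^{(k)}$. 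Following Pickl, choose a weight function $m:\{0,\dots,N\}\to[0,1]$ of the form $m(k)\sim\sqrt{k/N}$ for small $k$, form the bounded one-body operator $\widehat{m}_t:=m(\widehat{n}_t)$, and define the counting functional
\begin{equation*}
\alpha_N(t)\;:=\;\langle \Psi_{N,t},\widehat{m}_t\,\Psi_{N,t}\rangle\;+\;\big|\mathcal{E}_N[\Psi_{N,t}]-\mathcal{E}^{\mathrm{GP}}[u_t,v_t]\big|.
\end{equation*}
A standard computation using the bounds \eqref{eq:equivalent-BEC-control} shows that $\mathrm{Tr}\bigl|\gamma_{N,t}^{(1)}-|\Phi_t\rangle\langle\Phi_t|\bigr|$ is controlled by a power of $\alpha_N(t)$, so the theorem reduces to proving $\alpha_N(t)\to 0$ as $N\to\infty$ for every fixed $t$; assumption (A5) gives $\alpha_N(0)\leq C N^{-\min\{\eta_1,\eta_2\}}$.

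The core step is a Grönwall estimate for $\alpha_N(t)$. Computing $\dot\alpha_N$ using \eqref{eq:Cauchy_problem} for $\Psi_{N,t}$ and \eqref{eq:coupled} for $(u_t,v_t)$, all contributions coming from the one-body Hamiltonian $-\Delta_{x_j}+S(x_j,t)\cdot$ produce commutators $[\widehat{m}_t,p_t^{(j)}]$ that vanish identically because $\Phi_t$ satisfies the self-consistent spinor equation; the spin-flip entries $S_{12},S_{21}$ and the hyperfine splitting are handled uniformly by (A1) since $S$ is a bounded self-adjoint one-body matrix potential that acts the same way in both dynamics. What remains are the interaction-versus-nonlinearity terms of the schematic form
\begin{equation*}
N\langle \Psi_{N,t},\bigl[\widehat{m}_t,\,Z^{(1,2)}_t\bigr]\Psi_{N,t}\rangle,\qquad
Z^{(1,2)}_t\;:=\;V_N(x_1-x_2)-\tfrac{1}{N}\bigl(W_t(x_1)+W_t(x_2)\bigr),
\end{equation*}
where $W_t(x)=8\pi a\,(|u_t(x)|^2+|v_t(x)|^2)$ is the nonlinear mean-field. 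These are the only genuinely new objects to estimate.

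The main obstacle is the Gross--Pitaevskii scaling \eqref{eq:GPscaling}: $V_N\to (\int V)\,\delta$ weakly but the correct coupling is $8\pi a$, which differs because of short-scale two-body correlations. To close the argument I would introduce, as in Pickl--Knowles and subsequent GP works, a smeared intermediate potential $M_\beta(x)=N^{3\beta-1}M(N^\beta x)$ with $0<\beta<1$ chosen so that $\int M_\beta=8\pi a+o(1)$ and $M_\beta$ is much less singular than $V_N$. The replacement $V_N\rightsquigarrow W_t$ is performed in three stages: (i) $V_N\rightsquigarrow V_N-M_\beta$, absorbed using the zero-energy scattering solution $f_N$ of $(-\Delta+\tfrac12 V_N)f_N=0$ outside a ball, which converts the difference into a total divergence and produces energy-type errors; (ii) $M_\beta\rightsquigarrow \tfrac{8\pi a}{N}\sum_j |\Phi_t(x_j)|^2$, a mean-field-type step bounded by $\|q_t^{(1)}\Psi_{N,t}\|^2\cdot\|M_\beta\|_1$ times combinatorial factors $\sqrt{N}m(\widehat{n}_t)$-weights; (iii) control all remainders by the kinetic energy per particle, which is propagated using the energy asymptotics \eqref{eq:hypconvergence_energy}, a coercivity bound $\mathcal{E}^{\mathrm{GP}}[u,v]\geq \tfrac12\|\nabla u\|_2^2+\tfrac12\|\nabla v\|_2^2-C$, and the $H^2$-regularity of $(u_t,v_t)$ from (A4).

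Combining these with the commutator bounds $\|[\widehat{m}_t,q_t^{(1)}q_t^{(2)}]\|\leq C N^{-1/2}$ and $\|[\widehat{m}_t,p_t^{(1)}q_t^{(2)}]\|\leq C N^{-1/2}$ (properties of the weight $m$ that carry over verbatim from the scalar case because they concern only the combinatorics of $\{p_t^{(k)},q_t^{(k)}\}$), one arrives at a differential inequality
\begin{equation*}
\tfrac{\mathrm{d}}{\mathrm{d}t}\alpha_N(t)\;\leq\;C(t)\,\bigl(\alpha_N(t)+N^{-\eta}\bigr)
\end{equation*}
for some $\eta>0$ and a time-dependent constant $C(t)$ depending only on $\|u_t\|_{H^2},\|v_t\|_{H^2}$ and the $L^\infty_x$-norms of $S(\cdot,t)$. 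Grönwall's lemma then yields $\alpha_N(t)\to 0$, proving both \eqref{eq:thesis} (via the trace-norm control) and \eqref{eq:thm_thesis} (via the energy term included in $\alpha_N$).
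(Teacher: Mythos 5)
Your overall architecture is the right one: Pickl's counting projectors $p_t,q_t$, the smoothed weight $\widehat m_t$, the inclusion of the energy difference in the functional, and a Grönwall closure. The spinor adaptation (replacing the scalar orbital by the spinor $\Phi_t$, and using assumption (A1) to treat the matrix potential $S$) is also the correct observation, and your remark that the weight-operator combinatorics carry over verbatim is accurate.

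However, the functional you write down,
\begin{equation*}
\alpha_N(t)\;=\;\langle\Psi_{N,t},\widehat m_t\,\Psi_{N,t}\rangle+\bigl|\mathcal{E}_N[\Psi_{N,t}]-\mathcal{E}^{\mathrm{GP}}[u_t,v_t]\bigr|,
\end{equation*}
is precisely the quantity the paper denotes $\alpha_N^{<}(t)$, and the paper is explicit that a direct Grönwall estimate for this quantity \emph{cannot} be closed in the Gross--Pitaevskii scaling $\gamma=1$: differentiating it in time produces terms quadratic in $N$ of the form $N^2\langle\Psi_{N,t},\,\text{(singular)}\,\rangle$ that are not $O(\alpha_N^{<}+N^{-\eta})$. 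The essential ingredient you are missing is the further modification of the counting functional by the correction
\begin{equation*}
-\,N(N-1)\,\mathfrak{Re}\,\langle\Psi_{N,t},\,g_\beta(x_1-x_2)\,R_{(12),t}\,\Psi_{N,t}\rangle,\qquad g_\beta:=1-f_\beta,
\end{equation*}
where $f_\beta$ is the zero-energy scattering solution of $\bigl(-\Delta+\tfrac12(V_N-W_\beta)\bigr)f_\beta=0$ for a suitable shell potential $W_\beta$ (this is \emph{not} $f_N$ for $V_N$ alone; it is crucial that $f_\beta$ is exactly constant outside a radius $R_\beta\sim N^{-\beta}$, so that $g_\beta$ is compactly supported). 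Only by putting $g_\beta R_{(12),t}$ \emph{inside} the Grönwall functional does one get, upon differentiating, a commutator $[H_N,g_\beta(x_1-x_2)]$ whose $\Delta f_\beta$-term exactly cancels the $(V_N-W_\beta)f_\beta$ contribution via the scattering equation. Your ``three stages'' (i)--(iii) gesture at these ingredients but do not implement this cancellation: describing stage (i) as ``converting the difference into a total divergence'' does not by itself produce the exact identity needed, and without the modified functional the differential inequality you claim does not hold. The paper then has an extra structural step (its Lemma 5.1 and Proposition 5.2): it proves a Grönwall-type bound for the \emph{corrected} $\alpha_N$ in terms of the uncorrected $\alpha_N^{<}$, and separately shows $|\alpha_N-\alpha_N^{<}|\lesssim N^{-\eta}$, so the bound transfers back to $\alpha_N^{<}$. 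Your write-up collapses these three indicators into one, which is exactly the step that fails.

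Two smaller points: the cancellation of the one-body part is not quite that $[\widehat m_t,p_t^{(j)}]$ ``vanish identically''; rather, the kinetic and $S(x_j,t)$ terms in $H_N-\sum_j h^{\mathrm{GP}}_j$ cancel, and a residual term $\delta_N^{(a)}$ from the explicit time-dependence of $S$ survives and must be estimated (the paper does this using (A1) and Lemma 3.2). Also, the auxiliary potential $W_\beta$ is not characterized by $\int W_\beta=8\pi a+o(1)$ but by the requirement that $V_N-W_\beta$ have vanishing scattering length, which is what forces $f_\beta\equiv\mathrm{const}$ beyond $R_\beta$; this distinction is load-bearing for the support estimates on $g_\beta$.
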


We shall present the proof of Theorem \ref{theorem:main} in Section \ref{sect:strategy}, after completing a number of preparatory steps in Sections \ref{sec:preliminaries} and \ref{sect:scattering}.

For the time being, let us complete the discussion of our result by highlighting a couple of relevant aspects.

For the technique we use in the proof, an adaptation of the `counting' projection method developed by Pickl \cite{kp-2009-cmp2010,Pickl-JSP-2010,Pickl-LMP-2011,Pickl-RMP-2015}, the precise rate of convergence of the limit \eqref{eq:thesis} remains somewhat implicit: it could be well tracked down through the many inequalities occurring in the proof, but it would turn out to be given by a surely non-optimal inverse power $N^{-\eta}$ for some small $\eta>0$ that depends on $u_0$, $v_0$, and on the potentials chosen in $H_N$. For this reason, even if \eqref{eq:thesis} is quantitative, we omit any reference to the rate of convergence in $N$. For a sharper and more explicit rate it would be of interest to adapt to spinors a different technique for the control of the leading one-body effective dynamics for bosons, which has been developed in the Gross-Pitaevskii scaling in the recent works \cite{B-DO-S-gp-2015,BS-2017}.

Furthermore, Theorem \ref{theorem:main} can be generalised to suitable modifications of the many-body Hamiltonian $H_N$, in particular to the case where particles are charged and hence coupled to the external magnetic field through a minimal coupling, which results in replacing the one-particle kinetic operator $-\Delta$ with the magnetic Laplacian $-\Delta_A:=-(\nabla+\ii A)^2$. For milder scalings in $V_N$ than the Gross-Pitaevskii one this would be a fairly easy application of Pickl's method, and in fact the magnetic Laplacian can be included too with the Gross-Pitaevskii scaling \eqref{eq:GPscaling}, but in this latter case a more careful analysis is needed: to this purpose an amount of previously missing details in the literature have been recently worked out by one of us in \cite{AO-GP_magnetic_lapl-2016volume}.

Last, we remark that when $\mathbf{B}=\mathbf{0}$ in \eqref{eq:unscaled_H} and hence $S_{12}=S_{21}=\mathbb{O}$ in \eqref{eq:matrixS}, no spin flipping is induced any longer by the Hamiltonian and the model becomes the same as that of a mixture of two Bose gas in interaction and no interconversion (i.e., no population change) among species. In this case, as our Theorem \ref{theorem:main} would also give, persistence of BEC in each component emerges as $N\to\infty$, governed by a non-linear system completely analogous to \eqref{eq:GPsystem_extended} but without the Rabi terms: this picture was already proved recently in \cite{M-Olg-2016_2mixtureMF} (and subsequently in \cite{Anap-Hott-Hundertmark-2017}) in the mean-field scaling, and in \cite{AO-GPmixture-2016volume} in the Gross-Pitaevskii scaling.

\section{Preparatory material for the proof}\label{sec:preliminaries}

We begin in this Section the preparation for the proof of our main result, Theorem \ref{theorem:main}, introducing the needed algebraic tools and an amount of technical estimates. This requires an adaptation to the spinor case of the `counting' projection method developed by Pickl  \cite{kp-2009-cmp2010,Pickl-JSP-2010,Pickl-LMP-2011,Pickl-RMP-2015}: this results in a number of straightforward generalisations, plus some non-trivial steps that need be performed in the spinor case and are absent in the scalar case.

We start introducing two key operators that are central in our analysis, namely the time-dependent rank-one orthogonal projection $p_t$ and its complement $q_t$ given by
\begin{equation}
p_t\;:=\;\Big| \!\!\begin{pmatrix} u_t \\ v_t\end{pmatrix}\!\!\Big\rangle\Big\langle\!\! \begin{pmatrix} u_t \\ v_t\end{pmatrix}\!\!\Big|\,,\qquad q_t\;:=\;\mathbbm{1}-p_t\,,
\end{equation}
where $\begin{pmatrix} u_t \\ v_t\end{pmatrix}\in L^2(\mathbb{R}^3)\otimes\mathbb{C}^2$ is the spinor of functions that solve the non-linear Gross-Pitaevskii system \eqref{eq:coupled} with initial datum $(u_0,v_0)$. To make the notation lighter, we shall omit the explicit reference to the time dependence of $p_t$ and $q_t$ on time,and simply write $p$ and $q$.

The operators $p$ and $q$ are naturally lifted onto $\cH_N=\left(L^2(\mathbb{R}^3)\otimes\mathbb{C}^2\right)^{\otimes_{\mathrm{sym}} N}$ in the form of the $2N$ operators
\begin{equation}\label{eq:def_pj_qj}
\begin{array}{l}
p_j:=\underbrace{\mathbbm{1}\otimes\dots\otimes\mathbbm{1}}_{j-1}\otimes \,p \otimes \underbrace{\mathbbm{1}\otimes\dots\otimes\mathbbm{1}}_{N-j} \\
q_j:=\underbrace{\mathbbm{1}\otimes\dots\otimes\mathbbm{1}}_{j-1}\otimes \,q \otimes \underbrace{\mathbbm{1}\otimes\dots\otimes\mathbbm{1}}_{N-j}
\end{array}\quad\qquad j\in\{1,\dots, N\}\,.
\end{equation}
Thus, $p_j$ acts on $\Psi_N\in \cH_N$ as
\[
(p_j\Psi_N)(x_1,\dots,x_N)\;=\;\begin{pmatrix}
u_t(x_j)\\v_t(x_j)
\end{pmatrix}\int_{\mathbb{R}^3} \ud y\:\Big\langle\!\!\begin{pmatrix}
\overline{u_t}(y)\\\overline{v_t}(y)
\end{pmatrix},\Psi_N(x_1,\dots,x_{j-1},y,x_{j+1},\dots,x_N)\Big\rangle_{\!\mathbb{C}_y^{2}}\,.
\]

Next, we introduce the orthogonal projections
\begin{equation}\label{eq:defPk}
P_k\;:=\sum_{a\in\{0,1\}^N \atop \sum_{i=1}^N a_i=k}\:\bigotimes_{i=1}^N\,p_i^{1-a_i}q_i^{a_i}\;=\;(q_1\otimes\cdots\otimes q_k\otimes p_{k+1}\otimes\cdots\otimes p_N)_{\textrm{`sym'}}\qquad k\in\{0,\dots,N\}\,,
\end{equation}
and then set $P_k=\mathbbm{O}$ if $k<0$ or $k>N$. In \eqref{eq:defPk} the symbol `sym' denotes the mere sum (without normalisation factor) of all possible permuted versions of the considered string of $N$ one-body projections.
The Hilbert subspace that $P_k$ projects onto is naturally interpreted as the space of $N$-body states with exactly $k$ particles `out of the condensate', in the sense of orthogonality with respect to the particle $\begin{pmatrix} u_t \\ v_t\end{pmatrix}$. It is also simple to check that 
\begin{equation}\label{eq:properties_of_Pk}
[P_k,P_\ell]\;=\;\delta_{k,\ell}P_k\,,\qquad\sum_{k=0}^N P_k\;=\;\mathbbm{1}\,.
\end{equation}

In the following, whenever no notational confusion arises, we shall omit the tensor product sign $\otimes$ and simply write, for instance, $p_1\cdots p_k q_{k+1}\cdots q_N$ in place of $p_1\otimes\cdots\otimes p_k\otimes q_{k+1}\otimes\cdots\otimes q_N$.

With the $P_k$'s at hand, and fixed a weight function $f:\mathbb{N}\rightarrow\mathbb{R}$, we form the operators
\begin{equation}
\widehat f\;:=\;\sum_{k=0}^N f(k)P_k
\end{equation}
and, for fixed $d\in\mathbbm{N}$, the `shifted' version
\begin{equation}\label{eq:shifted_fd}
\widehat{f_d}\;:=\;\sum_{k=-d}^{N-d} f(k+d)P_k\,.
\end{equation}
Some special choices of the weight $f$ will be useful in the proof. One is
\begin{equation}
n(k)\;:=\;\sqrt{\dfrac{k}{N}}\,,\label{eq:weightn}
\end{equation}
in terms of which one has
\begin{equation}\label{eq:weights}
\widehat{n}^{\,2}\;=\;\frac{1}{N}\sum_{k=0}^Nk\,P_k\;=\;\frac{1}{N}\sum_{k=0}^N\,\sum_{i=1}^N\,q_i\,P_k\;=\;\frac{1}{N}\sum_{i=1}^{N}q_i\,. 
\end{equation}
Thanks to \eqref{eq:weights} and to symmetry, one has 
\begin{equation}\label{eq:relation_q_n}
\langle\Psi_N,q_1\Psi_N\rangle\;=\;\langle\Psi_N,\widehat{n}^{\,2}\Psi_N\rangle\;\leqslant\; \langle\Psi_N,\widehat{n}\,\Psi_N\rangle\,,\qquad \Psi_N\in\cH_N\,.
\end{equation}
Another useful weight is going to be
\begin{equation}\label{eq:weightm}
m(k)\;:=\;\begin{cases}
\sqrt{k/N},\qquad \qquad \qquad \,\,\,\,\,\,\, k\ge N^{1-2\xi}\\\\

\frac{1}{2}(N^{-1+\xi}k+N^{-\xi}),\quad \quad\text{else},
\end{cases}
\end{equation}
for some $\xi>0$ to be chosen sufficiently small. The choice \eqref{eq:weightm} makes the function $\mathbb{R}\ni k\mapsto m(k)$ differentiable.

Let us collect some useful properties of the operators defined above.

\begin{lemma}\label{lemma:tools}
	Let $f,g:\mathbb{N}\rightarrow\mathbb{R}$ be given, together with an operator $A_{ij}$ on $\cH_N$ that acts non-trivially only on the $i$-th and $j$-th particle, for given $i,j\in\{1,\dots,N\}$. One has the following properties.
	\begin{itemize}
		\item[(i)] Commutativity:
		\begin{equation}\label{eq:commutativity_fhat_ghat}
		\widehat{f}\,\widehat{g}\;=\;\widehat{g}\,\widehat{f}\;=\;\widehat{f\,g}\,,
		\end{equation}
\begin{equation}\label{eq:commutativity_fhat_P}
[\widehat{f},p_\ell]\;=\;[\widehat{f},q_\ell]\;=\;[\widehat{f},P_k]\;=\;\mathbbm{O}\qquad\forall \ell\in\{1,\dots,N\}\,,\quad\forall k\in\{0,\dots,N\}\,.
\end{equation}

		\item[(ii)] Shift:
		\begin{equation}
		\widehat{f}\,Q_1\,A_{ij}\,Q_2\;=\;Q_1\,A_{ij}\,Q_2\,\widehat{f}_{z-s}\,, \label{eq:commutation}
		\end{equation}
		where $Q_1,Q_2\in\{p_ip_j,p_iq_j,q_ip_j,q_iq_j\}$, $z$ is the number of $q$'s inside $Q_1$ and $s$ is the number of $q$'s inside $Q_2$.
		\item[(iii)] For $m$ defined in \eqref{eq:weightm}, define
\begin{equation}\label{eq:defn_ma_mb}
\begin{split}
&\widehat{m}^a\;:=\;\widehat{m}-\widehat{m_1}\\
&\widehat{m}^b\;:=\;\widehat{m}-\widehat{m_2}\\
&\widehat{m}^c\;:=\;\widehat{m}-2\widehat{m_2}+\widehat{m_4}\\
&\widehat{m}^d\;:=\;\widehat{m}-\widehat{m_1}-\widehat{m_2}+\widehat{m_3}\\
&\widehat{m}^e\;:=\;\widehat{m}-2\widehat{m_1}+\widehat{m_2}
\end{split}
\end{equation}
		and
		\begin{equation} 
		R_{(ij)}\;:=\;p_ip_j\,\widehat{m}^b+(p_iq_j+q_ip_j)\widehat{m}^a. \label{eq:defnr}
		\end{equation}
		Then one has
		\begin{equation}\label{eq:commut_r}
		[A_{ij},\widehat{m}]\;=\;[A_{ij},R_{(ij)}]\,. 
		\end{equation}
		Moreover, if $K_{hr}$ is a bounded operator on $\cH_N$ acting non-trivially only on the $h$-th and $r$-th particle, with $h\notin\{i,j\}$ and $r\notin\{i,j\}$, one has (see also Remark \ref{rem:R} below)
		\begin{equation}\label{eq:commut_r2}
		\begin{split}
		[K_{hr},R_{(ij)}]\;=\;[K_{hr}\,,\,& p_ip_jp_hp_r\,\widehat{m}^c\\
		+&p_ip_j(p_hq_r+q_hp_r)\,\widehat{m}^d\\
		+&(p_iq_j+q_ip_j)p_hp_r\,\widehat{m}^d\\
		+&(p_iq_j+q_ip_j)(p_hq_r+q_hp_r)\,\widehat{m}^e].
		\end{split}
		\end{equation}
		\item[(iv)] Operator bounds: there exist constants $C,\widetilde{C}>0$ such that
		\begin{align}
			\|\widehat{m}^a\|_{\mathrm{op}}\;&\leqslant\; CN^{-1+\xi}   \label{eq:ma}\\
			\|\widehat{m}^b\|_{\mathrm{op}}\;&\leqslant\; CN^{-1+\xi}  \label{eq:mb}\\
			\|R_{(ij)}\|_{\mathrm{op}}\;&\leqslant\; CN^{-1+\xi} \label{eq:r}\\
			\|\widehat{m}^z\|_{\mathrm{op}}\;&\leqslant\; \widetilde{C}N^{-2+3\xi}\quad\forall z\in\{c,d,e\}   \label{eq:mcde},
		\end{align}
where $\|\;\|_{\mathrm{op}}$ denotes the operator norm.
	\end{itemize}
\end{lemma}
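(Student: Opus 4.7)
The plan is to dispatch (i) by direct algebra, (ii) by combinatorial bookkeeping, (iii) by a clean cancellation built on (ii), and (iv) by pointwise finite-difference estimates on the scalar weight $m$.

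For (i), since $P_k P_\ell = \delta_{k\ell} P_k$, the two equalities in \eqref{eq:commutativity_fhat_ghat} follow by expansion; the commutations in \eqref{eq:commutativity_fhat_P} are then immediate, as each $P_k$ is built from one-slot projectors that commute with every $p_\ell, q_\ell$ (using $[p_\ell,q_\ell]=0$ on the $\ell$-th slot and triviality on others). For (ii), I would fix the slots $i,j$ and expand $P_\ell = \sum_{Q} Q\, P^{(ij)}_{\ell - z(Q)}$, where $Q$ runs over the four one-body patterns $p_ip_j, p_iq_j, q_ip_j, q_iq_j$, $z(Q)$ counts the $q$'s in $Q$, and $P^{(ij)}_m$ is the projector onto states with $m$ q's among the remaining $N-2$ slots. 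Orthogonality $Q\,Q_1 = \delta_{Q,Q_1} Q_1$ collapses $P_\ell Q_1 A_{ij} Q_2$ to $Q_1 A_{ij} Q_2 \, P^{(ij)}_{\ell - z}$, and an analogous expansion on the right, combined with the change of variable $\ell \mapsto k+(z-s)$ in the sum defining $\widehat{f}$, yields \eqref{eq:commutation}.

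For (iii), I expect the crucial observation to be the identity
\begin{equation*}
\widehat{m} - R_{(ij)} \;=\; p_ip_j\,\widehat{m_2} + (p_iq_j + q_ip_j)\,\widehat{m_1} + q_iq_j\,\widehat{m} \;=\; \sum_k m(k+2)\,P^{(ij)}_k ,
\end{equation*}
obtained by inserting $\mathbbm{1} = p_ip_j + (p_iq_j+q_ip_j) + q_iq_j$ into $\widehat{m}$, unpacking \eqref{eq:defn_ma_mb}, and using the decomposition $P_k = p_ip_j P^{(ij)}_k + (p_iq_j+q_ip_j) P^{(ij)}_{k-1} + q_iq_j P^{(ij)}_{k-2}$ together with a re-indexing. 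Since $P^{(ij)}_k$ acts only on slots outside $\{i,j\}$, it commutes with $A_{ij}$ and \eqref{eq:commut_r} follows. For \eqref{eq:commut_r2}, the projectors $p_ip_j$ and $p_iq_j+q_ip_j$ already commute with $K_{hr}$, so the task reduces to computing $[K_{hr},\widehat{m}^a]$ and $[K_{hr},\widehat{m}^b]$; applying the same strategy with $K_{hr}$ playing the role of $A_{ij}$ on slots $h,r$ and recombining the resulting differences via the definitions \eqref{eq:defn_ma_mb} produces exactly the four-term operator displayed in \eqref{eq:commut_r2}. Keeping the shifts correctly matched throughout this reassembly is where I expect the main bookkeeping effort to lie.

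For (iv), all four bounds reduce to pointwise estimates on finite differences of the scalar $m$. Writing $\widehat{m}^a = \sum_k (m(k)-m(k+1)) P_k$ and similarly for $\widehat{m}^b$, orthogonality of the $P_k$'s gives $\|\widehat{m}^a\|_{\mathrm{op}} = \max_k |m(k)-m(k+1)| \leqslant \sup |m'|$, and analogously for $\widehat{m}^b$. A direct inspection of \eqref{eq:weightm} shows $\sup |m'| \leqslant \tfrac{1}{2} N^{-1+\xi}$, with both pieces of $m$ matching at the junction $k = N^{1-2\xi}$; this yields \eqref{eq:ma}, \eqref{eq:mb}, and by the triangle inequality \eqref{eq:r}. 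For \eqref{eq:mcde}, the second differences encoded in $\widehat{m}^c,\widehat{m}^d,\widehat{m}^e$ are controlled by $\sup |m''|$: since $m''\equiv 0$ on the linear piece and $|m''(k)|=\tfrac{1}{4} N^{-1/2} k^{-3/2}$ on the square-root piece, maximized at the junction, one obtains $\sup |m''| \leqslant \tfrac{1}{4} N^{-2+3\xi}$, as required.
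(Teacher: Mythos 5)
Your proposal is correct throughout, and in part (iii) your verification of \eqref{eq:commut_r} takes a route genuinely different from the paper's. The paper expands the difference of commutators $[A_{ij},\widehat{m}]-[A_{ij},R_{(ij)}]$ and checks that it vanishes by multiplying by $\mathbbm{1}=p_ip_j+p_iq_j+q_ip_j+q_iq_j$ from the left and working through each block using the shift property. You instead compute the operator $\widehat{m}-R_{(ij)}$ directly, showing via the slot decomposition $P_k=p_ip_jP^{(ij)}_k+(p_iq_j+q_ip_j)P^{(ij)}_{k-1}+q_iq_jP^{(ij)}_{k-2}$ that it equals $\sum_k m(k+2)P^{(ij)}_k$, which lives entirely on the slots outside $\{i,j\}$ and therefore commutes with $A_{ij}$ at a glance. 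Your identity is straightforward to verify and your conclusion is immediate; this is more conceptual, reveals that $R_{(ij)}$ is precisely the ``$(i,j)$-dependent part'' of $\widehat{m}$, and sidesteps the four-block case check. The price is that you need to introduce the auxiliary family $P^{(ij)}_k$ explicitly, which the paper avoids. Parts (i), (ii), and (iv) and your treatment of \eqref{eq:commut_r2} (commuting $p_ip_j$ and $p_iq_j+q_ip_j$ through $K_{hr}$, then applying the shifted analogue of \eqref{eq:commut_r} in the $(h,r)$ slots) match the paper's line of argument, and your finite-difference estimates in part (iv), including the observation that the two branches of $m$ have matching first derivative at the junction $k=N^{1-2\xi}$ and that $|m''|$ is maximised there, are the same as the paper's.
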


\begin{proof}
Part (i) is an immediate consequence of the mutual orthogonality of the $P_\ell$'s, and of the $p_\ell$'s with the $q_\ell$'s. To establish part (ii) one observes that
\[
P_\ell \,Q_1\,A_{ij}\,Q_2\;=\;Q_1\,A_{ij}\,Q_2\,P_{\ell+s-z}\qquad\forall\ell\in\{0,\dots,N\}\,,
\]
since all the $p_r$'s and the $q_r$'s with $r\notin\{i,j\}$ commute with $A_{ij}$, and the identity above in turn implies the thesis. For part (iii) we compute the difference (recall the notation \eqref{eq:shifted_fd} for the shifted operators)
\[
\begin{split}
[A_{ij},\widehat{m}]-[A_{ij},R_{(ij)}]\;&=\;[A_{ij},\widehat{m}]-[A_{ij},p_ip_j(\widehat{m}-\widehat{m_2})+(p_iq_j+q_ip_j)(\widehat{m}-\widehat{m_1})] \\
&=\;[A_{ij},q_i\,q_j\,\widehat{m}]+[A_{ij},p_i\,p_j\,\widehat{m_2}+(p_i\,q_j+q_i\,p_j)\widehat{m_1}]
\end{split}
\]
and we multiply by $\mathbbm{1}=p_ip_j+p_iq_j+q_ip_j+q_iq_j$ from the left: using the shift property \eqref{eq:commutation} and $p_iq_i=\mathbbm{O}$ we find
\[ 
\begin{split}
p_ip_j&\big( [A_{ij},\widehat{m}]-[A_{ij},R_{(ij)}]\big)\;= \\
&=\;p_ip_jA_{ij}q_iq_j\widehat{m}+p_ip_jA_{ij}p_ip_j\widehat{m}_2-p_ip_j\widehat{m}_2A_{ij}+p_ip_jA_{ij}(p_iq_j+q_ip_j)\widehat{m_1} \\
&=\;p_ip_j\widehat{m}_2A_{ij}q_iq_j+p_ip_j\widehat{m}_2A_{ij}p_ip_j-p_ip_j\widehat{m}_2A_{ij}+p_ip_j\widehat{m}_2A_{ij}(p_iq_j+q_ip_j) \\
&=\;\mathbbm{O}\,,
\end{split}
\]
\[ 
\begin{split}
q_iq_j&\big( [A_{ij},\widehat{m}]-[A_{ij},R_{(ij)}]\big)\;= \\
&=\; q_jA_{ij}q_iq_j\widehat{m}-q_iq_j\widehat{m}\,A_{ij}+q_iq_jA_{ij}p_ip_j\widehat{m_2}+q_iq_jA_{ij}(p_iq_j+q_ip_j)\widehat{m_1} \\
&=\;q_iq_j\widehat{m}A_{ij}q_iq_j-q_iq_j\widehat{m}\,A_{ij}+q_iq_j\widehat{m}A_{ij}p_ip_j+q_iq_j\widehat{m}\,A_{ij}(p_iq_j+q_ip_j)\\
&=\;\mathbbm{O}\,,
\end{split}
\]
and analogously $q_ip_j\big( [A_{ij},\widehat{m}]-[A_{ij},R_{(ij)}]\big)=\mathbbm{O}$. Thus, \eqref{eq:commut_r} follows. To prove \eqref{eq:commut_r2} we use \eqref{eq:defnr} and \eqref{eq:defn_ma_mb} and write
\begin{equation} \label{eq:second_commutator}
\begin{split}
[K_{hr},R_{(ij)}]\;=\;&p_ip_j[K_{hr} ,\,\widehat{m}^b]+(p_iq_j+q_ip_j)[K_{hr},\widehat{m}^a]\\
                    \;=\;&p_ip_j[K_{hr} ,\,\widehat{m}]-p_ip_j[K_{hr} ,\,\widehat{m_2}]+(p_iq_j+q_ip_j)[K_{hr},\widehat{m}]-(p_iq_j+q_ip_j)[K_{hr},\widehat{m_1}].
\end{split}
\end{equation}
Now, we observe that the analogue of \eqref{eq:commut_r} is valid too when $\widehat{m}$ is replaced by a shifted $\widehat{m_k}$, i.e.,
\[
[K_{hr} ,\,\widehat{m_k}]=[K_{hr} ,\,p_hp_r(\widehat{m_k}-\widehat{m_{k+2}})+(p_hq_r+q_hp_r)(\widehat{m_k}-\widehat{m_{k+1}})]
\]
(which is precisely \eqref{eq:commut_r} for $k=0$). This is proven exactly the same way as \eqref{eq:commut_r}. By applying last identity to \eqref{eq:second_commutator}, one gets
\[
\begin{split}
[K_{hr},R_{(ij)}]\;=&\;p_ip_j[K_{hr} ,\,\,p_hp_r(\widehat{m}-\widehat{m_{2}})+(p_hq_r+q_hp_r)(\widehat{m}-\widehat{m_{1}})]\\
&\; -p_ip_j[K_{hr} ,\,\,p_hp_r(\widehat{m_2}-\widehat{m_{4}})+(p_hq_r+q_hp_r)(\widehat{m_2}-\widehat{m_{3}})]\\
&\;+(p_iq_j+q_ip_j)[K_{hr},\,\,p_hp_r(\widehat{m}-\widehat{m_{2}})+(p_hq_r+q_hp_r)(\widehat{m}-\widehat{m_{1}})]\\
&\;-(p_iq_j+q_ip_j)[K_{hr},\,p_hp_r(\widehat{m_1}-\widehat{m_{3}})+(p_hq_r+q_hp_r)(\widehat{m_1}-\widehat{m_{2}})],
\end{split}
\]
which, upon rearrangement, is exactly \eqref{eq:commut_r2}.
As for part (iv), it follows from \eqref{eq:properties_of_Pk} that the $P_\ell$'s produce a direct orthogonal decomposition of $\cH_N$ and hence
\[
\|\widehat{f}\|_{\mathrm{op}}\;=\;\mathop{\text{sup}}_{k\in\{0,\dots, N\}}|f(k)|\,.
\]
As a consequence, treating the function $k\mapsto m(k)$ as continuously defined on $k\in\mathbb{R}$,
\[
\|\widehat{m}^a\|_{\mathrm{op}}\;=\;\sup_{k\in\{0,\dots, N\}}|m(k)-m(k+1)|\;\leqslant\;C\sup_{k\in\{0,\dots, N\}}\big|m'(k)\big|\;\leqslant\; C \,N^{-1+\xi}\,.
\]
A similar reasoning shows that the same bound holds for $\|\widehat{m}^b\|_{\mathrm{op}}$ and hence also for $\|R_{(ij)}\|_{\mathrm{op}}$. Last, we establish \eqref{eq:mcde} starting with $\|\widehat{m}^c\|_{\mathrm{op}}$. By the mean value theorem,
\[
\begin{split}
\|\widehat{m}^c\|_{\mathrm{op}}\;=&\;\sup_{k\in\{0,\dots, N\}}|m(k)-m(k+2)+m(k+4)-m(k+2)|\;=\;\sup_{k\in\{0,\dots, N\}}|-2m'(\beta_k)+2m'(\gamma_k)|\\
\;\leqslant&\;\;\;C\sup_{k\ge N^{1-2\xi}}|m''(\theta_k)|
\end{split}
\]
for some $\beta_k\in(k,k+2)$, $\gamma_k\in(k+2,k+4)$, and $\theta_k\in(k,k+4)$. Since
\[
m''(k)\;=\;\begin{cases}\quad 0\qquad\qquad\qquad\text { for }\,k\;<N^{1-2\xi}\\
\,\,\,-\dfrac{1}{4\sqrt{N\,k^3}}\,\quad\qquad\text { for }k\;>\;N^{1-2\xi},
\end{cases}
\]
$\|\widehat{m}^c\|_{\mathrm{op}}$ is globally bounded by $C\,(Nk^3)^{-1/2}$ when $k=N^{1-2\xi}$, whence \eqref{eq:mcde}. $\|\widehat{m}^d\|_\mathrm{op}$ and $\|\widehat{m}^e\|_\mathrm{op}$ are treated analogously.
\end{proof}

\begin{remark}\label{rem:R}
	The notation for the operator $R_{(ij)}$ is \emph{not} meant to indicate that the only non-trivial action is on the $i$-th and $j$-th variables; it simply indicates that $R_{(ij)}$ depends on $x_i$ and $x_j$ in a more complicated (non-symmetric) way than on all the other variables.
\end{remark}

When dealing with $N$-body wave functions with only partial bosonic symmetry, the following bounds become useful and replace the identity \eqref{eq:weights}.

\begin{lemma} \label{lemma:nq}
	Let $\Psi_N\in L^2(\mathbb{R}^3)^{\otimes N}$ be symmetric with respect to permutations of the $b$ variables $x_{i_1},\dots x_{i_b}$ for some integer $b$ with $2\leqslant b\leqslant N$, and let $f:\mathbb{N}\rightarrow\mathbb{R}$. Then, for any pair of indices $i,j\in\{i_1,\dots,i_b\}$ with $i\neq j$, one has
\begin{eqnarray}
\|\widehat{f}\:q_i\Psi_N\|^2\;&\leqslant\;&\frac{N}{b}\|\widehat{f}\;\widehat{n}\;\Psi_N\|^2\,. \label{eq:nq1} 
\end{eqnarray}
\end{lemma}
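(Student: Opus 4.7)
The plan is to exploit the partial bosonic symmetry of $\Psi_N$ together with the observation that $\widehat{f}$, being built from the fully symmetric spectral projections $P_k$, commutes with every permutation of the $N$ particle labels. This will allow me to replace $\|\widehat f q_i \Psi_N\|^2$ by an average over the symmetry-acting indices and then dominate it by $\|\widehat f \widehat n \Psi_N\|^2$.

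First I would note that, because $\widehat f=\sum_k f(k)P_k$ and each $P_k$ is invariant under all permutations of the $N$ variables, one has $U_\sigma \widehat f U_\sigma^{-1}=\widehat f$ for every permutation $\sigma$; moreover $U_\sigma q_i U_\sigma^{-1}=q_{\sigma(i)}$. Applied to a permutation $\sigma$ of $\{i_1,\dots,i_b\}$, under which by hypothesis $U_\sigma\Psi_N=\Psi_N$, this yields
\[
\|\widehat f\, q_{i_k}\Psi_N\|^2\;=\;\|\widehat f\, q_{i_\ell}\Psi_N\|^2\qquad\forall k,\ell\in\{1,\dots,b\}\,.
\]
Therefore, averaging,
\[
\|\widehat f\, q_i\Psi_N\|^2\;=\;\frac{1}{b}\sum_{k=1}^{b}\|\widehat f\, q_{i_k}\Psi_N\|^2\;=\;\frac{1}{b}\Big\langle\Psi_N,\,\Big(\sum_{k=1}^b q_{i_k}\Big)\widehat f^{\,2}\,\Psi_N\Big\rangle,
\]
where I used self-adjointness of $\widehat f$ (for real $f$), the commutation $[\widehat f,q_{i_k}]=0$ from Lemma \ref{lemma:tools}(i), and $q_{i_k}^{\,2}=q_{i_k}$.

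Next I would bound $\sum_{k=1}^b q_{i_k}\leqslant \sum_{j=1}^N q_j=N\widehat n^{\,2}$, using \eqref{eq:weights}. Since the $q_j$'s act on distinct tensor factors they commute pairwise, and both sums commute with $\widehat f^{\,2}$ by Lemma \ref{lemma:tools}(i); hence the operator $\widehat f^{\,2}\bigl(N\widehat n^{\,2}-\sum_{k=1}^b q_{i_k}\bigr)$ is a product of commuting non-negative self-adjoint operators and is therefore non-negative. Taking expectation on $\Psi_N$ gives
\[
\Big\langle\Psi_N,\Big(\sum_{k=1}^{b}q_{i_k}\Big)\widehat f^{\,2}\Psi_N\Big\rangle\;\leqslant\;N\,\langle\Psi_N,\widehat n^{\,2}\,\widehat f^{\,2}\,\Psi_N\rangle\;=\;N\,\|\widehat f\,\widehat n\,\Psi_N\|^2,
\]
and combining with the previous displayed identity yields the claimed bound \eqref{eq:nq1}.

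The only subtlety I foresee is justifying the operator inequality step carefully: one must use that $\widehat f^{\,2}$, $\sum q_{i_k}$, and $\widehat n^{\,2}$ pairwise commute, so that the product of non-negative operators involved is itself non-negative; this is precisely what Lemma \ref{lemma:tools}(i) and the commutation of the $q_j$'s on disjoint factors guarantee. The rest is just averaging via permutation symmetry and is essentially routine.
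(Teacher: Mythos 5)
Your proof is correct and uses essentially the same ingredients as the paper's: the identity $\widehat{n}^{\,2}=N^{-1}\sum_j q_j$, the commutation of $\widehat f$ with the $q_j$'s and $P_k$'s, dropping the non-negative terms outside $\{i_1,\dots,i_b\}$, and permutation symmetry to equalise the $b$ remaining terms. You simply run the chain of estimates in the opposite direction (starting from $\|\widehat f\,q_i\Psi_N\|^2$ and bounding above, whereas the paper starts from $\|\widehat f\,\widehat n\,\Psi_N\|^2$ and bounds below), which is a cosmetic difference.
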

\begin{proof}
	The bound  \eqref{eq:nq1} follows from
	\[
	\begin{split}
	\|\widehat{f}\;\widehat{n}\;\Psi_N\|^2\;&=\;\langle\Psi_N,\widehat{n}\;\widehat{f}^{\;2}\;\widehat{n}\;\Psi_N\rangle\;=\;\langle\Psi_N,\widehat{f}^{\;2}\;\widehat{n}^{\,2}\;\Psi_N\rangle\;=\;\frac{1}{N}\sum_{\ell=1}^N\langle\Psi_N,\widehat{f}^{\;2} q_\ell\,\Psi_N\rangle\\
	&\geqslant\;\frac{1}{N}\sum_{k\in\{i_1\dots i_b\}}\langle\Psi_N,\widehat{f}^{\;2}\,q_k\,\Psi_N\rangle\;=\;\frac{b}{N}\,\|\widehat{f}\:q_i\Psi_N\|^2\,,
	\end{split}
	\]
	where we used \eqref{eq:commutativity_fhat_ghat} in the second step, \eqref{eq:weights} in the third, \eqref{eq:commutativity_fhat_P} in the fourth and fifth. 
\end{proof}


Further bounds will turn out to be needed on the operator norm of multiplication operators `dressed' with one or two projections $p$. Such norms are estimated in terms of $\|u_t\|_\infty$ and $\|v_t\|_\infty$, which by assumption (A4) are uniformly bounded quantities in time.

\begin{lemma} \label{lemma:dressed}
	Let $h\in L^1(\mathbb{R}^3)$ and $g\in L^2(\mathbb{R}^3)$, and let $i,j\in\{1,\dots,N\}$. One has
	\begin{align}
		\|g(x_i-x_j)p_j\|_{\mathrm{op}}\;\leqslant\;C(t)\,\|g\|_2\label{eq:dressed2}\\
		\|p_ig(x_i-x_j)\|_{\mathrm{op}}\;\leqslant\;C(t)\,\|g\|_2\label{eq:dressed3}
	\end{align}
for some function $C(t)>0$ depending on $\|u_t\|_\infty$ and $\|v_t\|_\infty$\,, and not on $N$.
\end{lemma}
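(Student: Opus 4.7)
The plan is to reduce both bounds to a one-body $L^2$ estimate by exploiting the rank-one structure of the orthogonal projection $p$ onto the spinor $\binom{u_t}{v_t}$.

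For \eqref{eq:dressed2}, I would start by noting that for any $\Psi_N\in\cH_N$ the vector $p_j\Psi_N$ has a product structure in the $j$-th slot:
\[
(p_j\Psi_N)(x_1,\dots,x_N)\;=\;\begin{pmatrix}u_t(x_j)\\v_t(x_j)\end{pmatrix}\otimes\phi(x_1,\dots,\widehat{x_j},\dots,x_N),
\]
for some $\phi$ depending on the remaining $N-1$ space-spin variables. Since $\bigl\|\binom{u_t}{v_t}\bigr\|_{L^2\otimes\mathbb{C}^2}=1$ (the evolution \eqref{eq:coupled} preserves the $L^2\oplus L^2$ norm of the initial spinor, which equals $1$ by (A4)), one has $\|\phi\|=\|p_j\Psi_N\|\leqslant\|\Psi_N\|$. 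Then
\[
\|g(x_i-x_j)p_j\Psi_N\|^2\;=\;\int\!dx_1\cdots dx_N\,|g(x_i-x_j)|^2\bigl(|u_t(x_j)|^2+|v_t(x_j)|^2\bigr)|\phi|^2,
\]
and performing the $x_j$-integration first while estimating $|u_t(x_j)|^2+|v_t(x_j)|^2\leqslant\|u_t\|_\infty^2+\|v_t\|_\infty^2$ yields a factor bounded by $(\|u_t\|_\infty^2+\|v_t\|_\infty^2)\|g\|_2^2$, uniformly in $x_i$. The remaining integral equals $\|\phi\|^2\leqslant\|\Psi_N\|^2$, giving \eqref{eq:dressed2} with $C(t)=\sqrt{\|u_t\|_\infty^2+\|v_t\|_\infty^2}$.

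For \eqref{eq:dressed3}, I would just take the adjoint: $p_ig(x_i-x_j)=\bigl(\overline{g}(x_i-x_j)p_i\bigr)^*$, so by the argument above (with the roles of $i$ and $j$ exchanged, which is allowed because the factorization of $p_i\Psi_N$ is entirely symmetric in $i\leftrightarrow j$) the operator norm of the adjoint is bounded by $C(t)\|\overline{g}\|_2=C(t)\|g\|_2$. Operator norms being invariant under adjunction, \eqref{eq:dressed3} follows with the same $C(t)$.

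There is really no hard step here: the argument is an entirely routine one-body reduction once one uses that the rank-one projector $p_j$ collapses the $j$-th degree of freedom, converting an $N$-body norm estimate into a pointwise $L^2$ inequality in a single variable. The only ingredient that requires invoking earlier assumptions is the finiteness of $\|u_t\|_\infty$ and $\|v_t\|_\infty$, which follows from (A4) together with the three-dimensional Sobolev embedding $H^2(\mathbb{R}^3)\hookrightarrow L^\infty(\mathbb{R}^3)$; the continuity $t\mapsto\|u_t\|_\infty,\|v_t\|_\infty$ then guarantees that $C(t)$ is locally bounded in time.
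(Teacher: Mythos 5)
Your proof is correct and is essentially the same argument as the paper's: both reduce the bound to the pointwise estimate $|u_t(x_j)|^2+|v_t(x_j)|^2\leqslant\|u_t\|_\infty^2+\|v_t\|_\infty^2$ together with the $L^2$ integral of $g$ in the $x_j$-variable, and then obtain \eqref{eq:dressed3} by passing to the adjoint. The only cosmetic difference is that the paper packages this as the convolution identity $p_j h(x_i-x_j)p_j=p_j\bigl(h*|u_t|^2(x_i)+h*|v_t|^2(x_i)\bigr)$ followed by Young's inequality (applied with $h=g^2$), whereas you compute $\|g(x_i-x_j)p_j\Psi_N\|^2$ directly from the rank-one factorization of $p_j\Psi_N$; both routes give the same constant $C(t)=\sqrt{\|u_t\|_\infty^2+\|v_t\|_\infty^2}$.
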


\begin{proof}
	We observe that
\[
p_j h(x_i-x_j)p_j\;=\;p_j\big(h*|u_t|^2(x_i)+h*|v_t|^2(x_i)\big)\,.
\]
Using this fact and Young's inequality we get
	\[
		\|p_j h(x_i-x_j)p_j\|_{\mathrm{op}}\;\leqslant\;\|p_j\|_{\mathrm{op}}\big\|h*|u_t|^2+h*|v_t|^2\big\|_\infty\;\leqslant\; C\,\|h\|_1(\|u_t\|_\infty^2+\|v_t\|_\infty^2)\,.
	\]
Then \eqref{eq:dressed2} follows from the above inequality (for $h=g^2$) through
	\[
	\begin{split}
	\|g(x_i-x_j)p_j\|_{\mathrm{op}}^2\;&=\;\sup_{\substack{\Psi_N\in\cH_N \\ \|\Psi_N\|=1}}\langle\Psi_N,p_j\,g^2(x_i-x_j)p_j\Psi_N\rangle\;\leqslant\; \|p_jg^2(x_i-x_j)p_j\|_{\mathrm{op}} \\
&\leqslant\; \|g^2\|_1\,(\|u\|_\infty^2+\|v\|_\infty^2)\;\leqslant\; 2\,\|g\|_2^2\,(\|u\|_\infty+\|v\|_\infty)^2\,.
	\end{split}
	\]
	The bound \eqref{eq:dressed3} follows by taking the adjoint of \eqref{eq:dressed2}.
\end{proof}

In particular, the bounds of Lemma \ref{lemma:dressed} above, when applied to the re-scaled potential $V_N$, provide useful $N$-dependent estimates that we collect for convenience here below.

\begin{lemma}\label{lemma:potential} Let $\Psi_N\in\mathcal{D}(H_N)\subset\cH_N$ with $\|\Psi_N\|=1$ and $\mathcal{E}_N[\Psi_N]\leqslant \kappa \,N$ uniformly in $N$ for some $\kappa>0$, and consider the potential $V_N$ defined in \eqref{eq:GPscaling}. Then
\begin{eqnarray}
 \|V_N(x_1-x_2)\Psi_N\|\;&\leqslant&\; C \,N^{1/2}\label{eq:potential} \\
		\|p_1V_N(x_1-x_2)\Psi_N\|\;&\leqslant&\;  C \,N^{-1}\label{eq:dressedpotential}
\end{eqnarray}
for some constant $C>0$ that in \eqref{eq:potential} depends on $\kappa$, on $\|V\|_\infty$\,, and on $\|S\|_{L^\infty_t L^\infty_x}$\,, and in \eqref{eq:dressedpotential} depends additionally on $\mathrm{supp}(V)$ and on the (uniform in time) bound on $\|u_t\|_\infty$ and $\|v_t\|_\infty$\,.
\end{lemma}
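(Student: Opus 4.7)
The plan is to derive both bounds from a single intermediate estimate, namely
\[
\langle \Psi_N, V_N(x_1-x_2)\,\Psi_N\rangle \;\leqslant\; \frac{C}{N}\,,
\]
which follows from the energy hypothesis. Indeed, since $V\geqslant 0$, the kinetic terms $-\Delta_{x_j}$ are non-negative, and $S$ is uniformly bounded in $(x,t)$ by assumption (A1), one has the operator inequality $H_N \geqslant \sum_{j<k}V_N(x_j-x_k) - N\,\|S\|_{L^\infty_t L^\infty_x}$. Taking expectation in the symmetric state $\Psi_N$ and using bosonic symmetry yields
\[
\binom{N}{2}\langle\Psi_N,V_N(x_1-x_2)\Psi_N\rangle \;\leqslant\; N\,\mathcal{E}_N[\Psi_N] + N\,\|S\|_{L^\infty_t L^\infty_x}\,,
\]
whence the claim at the required order in $N$.

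For the bound \eqref{eq:potential} I would exploit the pointwise domination $V_N^2(x)\leqslant N^2\|V\|_\infty\,V_N(x)$, valid because $0\leqslant V\leqslant \|V\|_\infty$. This gives immediately
\[
\|V_N(x_1-x_2)\Psi_N\|^2 \;=\; \langle\Psi_N,V_N^2(x_1-x_2)\Psi_N\rangle \;\leqslant\; N^2\|V\|_\infty\,\langle\Psi_N,V_N(x_1-x_2)\Psi_N\rangle \;\leqslant\; C\,N\,,
\]
which upon taking square roots is \eqref{eq:potential}.

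For \eqref{eq:dressedpotential} I would write out $p_1 V_N(x_1-x_2)\Psi_N$ by means of the explicit action of the rank-one projection $p_1=|\Phi_t\rangle\langle\Phi_t|$ with $\Phi_t=\binom{u_t}{v_t}$. Using $\|\Phi_t\|_{L^2\otimes\mathbb{C}^2}=1$, the norm-squared reduces to
\[
\|p_1 V_N(x_1-x_2)\Psi_N\|^2 \;=\; \int\!\!\Big|\!\int\!\langle\Phi_t(y),V_N(y-x_2)\Psi_N(y,x_2,\ldots,x_N)\rangle_{\mathbb{C}^2}\,dy\Big|^2\,dx_2\cdots dx_N\,.
\]
Splitting $V_N=V_N^{1/2}\cdot V_N^{1/2}$ and applying Cauchy--Schwarz in $L^2(dy)$ dominates the inner squared modulus by $\bigl(V_N*|\Phi_t|^2\bigr)(x_2)\cdot\int V_N(y-x_2)|\Psi_N(y,x_2,\ldots,x_N)|^2 dy$. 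The first factor is bounded by $\|V_N\|_1\,\|\Phi_t\|_\infty^2\leqslant C\,N^{-1}$, where the decisive $N^{-1}$ comes from the compact support of $V$ and the scaling: $\|V_N\|_1 = N^{-1}\|V\|_1$. Integrating the remaining factor in $x_2,\ldots,x_N$ produces exactly $\langle\Psi_N,V_N(x_1-x_2)\Psi_N\rangle$, so that
\[
\|p_1 V_N(x_1-x_2)\Psi_N\|^2 \;\leqslant\; C\,N^{-1}\,\langle\Psi_N,V_N(x_1-x_2)\Psi_N\rangle \;\leqslant\; C\,N^{-2}\,.
\]

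The one step that is not immediate is obtaining the sharp $N^{-1}$ in \eqref{eq:dressedpotential}: a naive application of Lemma \ref{lemma:dressed} with $g=V_N$ only provides $\|p_1 V_N(x_1-x_2)\|_{\mathrm{op}}\lesssim \|V_N\|_2 \sim N^{1/2}$, losing a full factor $N^{3/2}$. The improvement requires combining two independent pieces of information -- the $L^1$-smallness of $V_N$ induced by its compact support (contributing one factor $N^{-1}$) and the energy-based control of $\langle V_N(x_1-x_2)\rangle$ (contributing another factor $N^{-1}$) -- each entering once into the squared norm, and it is essential that both ingredients be used simultaneously rather than in isolation.
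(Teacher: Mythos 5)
Your proposal is correct and follows essentially the same route as the paper: both bounds rest on the pointwise domination $V_N^2\leqslant N^2\|V\|_\infty V_N$ together with the energy-based control $\langle\Psi_N,V_N(x_1-x_2)\Psi_N\rangle\lesssim N^{-1}$, and \eqref{eq:dressedpotential} is obtained by exploiting the $\sim N^{-3}$ volume of $\mathrm{supp}(V_N)$ once more alongside that same expectation bound. The only cosmetic difference is that the paper inserts the factor $\mathbbm{1}_{\mathrm{supp}(V_N)}$ and invokes Lemma~\ref{lemma:dressed} to extract the $N^{-3/2}$ operator-norm factor, whereas you carry out the equivalent Cauchy--Schwarz/Young computation directly via the split $V_N=V_N^{1/2}\cdot V_N^{1/2}$; the ingredients and the resulting powers of $N$ are identical.
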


\begin{proof}
	To prove \eqref{eq:potential} we combine the estimate
	\[
	\begin{split}
	\|V_N(x_1-x_2)\Psi_N\|^2\;&=\;\|\sqrt{V_N(x_1-x_2)}\sqrt{V_N(x_1-x_2)}\Psi_N\|^2\\
	&\leqslant\; \|\sqrt{V_N(x_1-x_2)}\|_\infty^2\;\|\sqrt{V_N(x_1-x_2)}\;\Psi_N\|^2\\
	&\leqslant \|V\|_\infty \;N^2\,\langle\Psi_N,V_N(x_1-x_2)\Psi_N\rangle 
	\end{split}
	\]
with the estimate
\[
\begin{split}
 \mathcal{E}_N[\Psi_N]\;\geqslant\;-N\|S\|_{L^\infty_t L^\infty_x}+\sum_{i<j}^N\langle\Psi_N,V_N(x_i-x_j)\Psi_N\rangle\,,
\end{split}
\]
thus finding
\[
 \|V_N(x_1-x_2)\Psi_N\|^2\;\leqslant\;\|V\|_\infty \,\big(\mathcal{E}_N[\Psi_N]+N\|S\|_{L^\infty_t L^\infty_x}\big)\;\leqslant\;\|V\|_\infty \,(\kappa+\|S\|_{L^\infty_t L^\infty_x})\:N\,.
\]
To prove \eqref{eq:dressedpotential} we estimate
	\[
	\begin{split}
	\|p_1V_N(x_1-x_2)\Psi_N\|\;\leqslant\;\|p_1\mathbbm{1}_{\mathrm{supp}(V_N)}(x_1-x_2)\|_{\mathrm{op}}\;\|V_N(x_1-x_2)\Psi_N\|\,,
	\end{split}
	\]
where $\mathbbm{1}_{\mathrm{supp}(V_N)}$ is the characteristic function  of the support of $V_N$; the first factor in the r.h.s.~above is estimated, using \eqref{eq:dressed3}, as
\[
\|p_1\mathbbm{1}_{\mathrm{supp}(V_N)}(x_1-x_2)\|_{\mathrm{op}}\;\leqslant\;C\,\|\mathbbm{1}_{\mathrm{supp}(V_N)}\|_2\;\leqslant\;C\,N^{-3/2}
\]
for some constant $C$ depending on (the support of) the non-scaled potential $V$ and on the (uniform in time) bound on $\|u_t\|_\infty$ and $\|v_t\|_\infty$\,, whereas the second factor is estimated as $C\,N^{1/2}$ by \eqref{eq:potential}. The product of these two bounds yields \eqref{eq:dressedpotential}.
\end{proof}


\section{Zero-energy scattering problem and short scale structure}\label{sect:scattering}

As well known (see \cite{LSeSY-ober,EMS-2008,S-2007,S-2008,Benedikter-Porta-Schlein-2015} and the references therein), the understanding of both the ground state and the dynamics of a dilute Bose gas analysed in the Gross-Pitaevskii scaling limit is intimately related to the two-body scattering problem at zero energy. The latter determines the short scale structure of the typical many-body state under consideration, which is crucial to identify the correlation pattern at the leading order in the energy and in the evolution dynamics of the state.

In this Section we collect the main facts from the two-body scattering problem at zero energy needed for the specific technique that we make use of in the present work. To this aim we follow closely the recent works \cite{Pickl-RMP-2015,Jeblick-Leopold-Pickl-2DGP-2016}. 


We start by recalling (e.g., from \cite[Appendix C]{LSeSY-ober}) that given a potential $V$ satisfying our assumption (A2), the scattering length of $V$ is the quantity
\begin{equation}
 a\;:=\;\frac{1}{8\pi}\int_{\mathbb{R}^3} \ud x\, V(x) f(x)\,,
\end{equation}
where $f$ is the so-called zero-energy scattering solution, that is, the solution to the problem
\begin{equation}
\big(-\Delta+{\textstyle\frac{1}{2}}V\big)f\;=\;0\,,\qquad f(x)\xrightarrow[]{\;\;|x|\rightarrow\infty\;\;}1\,.
\end{equation}
By scaling, one sees that the scattering length $a_N$ and the zero-energy scattering solution $f_N$ relative to the re-scaled potential $V_N(x)=N^2 V(Nx)$ are given by
\begin{equation}\label{eq:aN}
a_N\;=\;\frac{a}{N}\,,\qquad\qquad f_N(x)\;=\;f(Nx)\,.
\end{equation}
In particular, $f_N$ has the peculiar structure at the spatial scale $|x|\sim N^{-1}$: in fact,
\begin{equation}\label{eq:fN_2}
f_N(x)\;\underset{|x|\to\infty}{\approx}\;1-\frac{a}{N|x|}\,,\qquad\textrm{and}\qquad 1-\frac{a}{N|x|}\;\leqslant\;f_N(x)\;\leqslant 1\quad\forall x\neq 0\,.
\end{equation}

Along the main proof it is going to be technically convenient to replace the actual potential $V_N$ with a surrogate repulsive potential with a milder scaling and an easier controllability, supported on a spherical shell surrounding, disjointly, the ball of the support of $V_N$. For suitable $\beta\in(0,1)$, let 
\begin{equation} \label{eq:def_W_beta}
W_\beta(x)\;:=\;\begin{cases}
\:4\pi \,a_N \,N^{3\beta} & N^{-\beta}<|x|<R_\beta\\
\;\;0 & \textrm{otherwise}\,.
\end{cases}
\end{equation}
Thus, by construction, for $N$ large enough one has
\begin{equation}
 \text{supp}(V_N)\cap\text{supp}(W_\beta)\;=\;\emptyset\,.
\end{equation}
The spatial scale $R_\beta$ is fixed by the scattering properties of the \emph{difference} potential $V_N-W_\beta$: denoting by $f_\beta$ the zero-energy scattering solution relative to such potential, namely the solution to the problem
\begin{equation}\label{eq:zesp-beta}
\big(-\Delta+{\textstyle\frac{1}{2}}(V_N-W_\beta\big)\big)f_\beta\;=\;0\,,\qquad f_\beta(x)\xrightarrow[]{\;\;|x|\rightarrow\infty\;\;}1\,,
\end{equation}
it can be easily argued that the internal, repulsive potential $V_N$ and the external-shell, attractive potential $-W_\beta$ conspire in \eqref{eq:zesp-beta} so as to yield for $V_N-W_\beta$ a \emph{smaller} scattering length as compared to that of $V_N$; then $R_\beta$ is set precisely to the minimum value above $N^{-\beta}$ which makes the scattering length of $V_N-W_\beta$ \emph{vanish} and hence makes $f_\beta(x)$  \emph{constant} for $|x|>R_\beta$. It can be also proved \cite[Lemma 5.1]{Pickl-RMP-2015} that
\begin{equation} \label{eq:R_beta}
R_\beta\;=\;\mathcal{O}(N^{-\beta})\qquad\textrm{as}\qquad N\to\infty
\end{equation}
and hence the spherical shell where $W_\beta$ is supported in is entirely of the order $N^{-\beta}$.

Crucial when replacing $V_N$ with $W_\beta$ is the function
\begin{equation}
g_\beta\;:=\;1-f_\beta. \label{eq:defng}
\end{equation}
whose relevant properties are collected in the following Lemma.

\begin{lemma}\label{prop:g}
In terms of $a_N$, $f_\beta$, and $g_\beta$ given, respectively, in \eqref{eq:aN}, \eqref{eq:zesp-beta}, and \eqref{eq:defng}, one has
\begin{equation}\label{eq:properties_fb_gb}
0\;\leqslant\;f_N(x)\;\leqslant\;f_\beta(x)\,,\qquad\textrm{whence also}\qquad |g_\beta(x)|\;\leqslant\;\frac{a_N}{|x|}\,\mathbbm{1}_{\{|x|\leqslant R_\beta\}}\qquad\forall x\neq 0\,,
\end{equation}
and
\begin{align}
			\|g_\beta\|_1\;&\leqslant\; C \,N^{-(1+2\beta)} \label{eq:g1}\\
			\|g_\beta\|_{3/2}\;&\leqslant\; C\, N^{-(1+\beta)}\label{eq:g32} \\
			\|g_\beta\|_2\;&\leqslant\; C \,N^{-(1+\frac{1}{2}\beta)}\label{eq:g2}
		\end{align}
for some constant $C>0$ depending on $V$ and $\beta$\,.
\end{lemma}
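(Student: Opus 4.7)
My strategy is to derive the three $L^p$ bounds from the pointwise inequality in \eqref{eq:properties_fb_gb}, and in turn to derive that inequality from a pointwise sandwich $0\leqslant f_N\leqslant f_\beta\leqslant 1$ combined with the explicit lower bound $f_N(x)\geqslant 1-a_N/|x|$ already recalled in \eqref{eq:fN_2}. Once the pointwise bound on $g_\beta$ is available, each $L^p$ statement reduces to a scalar integral on a ball of radius $R_\beta=\mathcal{O}(N^{-\beta})$: in polar coordinates $\int_{|x|\leqslant R_\beta}|x|^{-p}\,\ud x=\frac{4\pi}{3-p}R_\beta^{3-p}$ for $p\in\{1,\tfrac{3}{2},2\}$, which gives $\|g_\beta\|_p\leqslant C\,a_N\,R_\beta^{3/p-1}$. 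Substituting $a_N=a/N$ and $R_\beta=\mathcal{O}(N^{-\beta})$ yields exactly \eqref{eq:g1}, \eqref{eq:g32}, \eqref{eq:g2}.

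The core of the argument is therefore the comparison $f_N\leqslant f_\beta$. Both $f_N$ and $f_\beta$ are radial, non-negative, bounded by $1$, and tend to $1$ at infinity: non-negativity and the upper bound $f_\beta\leqslant 1$ follow from the strong maximum principle applied to the zero-energy scattering equations in large balls with boundary datum $1$, letting the radius go to infinity; the upper bound is further sharpened by the defining property of $R_\beta$, namely that $V_N-W_\beta$ has zero scattering length, which forces $f_\beta\equiv 1$ on $\{|x|>R_\beta\}$ (so $g_\beta$ is automatically supported in $B_{R_\beta}(0)$). The difference $h:=f_\beta-f_N$ solves, by subtraction of the two scattering equations,
\begin{equation*}
\bigl(-\Delta+\tfrac{1}{2}V_N\bigr)h\;=\;\tfrac{1}{2}\,W_\beta\,f_\beta\;\geqslant\;0,\qquad h(x)\xrightarrow[|x|\to\infty]{}0.
\end{equation*}
Since $V_N\geqslant 0$, the operator $-\Delta+\tfrac{1}{2}V_N$ obeys a weak maximum principle on $\mathbb{R}^3$ for functions vanishing at infinity, whence $h\geqslant 0$, i.e.\ $f_N\leqslant f_\beta$.

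Putting the pieces together yields \eqref{eq:properties_fb_gb}. For $|x|>R_\beta$ one has $g_\beta(x)=0$, as noted. For $a_N\leqslant|x|\leqslant R_\beta$, combining $f_N\leqslant f_\beta\leqslant 1$ with \eqref{eq:fN_2} gives $0\leqslant g_\beta(x)=1-f_\beta(x)\leqslant 1-f_N(x)\leqslant a_N/|x|$. For $|x|<a_N$ the bound is trivial since $g_\beta\leqslant 1\leqslant a_N/|x|$. This establishes the pointwise inequality, and the $L^p$ estimates are then obtained by the elementary integration sketched above. The step I expect to be most delicate is the maximum-principle comparison for $-\Delta+\tfrac{1}{2}V_N$ on all of $\mathbb{R}^3$ with decay at infinity, which, although standard, must be handled with care because $f_\beta$ itself must first be known to be non-negative in order for the right-hand side $\tfrac{1}{2}W_\beta f_\beta$ to have a definite sign.
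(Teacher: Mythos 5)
Your overall architecture matches the paper's: derive the three $L^p$ bounds by integrating the pointwise estimate $|g_\beta(x)|\leqslant \frac{a_N}{|x|}\mathbbm{1}_{\{|x|\leqslant R_\beta\}}$, with $a_N=a/N$ and $R_\beta=\mathcal{O}(N^{-\beta})$ plugged in. Your scalar integrals are correct, and whether one integrates directly for $p=\tfrac32$ (as you do) or interpolates between $L^1$ and $L^2$ (as the paper does) is immaterial. The difference is that the paper simply \emph{cites} \cite[Lemma 5.1]{Pickl-RMP-2015} for the pointwise sandwich $0\leqslant f_N\leqslant f_\beta\leqslant 1$ (hence the support and decay of $g_\beta$), whereas you attempt to re-derive it.

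Your re-derivation contains a genuine gap. You write that both $f_\beta\geqslant 0$ and $f_\beta\leqslant 1$ ``follow from the strong maximum principle applied to the zero-energy scattering equations in large balls.'' But the operator governing $f_\beta$ is $-\Delta+\tfrac12(V_N-W_\beta)$, and the zeroth-order coefficient $\tfrac12(V_N-W_\beta)$ is \emph{not} sign-definite: on the shell $N^{-\beta}<|x|<R_\beta$ one has $V_N=0$ and $W_\beta>0$, so the coefficient is strictly negative there. The classical maximum principle for $-\Delta+c$ requires $c\geqslant 0$; with $c<0$ on a set of positive measure it can fail outright (think of a Dirichlet eigenfunction). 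In fact, positivity of $f_\beta$ holds here only because the specific scaling of $W_\beta$ (namely $\sqrt{W_\beta}\cdot(R_\beta-N^{-\beta})\sim N^{(\beta-1)/2}\to 0$ for $\beta<1$) forces the radial solution $u(r)=r\,f_\beta(r)$, which obeys $u''=-\tfrac12 W_\beta u$ on the shell, not to oscillate through zero before reaching $r=N^{-\beta}$. Establishing this requires the quantitative radial ODE analysis carried out in \cite[Lemma 5.1]{Pickl-RMP-2015}, not a soft maximum principle. You correctly flag this as the ``most delicate step,'' but the explanation you give for it is wrong rather than merely delicate. Once $f_\beta\geqslant 0$ is \emph{granted}, your comparison $f_N\leqslant f_\beta$ via $h:=f_\beta-f_N$ solving $(-\Delta+\tfrac12 V_N)h=\tfrac12 W_\beta f_\beta\geqslant 0$ is correct, since there the relevant potential $V_N$ is genuinely non-negative. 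So the fix is straightforward: either cite \cite[Lemma 5.1]{Pickl-RMP-2015} for $0\leqslant f_\beta\leqslant 1$ and the support property (as the paper does), or replace the maximum-principle claim by the radial ODE argument.
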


\begin{proof} The inequalities \eqref{eq:properties_fb_gb} are taken from \cite[Lemma 5.1]{Pickl-RMP-2015}. As for the estimates \eqref{eq:g1}-\eqref{eq:g2}, they follow from \eqref{eq:properties_fb_gb}, for
	%
	\[
	\|g_\beta\|_1\;\leqslant\; a_N\int_0^{R_\beta}\frac{1}{|x|}\,\ud x\;\leqslant\; C\, N^{-1-2\beta}
	\]
and
	\[
	\|g_\beta\|_2^2\;\leqslant\; a^2_N\int_0^{R_\beta}\frac{1}{\,|x|^2}\,\ud x\;\leqslant\;C\, N^{-2-\beta}\,,
	\]
plus $L^1$-$L^2$ interpolation for the $L^{3/2}$-estimate; the constant $C>0$ depends on $V$ and $\beta$.
\end{proof}

\section{Proof of Theorem \ref{theorem:main}}\label{sect:strategy}

We come now to the actual proof of Theorem \ref{theorem:main}. As customary in this scheme, what exactly is going to be controlled is the quantity
\begin{equation} \label{eq:alpha_tilde}
\begin{split}
 \widetilde{\alpha}_N(t)\;&:=\;1-\Big\langle\!\!\begin{pmatrix} u_t \\ v_t\end{pmatrix}\!,\gamma_{N,t}^{(1)}\begin{pmatrix} u_t \\ v_t\end{pmatrix}\!\!\Big\rangle_{\!\!L^2(\mathbb{R}^3\otimes\mathbb{C}^2)} \\
&=\;\frac{1}{N}\sum_{j=1}^N\big\langle\Psi_{N,t},(q_t)_{\!j}\Psi_{N,t}\big\rangle_{\!\cH_N}\;=\;\big\langle\Psi_{N,t},\widehat{n}_t^{\,2}\,\Psi_{N,t}\big\rangle_{\!\cH_N}\,,
\end{split}
\end{equation}
where $\Psi_{N,t}$ is the solution at time $t$ to the many-body Schr\"{o}dinger equation \eqref{eq:Cauchy_problem} with initial datum $\Psi_{N,0}$, $\gamma_{N,t}^{(1)}$ is the associate one-body reduced density matrix, $(u_t,v_t)$ is the solution to the non-linear Gross-Pitaevskii system \eqref{eq:coupled} with initial datum $(u_0,v_0)$, the projections $(q_t)_{\!j}$ with $j\in\{1,\dots,N\}$ are defined in \eqref{eq:def_pj_qj}, and the operator $\widehat{n}_t$ is defined in \eqref{eq:weightn}. It is natural to regard $\widetilde{\alpha}_N(t)$ as the displacement from 100\% of the macroscopic occupation number in the many-body state $\Psi_{N,t}$ of the one-body spinor $\begin{pmatrix} u_t \\ v_t\end{pmatrix}$, thus the vanishing of $\widetilde{\alpha}_N(t)$ has a clear meaning of condensation. In fact, as a consequence of the bounds \eqref{eq:equivalent-BEC-control}, the condition
\begin{equation}\label{eq:convergence}
	\lim_{N\rightarrow\infty}\widetilde{\alpha}_N(t)\;=\;0  
	\end{equation}
and the condition
\begin{equation}\label{eq:thesis_again}
\lim_{N\to\infty}\mathrm{Tr}_{L^2(\mathbb{R}^3\otimes\mathbb{C}^2)}\Big|\gamma_{N,t}^{(1)}-\Big|\!\begin{pmatrix} u_t \\ v_t\end{pmatrix}\!\Big\rangle\Big\langle\!\begin{pmatrix} u_t \\ v_t\end{pmatrix}\!\Big|\;\Big|\;=\;0
\end{equation}
appearing in the actual thesis of Theorem \ref{theorem:main}, are \emph{equivalent}. We shall then prove \eqref{eq:convergence}.

The typical way how the smallness of $\widetilde{\alpha}_N(t)$ is controlled at $t>0$, given its smallness at $t=0$, is a Gr\"onwall-type estimate of the form
\begin{equation}\label{eq:gronwalltilde}
\frac{\ud}{\ud t}\widetilde\alpha_N(t)\;\leqslant\; C\,(\widetilde\alpha_N(t)+N^{-\eta}) 
\end{equation}
for some constants $C,\eta>0$. In the mean-field scaling, when $V_N(x)$ is instead re-scaled as $N^{-1}V(x)$ and the emerging effective dynamics is governed by the non-linear Hartree equation, differentiating in time directly in $\widetilde\alpha_N(t)$ produces terms that are as small as $\widetilde\alpha_N(t)$ itself or as some negative power of $N$, whence the desired estimate \cite{kp-2009-cmp2010,Pickl-LMP-2011}.

However, in the Gross-Pitaevskii scaling, the short-distance behaviour of $V_N$ is so singular in $N$ that a direct differentiation in time yields terms that are not controllable directly by $\widetilde\alpha_N(t)$ and $N^{-\eta}$. Indeed, the scheme of \cite{kp-2009-cmp2010,Pickl-LMP-2011} would rather give, as argued in \cite[Section 6.1.1]{Pickl-RMP-2015}, a bound of the form
\begin{equation}\label{eq:temp_bound}
\frac{\ud}{\ud t}\widetilde\alpha_N(t)\;\leqslant\; C\,\big(\widetilde\alpha_N(t)+\langle\Psi_{N,t},\widehat{n}_t\,\Psi_{N,t}\rangle_{\!\cH_N}+|\mathcal{E}_N[\Psi_{N,t}]-\mathcal{E}^{\mathrm{GP}}[u_t,v_t]\,|+o(1)\big)
\end{equation}
as $N\to\infty$, showing that in order to control the variation of $\widetilde\alpha_N(t)$ one also needs a control on the larger quantity $\langle\Psi_{N,t},\widehat{n}_t\,\Psi_{N,t}\rangle_{\!\cH_N}\geqslant\langle\Psi_{N,t},\widehat{n}_t^{\,2}\,\Psi_{N,t}\rangle_{\!\cH_N}= \widetilde\alpha_N(t)$. In turn, \eqref{eq:temp_bound} suggests that Gr\"onwall Lemma should be rather applied to the quantity
\[
 \langle\Psi_{N,t},\widehat{n}_t\,\Psi_{N,t}\rangle_{\!\cH_N}+|\mathcal{E}_N[\Psi_{N,t}]-\mathcal{E}^{\mathrm{GP}}[u_t,v_t]\,|\,,
\]
except that differentiating it in time would now produce expectations of $\widehat{n}-\widehat{n}_1$ and $\widehat{n}-\widehat{n}_2$ (the shifted operator $\widehat{n}_d$ being defined in \eqref{eq:shifted_fd}), for which the only manageable control would be in terms of the expectation of $N^{-1}\widehat{\partial_k n}$, but the derivative of the weight function $k\mapsto n(k)$ turns out to be too singular at $k=0$ to produce good estimates. Following these considerations, in analogy to the discussion in \cite[Section 6.1.1]{Pickl-RMP-2015}, one is led to select
\begin{equation}\label{eq:def_a<}
\alpha_N^<(t)\;:=\;\langle\Psi_{N,t},\widehat{m}_t\Psi_{N,t}\rangle_{\!\cH_N}+|\mathcal{E}_N[\Psi_{N,t}]-\mathcal{E}^{\mathrm{GP}}[u_t,v_t]\,|
\end{equation}
as a convenient quantity to Gr\"onwall-control in time, where $m(k)$ is the smoothed weight function \eqref{eq:weightm} obtained by regularising $n(k)$ at small $k$. Let us recall that by construction
\begin{equation}\label{eq:m-n-n2}
\max\{n(k),N^{-\xi}\}\;\geqslant\; m(k)\;\geqslant\;n(k)\;\geqslant\;n^2(k)\,,\qquad k\in[0,N]\,.
\end{equation}

The choice of the control \eqref{eq:def_a<} with the weight \eqref{eq:weightm} turns out to bring an efficient Gr\"onwall-like bound to prove the limits \eqref{eq:thesis}-\eqref{eq:thm_thesis} of Theorem \ref{theorem:main} \emph{provided that} the potential $V_N(x)=N^2V(N x)$ is replaced with a softer scaling potential 
\begin{equation}\label{eq:def_Vtilde}
 \widetilde{V}_N(x)\;:=\;N^{3\gamma-1}V(N^\gamma x)
\end{equation}
defined for some $\gamma\in(0,1)$, as can be seen by reasoning as in \cite[Sections 6.1.1 and 6.1.2]{Pickl-RMP-2015}. However, in the actual Gross-Pitaevskii scaling (i.e., $\gamma=1$ in \eqref{eq:def_Vtilde}) a further modification of $\alpha_N^<(t)$ is necessary, for otherwise the peculiar short scale structure induced by the zero-energy scattering problem associated with $V_N$ -- see \eqref{eq:aN}-\eqref{eq:fN_2} above -- would prevent us to close a Gr\"onwall-type argument for $\alpha_N^<(t)$.

In \cite{Pickl-RMP-2015} this difficulty for the one-component condensate is cleverly circumvented by `dressing' the projections $p_j$ (that `count' the particles in the condensate) with a typical Jastrow factor built upon the zero-energy scattering solution $f_\beta$ defined in \eqref{eq:zesp-beta}, relative to the smoothed potential $V_N-W_\beta$ defined in \eqref{eq:def_W_beta}. In analogy to that, also in the spinor case we shall make the replacement
\[
p_1\;\mapsto\;\prod_{j=2}^N\,f_\beta(x_1-x_j)p_1\prod_{k=2}^Nf_\beta(x_1-x_k)\,,
\]
the precise value of $\beta$ to be fixed conveniently. Let us recall from Section \ref{sect:scattering} that $f_\beta$ is actually constant in the outer region $|x|>\mathcal{O}(N^{-\beta})$ and has a smoothed behaviour as $|x|\to 0$.

If now one was to re-do the projection-based analysis of \cite{kp-2009-cmp2010,Pickl-LMP-2011} with the insertion in $\frac{\ud}{\ud t}\widetilde{\alpha}_N(t)$ of such dressed projections, then, as shown in \cite[Section 6.2.1]{Pickl-RMP-2015}, one would get terms of the form
\[
\langle\Psi_{N,t},(q_t)_1\Psi_{N,t}\rangle_{\!\cH_N}+2(N-1)\,\mathfrak{Re}\langle\Psi_{N,t},g_\beta(x_1-x_2)(p_t)_1\Psi_{N,t}\rangle_{\!\cH_N}
\]
up to three-body re-collision terms. This finally motivates the following:

\begin{definition} For $\beta\in(0,1)$ and $\Psi_{N,t}$ as in the assumptions of Theorem \ref{theorem:main}, we define at each time $t$
	\begin{equation}\label{eq:def_alphaN}
\begin{split}
 \alpha_N(t)\;&:=\;\alpha^<_N(t)-N(N-1)\,\mathfrak{Re}\langle\Psi_{N,t},g_\beta(x_1-x_2)\,R_{(12),t}\,\Psi_{N,t}\rangle_{\!\cH_N} \\
&=\;\langle\Psi_{N,t},\widehat{m}_t\Psi_{N,t}\rangle_{\!\cH_N}+|\mathcal{E}_N[\Psi_{N,t}]-\mathcal{E}^{\mathrm{GP}}[u_t,v_t]\,| \\
&\qquad\quad -N(N-1)\,\mathfrak{Re}\langle\Psi_{N,t},g_\beta(x_1-x_2)\,R_{(12),t}\,\Psi_{N,t}\rangle_{\!\cH_N} \,,
\end{split}
	\end{equation}
	where $\alpha^<_N(t)$ is defined in \eqref{eq:def_a<}, $m(k)$ is the weight function \eqref{eq:weightm}, $g_\beta$ is the cut-off function defined in \eqref{eq:defng}, and $R_{(ij),t}$ is the operator in \eqref{eq:defnr}.
\end{definition}

We have thus \emph{three} indicators, $\widetilde{\alpha}_N(t)$, $\alpha^<_N(t)$, and $\alpha_N(t)$.
First, we see that $\alpha_N(t)$ and $\alpha^<_N(t)$ are close and coincide asymptotically as $N\to\infty$.

\begin{lemma}\label{lem:aa<}
 Under the assumptions of Theorem \ref{theorem:main}, and for any $\beta\in(0,1)$ chosen in the definition \eqref{eq:def_alphaN} of $\alpha_N(t)$, there exist a constant $\eta>0$ and a function $C(t)>0$, that depends only on $\|u_t\|_\infty$\,, $\|v_t\|_\infty$\,, $V$, and $\beta$ and is independent of $N$, such that, for $N$ large enough, one has
\begin{equation} \label{eq:apriori}
|\alpha_N(t)-\alpha^<_N(t)|\;\leqslant\;C(t)\,N^{-\eta}\,.
\end{equation}
\end{lemma}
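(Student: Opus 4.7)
The plan is a direct estimate based on the observation that the definitions \eqref{eq:def_a<} and \eqref{eq:def_alphaN} give
\[
\alpha_N(t)-\alpha^<_N(t)\;=\;-N(N-1)\,\mathfrak{Re}\langle\Psi_{N,t},g_\beta(x_1-x_2)\,R_{(12),t}\,\Psi_{N,t}\rangle,
\]
so the task reduces to bounding a single bilinear expression in $\Psi_{N,t}$. I would split $R_{(12),t}$ via \eqref{eq:defnr} into the two pieces $p_1 p_2\,\widehat{m}^b$ and $(p_1 q_2+q_1 p_2)\,\widehat{m}^a$, and treat each by the same Cauchy--Schwarz strategy: move one $p$-projection to the left across the (bosonic, real-valued) potential $g_\beta(x_1-x_2)$, producing a ``dressed'' factor $p_1\,g_\beta(x_1-x_2)$, and keep the remaining projections paired with the small operator $\widehat{m}^{a,b}$ on the right.

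For the $p_1 p_2 \widehat{m}^b$ piece, self-adjointness of $p_1$ together with the fact that $g_\beta$ is real-valued gives
\[
\bigl|\langle\Psi_{N,t},g_\beta(x_1-x_2)\,p_1 p_2\,\widehat{m}^b\,\Psi_{N,t}\rangle\bigr|\;=\;\bigl|\langle p_1\,g_\beta(x_1-x_2)\Psi_{N,t},\,p_2\,\widehat{m}^b\,\Psi_{N,t}\rangle\bigr|,
\]
after which Cauchy--Schwarz, the dressed estimate \eqref{eq:dressed3}, and the operator bound \eqref{eq:mb} produce
\[
\bigl|\langle\Psi_{N,t},g_\beta(x_1-x_2)\,p_1 p_2\,\widehat{m}^b\,\Psi_{N,t}\rangle\bigr|\;\leqslant\;C(t)\,\|g_\beta\|_2\cdot\|\widehat{m}^b\|_{\mathrm{op}}\;\leqslant\;C(t)\,N^{-1-\frac{\beta}{2}}\cdot N^{-1+\xi},
\]
where the last step uses \eqref{eq:g2}. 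The piece $(p_1 q_2+q_1 p_2)\widehat{m}^a$ is handled the same way: for the representative $p_1 q_2\,\widehat{m}^a$ I would use $[\widehat{m}^a,q_2]=\mathbbm{O}$ from \eqref{eq:commutativity_fhat_P} together with $\|q_2\Psi_{N,t}\|\leqslant 1$ to extract $\|\widehat{m}^a\|_{\mathrm{op}}\leqslant C\,N^{-1+\xi}$ via \eqref{eq:ma}, and note that the two summands coincide by bosonic symmetry of $\Psi_{N,t}$ (together with the spherical symmetry of $g_\beta$).

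Summing the two contributions and absorbing the combinatorial prefactor $N(N-1)\leqslant N^2$ yields
\[
|\alpha_N(t)-\alpha^<_N(t)|\;\leqslant\;C(t)\,N^{-\frac{\beta}{2}+\xi}.
\]
Since $\xi>0$ is a free parameter entering only the regularisation \eqref{eq:weightm} of $n(k)$ at small $k$, choosing it so that $\xi<\beta/2$ gives the claim with $\eta:=\beta/2-\xi>0$. I do not foresee a real obstacle; the argument is essentially a bookkeeping of operator norms. The decisive input is the dressed estimate $\|p\,g_\beta(x_1-x_2)\|_{\mathrm{op}}\leqslant C(t)\|g_\beta\|_2$ from Lemma \ref{lemma:dressed}, which transfers from the scalar to the spinor setting through the combination $\|u_t\|_\infty^2+\|v_t\|_\infty^2$ already absorbed into $C(t)$.
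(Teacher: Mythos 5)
Your proof is correct and follows essentially the same route as the paper's: reduce to $N^2|\langle\Psi_{N,t},g_\beta(x_1-x_2)R_{(12),t}\Psi_{N,t}\rangle|$, split $R_{(12),t}$ via \eqref{eq:defnr}, dress $g_\beta$ with a neighbouring $p$-projection via Lemma~\ref{lemma:dressed}, and conclude using the operator bounds \eqref{eq:ma}--\eqref{eq:mb} on $\widehat{m}^a,\widehat{m}^b$ and the $L^2$-estimate \eqref{eq:g2} on $g_\beta$, fixing $\xi<\beta/2$ at the end. The only cosmetic difference is that you invoke bosonic symmetry to identify the $p_1q_2$ and $q_1p_2$ contributions, whereas the paper bounds them individually via $\|g_\beta p_1\|_{\mathrm{op}}$ and $\|p_2 g_\beta\|_{\mathrm{op}}$, but this does not change the estimate.
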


\begin{proof} From \eqref{eq:def_alphaN} one has
\begin{equation*}
 \begin{split}
  |\alpha_N(t)-\alpha^<_N(t)|\;&\leqslant\;N^2\,\big| \langle\Psi_{N,t},g_\beta(x_1-x_2)\,R_{(12),t}\,\Psi_{N,t}\rangle_{\!\cH_N}  \big|\,, 
 \end{split}
\end{equation*}
and from \eqref{eq:defnr} one has
\[
 \begin{split}
  \big| \langle\Psi_{N,t},g_\beta(x_1-x_2)\,R_{(12),t}\,\Psi_{N,t}\rangle_{\!\cH_N}  \big| \;&\leqslant\;\big| \langle\Psi_{N,t},g_\beta(x_1-x_2)\,(p_t)_1(p_t)_2\,\widehat{m}^b_t\,\Psi_{N,t}\rangle_{\!\cH_N}  \big| \\
&\quad + \big| \langle\Psi_{N,t},g_\beta(x_1-x_2)\,(p_t)_1(q_t)_2\,\widehat{m}^a_t\,\Psi_{N,t}\rangle_{\!\cH_N}\big|   \\
&\quad + \big| \langle\Psi_{N,t},g_\beta(x_1-x_2)\,(q_t)_1(p_t)_2\,\widehat{m}^a_t\,\Psi_{N,t}\rangle_{\!\cH_N} \big| \\
&\leqslant\;\|g_\beta(x_1-x_2)\,(p_t)_1\|_{\mathrm{op}}\:\big(\|\widehat{m}^b_t\|_{\mathrm{op}}+\|\widehat{m}^a_t\|_{\mathrm{op}}\big) \\
&\qquad + \|(p_t)_2\: g_\beta(x_1-x_2)\|_{\mathrm{op}}\:\|\widehat{m}^a_t\|_{\mathrm{op}}\,.
 \end{split}
\]
Therefore,
\[
 \begin{split}
  |\alpha_N(t)-\alpha^<_N(t)|\;&\leqslant\;N^2\big(\|g_\beta(x_1-x_2)\,(p_t)_1\|_{\mathrm{op}}+ \|(p_t)_2\: g_\beta(x_1-x_2)\|_{\mathrm{op}}\big)\:\big(\|\widehat{m}^b_t\|_{\mathrm{op}}+\|\widehat{m}^a_t\|_{\mathrm{op}}\big) \\
&\leqslant\;N^2\,C(t)\,\|g_\beta\|_2\,N^{-1+\xi}\;\leqslant\;N^2\,C(t)\,N^{-1-\frac{1}{2}\beta}\,N^{-1+\xi}\;=\;C(t)\,N^{-\frac{1}{2}\beta+\xi}\,,
 \end{split}
\]
where we used \eqref{eq:ma}-\eqref{eq:mb} and \eqref{eq:dressed2}-\eqref{eq:dressed3} in the second inequality, and \eqref{eq:g2} in the third inequality, for some function $C(t)>0$ depending only on $\|u_t\|_\infty$, $\|v_t\|_\infty$, $V$, and $\beta$. Here $\xi$ is the constant used in the definition \eqref{eq:weightm} of the weight $m(k)$. By taking it small enough, i.e., $0<\xi<\frac{1}{2}\beta$, one obtains the constant $\eta:=\frac{1}{2}\beta-\xi>0$ of the thesis.
\end{proof}

Next, we can prove the following estimate.

\begin{proposition} \label{prop:bound}
Under the assumptions of Theorem \ref{theorem:main},  there exists $\beta\in(0,1)$ such that the bound
\begin{equation}\label{eq:gronwall}
 |\alpha_N(t)|\;\leqslant\;C(t)\big(\alpha_N^<(0)+|\mathcal{E}_N[\Psi_{N,0}]-\mathcal{E}^{\mathrm{GP}}[u_0,v_0]|+N^{-\eta}\big)+\int_0^tC(s)\,\alpha_N^<(s)\,\ud s
\end{equation}
holds for some constants $\eta>0$ and $N_0\in\mathbb{N}$, and some function $C(t)>0$ depending on $\|u_t\|_{H^2}$\,, $\|v_t\|_{H^2}$\,, and $V$, 
but not on $N$.
\end{proposition}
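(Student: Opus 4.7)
The plan is to prove the differential inequality $\frac{\ud}{\ud t}\alpha_N(t)\leqslant C(t)\,\alpha_N^<(t)+C(t)N^{-\eta}$, then integrate from $0$ to $t$ and invoke Lemma \ref{lem:aa<} at $t=0$ to replace $\alpha_N(0)$ by $\alpha_N^<(0)$ up to an $O(N^{-\eta})$ error; the bound \eqref{eq:gronwall} then follows at once.

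Each of the three summands in \eqref{eq:def_alphaN} is differentiated separately. For every one of them the time derivative is the sum of a Schr\"odinger commutator $\ii\langle\Psi_{N,t},[H_N,\,\cdot\,]\Psi_{N,t}\rangle$ and an intrinsic contribution generated by the evolution of $p_t$ under the non-linear system \eqref{eq:coupled} -- equivalently, by $\ii\partial_tp_t=[h^{\mathrm{NL}}_t,p_t]$, where $h^{\mathrm{NL}}_t$ is the matrix-valued one-body non-linear Hamiltonian whose entries are given by \eqref{eq:onebody-nonlin-Hamilt}. The \emph{linear} one-body pieces $-\Delta_{x_j}+S(x_j,t)$ present in both $H_N$ and $\sum_jh^{\mathrm{NL}}_{t,j}$ cancel pairwise, so the effective generator acting on $\widehat{m}_t$ reduces to $\sum_{j<k}V_N(x_j-x_k)-\sum_j8\pi a(|u_t|^2+|v_t|^2)(x_j)$. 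The energy-difference summand $|\mathcal{E}_N[\Psi_{N,t}]-\mathcal{E}^{\mathrm{GP}}[u_t,v_t]|$ contributes analogously, and is controlled using the uniform $H^2$-regularity of $(u_t,v_t)$ granted by (A4) together with the bounds on $\partial_tS$ in (A1).

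By bosonic symmetry the commutator $\ii[\sum_{j<k}V_N(x_j-x_k),\widehat{m}_t]$ becomes $\tfrac{1}{2}N(N-1)$ times a single-pair commutator, which via Lemma \ref{lemma:tools}(iii) equals $\ii[V_N(x_1-x_2),R_{(12),t}]$ up to symmetrisation. The pieces carrying $p_iq_j+q_ip_j$ and $q_iq_j$ are controlled by $\|R_{(12),t}\|_{\mathrm{op}}\leqslant CN^{-1+\xi}$, Lemma \ref{lemma:dressed}, Lemma \ref{lemma:potential}, and \eqref{eq:nq1}, producing contributions of size $C(t)\alpha_N^<(t)+C(t)N^{-\eta}$. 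The three-body recollision commutators that arise when terms like $[K_{hr},R_{(12),t}]$ are expanded via \eqref{eq:commut_r2} are bounded using the higher-order operator-norm estimates \eqref{eq:mcde}. The genuine obstacle, and the whole motivation for the Jastrow-type correction in \eqref{eq:def_alphaN}, is the surviving $N(N-1)\langle\Psi_{N,t},V_N(x_1-x_2)p_1p_2\widehat{m}^b_t\Psi_{N,t}\rangle$-type piece, in which the concentration $\|V_N\|_\infty\sim N^2$ at scale $N^{-1}$ prevents a direct operator-norm estimate.

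To dispose of this piece I invoke the scattering-length structure of Section \ref{sect:scattering}: the zero-energy equation \eqref{eq:zesp-beta} gives $2\Delta g_\beta=-(V_N-W_\beta)(1-g_\beta)$, so when I compute $\frac{\ud}{\ud t}\{-N(N-1)\mathfrak{Re}\langle\Psi_{N,t},g_\beta(x_1-x_2)R_{(12),t}\Psi_{N,t}\rangle\}$ and single out the commutator of $g_\beta(x_1-x_2)$ with the kinetic part of $H_N$, the resulting $-2\Delta g_\beta$ cancels the singular $V_N p_1p_2$ contribution exactly. What remains are (i)~a $W_\beta$-remainder of size $\|W_\beta\|_1\sim N^{-2\beta}$, (ii)~a gradient cross term $\nabla g_\beta\cdot(\nabla_1-\nabla_2)$ bounded via \eqref{eq:g32} and the uniform kinetic-energy bound that follows from \eqref{eq:hypconvergence_energy} and Lemma \ref{lemma:potential}, and (iii)~commutators of $R_{(12),t}$ with $H_N$ and with $\partial_tp_t$, all controlled through Lemma \ref{lemma:tools}(iv) and Lemmas \ref{lemma:nq}-\ref{lemma:potential}. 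Choosing $\beta\in(0,1)$ and $\xi$ in \eqref{eq:weightm} small enough to balance the competing exponents produces a common $\eta>0$; integrating in time and applying Lemma \ref{lem:aa<} at $t=0$ closes the argument.
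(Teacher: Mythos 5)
Your sketch reproduces the paper's overall strategy: decompose $\alpha_N$ as in \eqref{eq:alpha_beta_energy}, integrate the time derivative of each piece, exploit the cancellation of the linear one-body parts between $H_N$ and $\sum_j h^{\mathrm{GP}}_j$, and let the Jastrow factor built from $f_\beta$ absorb the singular $(V_N-W_\beta)f_\beta$-term arising from $[H_N,g_\beta]$ through the zero-energy scattering equation \eqref{eq:zesp-beta}. The modular sub-statements the paper isolates (Lemma~\ref{lem:enest}, Propositions~\ref{prop:alphaestimate} and~\ref{prop:each_g_abcdef}) are present implicitly in your paragraphs.

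There is, however, a genuine gap in your item~(i). After the cancellation, the surviving $W_\beta$-interaction, which appears inside $Z^{(\beta)}_{N,t}(x_1,x_2)=f_\beta\bigl(W_\beta-\frac{8\pi a}{N-1}(|u_t(x_1)|^2+|v_t(x_1)|^2+\ldots)\bigr)$, is \emph{not} a directly small remainder: by construction $\|NW_\beta\|_1\to 8\pi a$ (cf.~\eqref{eq:Wtilde_3}), so $\|W_\beta\|_1=O(N^{-1})$ and not $O(N^{-2\beta})$, and a naive operator-norm bound gives
\[
N^2\,\|W_\beta(x_1-x_2)\,p_1\|_{\mathrm{op}}\,\|\widehat{m}^a_t\|_{\mathrm{op}}\;\lesssim\; N^2\cdot N^{-1+\frac{3}{2}\beta}\cdot N^{-1+\xi}\;=\;N^{\frac{3}{2}\beta+\xi}\,,
\]
which does not vanish. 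This piece must instead be balanced against the effective cubic non-linearity via the soft-potential Pickl-type argument encoded in Lemma~\ref{lemma:appendix_2}, and it is precisely this comparison that produces the dominant $\alpha^<_N(t)$-contribution driving the Gr\"onwall iteration --- not, as your second paragraph suggests, the $p_iq_j$- and $q_iq_j$-pieces of $[V_N(x_1-x_2),R_{(12),t}]$ taken directly, since for the Gross-Pitaevskii potential those are never estimated before the Jastrow substitution. Without invoking Lemma~\ref{lemma:appendix_2} (or reproving it) at this point, the sketch does not close. A secondary, more cosmetic remark: $\alpha_N(t)$ contains the non-differentiable absolute value $|\mathcal{E}_N[\Psi_{N,t}]-\mathcal{E}^{\mathrm{GP}}[u_t,v_t]|$, which the paper avoids differentiating by treating $\delta_N(t)$ and the energy difference as separate integrals and only afterwards recombining them via \eqref{eq:alpha_beta_energy}; your phrase ``each of the three summands is differentiated'' should be adjusted accordingly.
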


The proof of Proposition \ref{prop:bound} is the subject of Section \ref{sect:estimate}. 
With Lemma \ref{lem:aa<} and Proposition \ref{prop:bound}  at hand, we are now ready to prove Theorem \ref{theorem:main}.

\begin{proof}[Proof of Theorem \ref{theorem:main}]
By means of the comparison \eqref{eq:apriori}, the bound \eqref{eq:gronwall} can be turned into
\[\tag{*}
 \alpha_N^<(t)\;\leqslant\;\widehat{C}(t)\big(\alpha_N^<(0)+|\mathcal{E}_N[\Psi_{N,0}]-\mathcal{E}^{\mathrm{GP}}[u_0,v_0]|+N^{-\eta'}\big)+\int_0^t\widehat{C}(s)\,\alpha_N^<(s)\,\ud s
\]
for suitable $\widehat{C}(t)>0$ and $\eta'>0$.
The assumption (A5) now guarantees that the terms in the r.h.s.~of (*) above which are evaluated at $t=0$ are small in $N$. This is clear for the energy difference, because of the bound \eqref{eq:hypconvergence_energy}, whereas concerning $\alpha_N^<(0)$ we argue as follows. First we estimate
\[
\begin{split}
 \langle\Psi_{N,0},\widehat{m}\:\Psi_{N,0}\rangle_{\!\cH_N}\;&\leqslant\;N^{-\xi}+\langle\Psi_{N,0},\widehat{n}\:\Psi_{N,0}\rangle_{\!\cH_N}\;\leqslant\;N^{-\xi}+\langle\Psi_{N,0},\widehat{n}^{\,2}\:\Psi_{N,0}\rangle_{\!\cH_N}^{\!1/2} \\
&=\;N^{-\xi}+\sqrt{\widetilde{\alpha}_N(0)}\;\leqslant\;N^{-\xi}+\sqrt{\mathrm{Tr}_{L^2(\mathbb{R}^3)\otimes\mathbb{C}^2}\Big|\gamma_{N,0}^{(1)}-\Big| \!\!\begin{pmatrix} u_0 \\ v_0\end{pmatrix}\!\!\Big\rangle\Big\langle\!\! \begin{pmatrix} u_0 \\ v_0\end{pmatrix}\!\!\Big|}\;,
\end{split}
\]
where we used \eqref{eq:m-n-n2} in the first inequality, then a Schwarz inequality, then \eqref{eq:alpha_tilde} in the third step, and finally \eqref{eq:equivalent-BEC-control} in the last inequality, and where $\xi>0$ is the constant used in the definition \eqref{eq:weightm} of the weight $m(k)$). Then by \eqref{eq:hypconvergence} and \eqref{eq:hypconvergence_energy} of assumption (A5), 
\[
\begin{split}
 \alpha^<_N(0)\;&=\;\langle\Psi_{N,0},\widehat{m}\:\Psi_{N,0}\rangle_{\!\cH_N}+|\mathcal{E}_N[\Psi_{N,0}]-\mathcal{E}^{\mathrm{GP}}[u_0,v_0]\,| \\
&\leqslant\;C \,(N^{-\xi}+N^{-\frac{1}{2}\eta_1}+N^{-\eta_2})\;\leqslant\; C\,N^{-\eta''}
\end{split}
\]
for some $\eta''>0$.
With these bounds at $t=0$ the inequality (*) takes the form
\[
 \alpha_N^<(t)\;\leqslant\; \widetilde{C}(t)\,N^{-\eta}+\int_0^t\widetilde{C}(s)|\alpha_N^<(s)|\,\ud s
\]
for suitable $\widetilde{C}(t)>0$ and $\eta>0$.
A Gr\"onwall-like estimate \cite[Theorem 1.3.2]{Pachpatte-ineq} then gives 
\[
 \alpha_N^<(t)\;\leqslant\;  \widetilde{C}(t)\,N^{-\eta}+\int_0^t\ud s\:\widetilde{C}(s)^2\,N^{-\eta}e^{\int_s^t\widetilde{C}(r)\ud r}\;\equiv\; C(t)\,N^{-\eta}\,,
\]
having set $C(t)$ accordingly. 
Owing to \eqref{eq:def_a<}, the latter estimate implies at once both
\[
 C(t)\,N^{-\eta}\;\geqslant\;|\mathcal{E}_N[\Psi_{N,t}]-\mathcal{E}^{\mathrm{GP}}[u_t,v_t]\,|\,,
\]
which yields the limit \eqref{eq:thm_thesis} of the statement of Theorem \ref{theorem:main}, and
\[
\begin{split}
 C(t)\,N^{-\eta}\;&\geqslant\;\langle\Psi_{N,t},\widehat{m}_t\Psi_{N,t}\rangle_{\!\cH_N}\;\geqslant\;\langle\Psi_{N,t},\widehat{n}_t^2\Psi_{N,t}\rangle_{\!\cH_N}\;=\;\widetilde{\alpha}_N(t) \\
&\geqslant\;\Big(\mathrm{Tr}_{L^2(\mathbb{R}^3)\otimes\mathbb{C}^2}\Big|\gamma_{N,t}^{(1)}-\Big| \!\!\begin{pmatrix} u_t \\ v_t\end{pmatrix}\!\!\Big\rangle\Big\langle\!\! \begin{pmatrix} u_t \\ v_t\end{pmatrix}\!\!\Big|\,\Big)^{\!2}
\end{split}
\]
(where we used \eqref{eq:m-n-n2} in the second step, \eqref{eq:alpha_tilde} in the third, and \eqref{eq:equivalent-BEC-control} in the fourth), which yields the limit \eqref{eq:convergence} and hence also the limit \eqref{eq:thesis} of the statement of Theorem \ref{theorem:main}.
\end{proof}

\section{Proof of Proposition \ref{prop:bound}}\label{sect:estimate}

This Section is devoted to the proof of Proposition \ref{prop:bound}, which is the missing step to complete the proof of Theorem \ref{theorem:main}.


In order to produce the estimate \eqref{eq:gronwall}, it is convenient to re-express the quantity $\alpha_N(t)$ defined in \eqref{eq:def_alphaN} as
\begin{equation}\label{eq:alpha_beta_energy}
 \alpha_N(t)\;=\;|\mathcal{E}_N[\Psi_{N,t}]-\mathcal{E}^{\mathrm{GP}}[u_t,v_t]\,|+\delta_N(t)\,,
\end{equation}
where
\begin{equation}\label{eq:def_deltaN}
\delta_N(t)\;:=\; \langle\Psi_{N,t},\widehat{m}_t\Psi_{N,t}\rangle_{\!\cH_N}-N(N-1)\,\mathfrak{Re}\langle\Psi_{N,t},g_\beta(x_1-x_2)\,R_{(12),t}\,\Psi_{N,t}\rangle_{\!\cH_N} \,,
\end{equation}
and to analyse the two summands in the r.h.s.~of \eqref{eq:alpha_beta_energy} separately.

In analogy to \cite{Pickl-RMP-2015}, we introduce the following $(N,t)$-dependent quantities.


%

\begin{itemize}
	\item[(a)] A quantity that, as shown later, controls the energy difference in \eqref{eq:alpha_beta_energy}, and precisely

 \begin{equation}\label{eq:def_dN_a}
  \delta^{(a)}_{N}\!(t)\;:=\;\langle\Psi_{N,t}, \dot S(x_1,t)\Psi_{N,t}\rangle_{\!\cH_N}-\Big\langle\!\! \begin{pmatrix}u_t\\v_t\end{pmatrix}\! ,\,\dot S(t)\begin{pmatrix}u_t\\v_t\end{pmatrix}\!\!\Big\rangle_{\!L^2(\mathbb{R}^3)\otimes\mathbb{C}^2}\,.
 \end{equation}
Here, consistently with the notation of Sections \ref{sec:intro} and \ref{sect:main},  $S(x_j,t)$ denotes the operator-valued matrix $S$ (defined in \eqref{eq:matrixS}) acting on the spatial+spin degrees of freedom of the $j$-th particle.
	\item[(b)] A `modified interaction term', containing the new potential $W_\beta$ defined in \eqref{eq:def_W_beta} as well as the function $f_\beta$ introduced in \eqref{eq:zesp-beta}:
	\begin{equation}\label{eq:def_dN_b}
	\begin{split}
\delta_N^{(b)}\!(t)\;:=&\;-N(N-1)\,\mathfrak{Im}\Big(\langle\Psi_{N,t},Z^{(\beta)}_{N,t}(x_1,x_2)R_{(12),t}\Psi_{N,t}\rangle_{\!\cH_N}+ \\
&\qquad\qquad\qquad\qquad+ \langle\Psi_{N,t},g_\beta(x_1-x_2)R_{(12),t}\,Z_{N,t}(x_1,x_2)\Psi_{N,t}\rangle_{\!\cH_N}\Big)\,,
\end{split}
\end{equation}
	where
	\begin{equation}
	\begin{split}\label{eq:defz_beta}
	Z^{(\beta)}_{N,t}&(x_1,x_2)\;:=\;f_\beta(x_1-x_2)\times  \\
& \; \times \Big(W_\beta(x_1-x_2)-{\textstyle\frac{8\pi a}{N-1}} (|u_t(x_1)|^2+|v_t(x_1)|^2+|u_t(x_2)|^2+|v_t(x_2)|^2)\Big)
	\end{split}
	\end{equation}
	and
	\begin{equation} \label{eq:defz}
	Z_{N,t}(x_1,x_2)\;:=\;V_N(x_1-x_2)-{\textstyle\frac{8\pi a}{N-1}}\Big(|u_t(x_1)|^2+|v_t(x_1)|^2+|u_t(x_2)|^2+|v_t(x_2)|^2\Big)\,.
	\end{equation}
	\item[(c)] A term containing mixed spatial derivatives of $g_\beta$ and $R_{(12),t}$\,:
	\begin{equation}\label{eq:def_dN_c}
\delta_N^{(c)}\!(t)\;:=\;-4N(N-1)\mathfrak{Im}\langle\Psi_{N,t},\nabla_1g_\beta(x_1-x_2)\nabla_1R_{(12),t}\Psi_{N,t}\rangle_{\!\cH_N}\,.
\end{equation}
	\item[(d)] A `three particle correction' term:
	\begin{equation}\label{eq:def_dN_d}
	\begin{split}
	\delta_N^{(d)}\!(t)\;:=\;N(N-1)&(N-2)\, \mathfrak{Im}\,\langle\Psi_{N,t},g_\beta(x_1-x_2)\\
	&\times[V_N(x_1-x_3)+8\pi a(|u_t(x_3)|^2+|v_t(x_3)|^2),R_{(12),t} ]\Psi_{N,t}\rangle_{\!\cH_N}\,.
	\end{split}
	\end{equation}
	\item[(e)] A `four particle correction' term:
	\begin{equation}\label{eq:def_dN_e}
	\begin{split}
	\delta_N^{(e)}\!(t)\;&:=\;{\textstyle\frac{1}{2}}\,N(N-1)(N-2)(N-3)\,\times \\
&\qquad\times\,\mathfrak{Im}\,\langle\Psi_{N,t},g_\beta(x_1-x_2)[V_N(x_3-x_4),R_{(12),t} ]\Psi_{N,t}\rangle_{\!\cH_N}\,.
	\end{split}
	\end{equation}
	\item[(f)] A `correction term' for the mean-field potential:
	\begin{equation}\label{eq:def_dN_f}
\begin{split}
	&\!\!\!\!\!\!\!\!\!\!\delta_N^{(f)}\!(t)\;:=\;N(N-2)\,\times \\
&\!\!\!\!\!\!\!\!\!\!\times\,\mathfrak{Im}\langle\Psi_{N,t},g_\beta(x_1-x_2)\big(|u_t(x_1)|^2+|u_t(x_2)|^2+|v_t(x_1)|^2+|v_t(x_2)|^2,R_{(12),t} \big)\Psi_{N,t}\rangle_{\!\cH_N}\,.
\end{split}
	\end{equation}
\end{itemize}

Concerning the quantities above, we shall establish the following three results.

\begin{lemma}\label{lem:enest}
 Under the assumptions of Theorem 2.1, one has
\begin{equation}\label{eq:enest}
 \frac{\ud}{\ud t}\big(\mathcal{E}_N[\Psi_{N,t}]-\mathcal{E}^{\mathrm{GP}}[u_t,v_t]\big)\;=\;\delta^{(a)}_{N}\!(t)\,,\qquad t\geqslant 0\,.
\end{equation}
%
\end{lemma}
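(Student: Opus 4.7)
The plan is to verify this identity by separately differentiating the two energies and observing that in each case the ``Hamiltonian flow'' contributions cancel, leaving only the explicit time dependence of $S$.

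First, I would differentiate the many-body energy per particle. Since $\Psi_{N,t}$ is a strong solution of \eqref{eq:Cauchy_problem} with $\Psi_{N,t}\in\mathcal{D}(H_N)$ at all times (by assumption (A3) and the existence of a unitary propagator on $\mathcal{D}_N$), one has
\[
\frac{\ud}{\ud t}\langle\Psi_{N,t},H_N(t)\Psi_{N,t}\rangle\;=\;\langle\dot\Psi_{N,t},H_N\Psi_{N,t}\rangle+\langle\Psi_{N,t},H_N\dot\Psi_{N,t}\rangle+\langle\Psi_{N,t},\dot H_N\Psi_{N,t}\rangle.
\]
Substituting $\dot\Psi_{N,t}=-\ii H_N\Psi_{N,t}$, the first two summands combine into $\ii\langle H_N\Psi_{N,t},H_N\Psi_{N,t}\rangle-\ii\langle\Psi_{N,t},H_N^{\,2}\Psi_{N,t}\rangle=0$ by self-adjointness of $H_N$ at each fixed $t$. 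Only the kinetic and two-body parts of $H_N$ are time-independent; hence $\dot H_N=\sum_{j=1}^N\dot S(x_j,t)$, and bosonic symmetry of $\Psi_{N,t}$ gives $\langle\Psi_{N,t},\dot S(x_j,t)\Psi_{N,t}\rangle=\langle\Psi_{N,t},\dot S(x_1,t)\Psi_{N,t}\rangle$ for every $j$. Dividing by $N$ then yields $\frac{\ud}{\ud t}\mathcal{E}_N[\Psi_{N,t}]=\langle\Psi_{N,t},\dot S(x_1,t)\Psi_{N,t}\rangle$.

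Next, I would differentiate the Gross--Pitaevskii functional $\mathcal{E}^{\mathrm{GP}}[u_t,v_t]$ along the flow \eqref{eq:coupled}. A direct computation of the variational derivatives with respect to $\bar u$ and $\bar v$ of the three contributions in \eqref{eq:GPfunctional} (kinetic, matrix potential, quartic interaction) gives, upon integration by parts,
\[
\frac{\delta\mathcal{E}^{\mathrm{GP}}}{\delta\bar u}\;=\;h^{(u,v)}_{11}u+S_{12}v,\qquad \frac{\delta\mathcal{E}^{\mathrm{GP}}}{\delta\bar v}\;=\;h^{(u,v)}_{22}v+S_{21}u,
\]
where the one-body non-linear Hamiltonians are those in \eqref{eq:onebody-nonlin-Hamilt}. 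Consequently, by the chain rule and the fact that $\mathcal{E}^{\mathrm{GP}}$ is real-valued,
\[
\frac{\ud}{\ud t}\mathcal{E}^{\mathrm{GP}}[u_t,v_t]\;=\;2\,\mathfrak{Re}\!\int\!\Big(\overline{h^{(u,v)}_{11}u_t+S_{12}v_t}\,\dot u_t+\overline{h^{(u,v)}_{22}v_t+S_{21}u_t}\,\dot v_t\Big)\ud x+\Big\langle\!\!\begin{pmatrix}u_t\\v_t\end{pmatrix}\!,\dot S(t)\!\begin{pmatrix}u_t\\v_t\end{pmatrix}\!\!\Big\rangle,
\]
where the last summand collects the explicit time-derivative of $S$. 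Using \eqref{eq:coupled} in the form $\dot u_t=-\ii(h^{(u,v)}_{11}u_t+S_{12}v_t)$ and the analogous identity for $v_t$, the first integrand reduces to $2\,\mathfrak{Re}(-\ii|h^{(u,v)}_{11}u_t+S_{12}v_t|^{2})$, which is identically zero, and similarly for the $v$-piece. Hence all the ``Hamiltonian'' contributions cancel and one obtains $\frac{\ud}{\ud t}\mathcal{E}^{\mathrm{GP}}[u_t,v_t]=\langle(u_t,v_t),\dot S(t)(u_t,v_t)\rangle_{L^2\otimes\mathbb{C}^2}$.

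Subtracting the two identities yields exactly the definition \eqref{eq:def_dN_a} of $\delta^{(a)}_N(t)$, proving \eqref{eq:enest}. The only delicate point is the justification of pointwise-in-$t$ differentiation in the many-body case, which requires $\Psi_{N,t}\in\mathcal{D}(H_N)$ with $H_N\Psi_{N,t}$ continuous in $t$ and $\dot S\in L^\infty_{t,x}$ (both ensured by (A1) and (A3)); on the Gross--Pitaevskii side the $C(\mathbb{R}_t,H^2_x)$ regularity from (A4) suffices, since the integrand manipulations above involve at most two spatial derivatives and products that lie in $L^1(\mathbb{R}^3)$ uniformly on bounded time intervals.
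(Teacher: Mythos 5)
Your proof is correct, and for the many-body energy $\mathcal{E}_N$ it is identical to the paper's: self-adjointness of $H_N$ kills the $\dot\Psi$-contributions, and bosonic symmetry reduces the residual to $\langle\Psi_{N,t},\dot S(x_1,t)\Psi_{N,t}\rangle$. For the Gross--Pitaevskii energy you take a genuinely different and more transparent route. You compute the variational derivatives $\delta\mathcal{E}^{\mathrm{GP}}/\delta\bar u = h^{(u,v)}_{11}u+S_{12}v$ and $\delta\mathcal{E}^{\mathrm{GP}}/\delta\bar v = h^{(u,v)}_{22}v+S_{21}u$, observe that the coupled GP system \eqref{eq:coupled} is exactly the Hamiltonian flow $\ii\dot u = \delta\mathcal{E}^{\mathrm{GP}}/\delta\bar u$, $\ii\dot v = \delta\mathcal{E}^{\mathrm{GP}}/\delta\bar v$, and conclude that the flow contributions are of the form $2\mathfrak{Re}(-\ii|\cdot|^2)$, hence vanish identically; only the explicit $\dot S$-dependence survives. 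The paper instead rewrites $\mathcal{E}^{\mathrm{GP}}$ as the quadratic form $\langle(u,v),[h^{\mathrm{GP}}-4\pi a(|u|^2+|v|^2)\mathbbm{1}](u,v)\rangle$, differentiates by hand, and exhibits an explicit cancellation between a commutator $[\,\cdot\,,h^{\mathrm{GP}}]$ and the Leibniz-rule derivative of the nonlinear matrix. Both computations are equivalent in content, but your variational/gradient-flow argument makes the ``energy changes only through explicit time dependence of the Hamiltonian'' structure evident without bookkeeping, and it avoids the intermediate spinor-Hamiltonian rewriting. Your final remark on the justification of pointwise-in-$t$ differentiation (domain preservation from (A3), $\dot S\in L^\infty_{t,x}$ from (A1), $C(\mathbb{R}_t,H^2_x)$ regularity from (A4)) is exactly the regularity the paper implicitly relies on.
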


%

\begin{proposition} \label{prop:alphaestimate} 
Under the hypothesis of Theorem \ref{theorem:main}, one has
	\begin{equation}\label{eq:ddelta_abcdef}
	\frac{\ud}{\ud t}\delta_N(t)\;=\; \delta_N^{(b)}\!(t)+\delta_N^{(c)}\!(t)+\delta_N^{(d)}\!(t)+\delta_N^{(e)}\!(t)+\delta_N^{(f)}\!(t)\,,\qquad t\geqslant 0\,.
	\end{equation}
\end{proposition}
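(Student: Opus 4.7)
The approach is Heisenberg-style: introduce an effective mean-field generator on $\cH_N$ that produces the time evolution of all projection-based objects in $\delta_N(t)$, so that the derivative of each piece collapses to the commutator of $H_N$ minus the generator with the given object. Concretely, let
\[
h^{(t)}\;:=\;-\Delta+S(\cdot,t)+8\pi a\bigl(|u_t|^2+|v_t|^2\bigr)\mathbbm{1}_{\mathbb{C}^2}\,,
\]
so that the Gross-Pitaevskii system \eqref{eq:coupled} reads $i\partial_t\!\begin{pmatrix}u_t\\v_t\end{pmatrix}=h^{(t)}\!\begin{pmatrix}u_t\\v_t\end{pmatrix}$, and lift it to $H_N^{\mathrm{GP}}:=\sum_{j=1}^N h_j^{(t)}$. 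A direct check gives $i\partial_t p_{t,j}=[H_N^{\mathrm{GP}},p_{t,j}]$ (and analogously for $q_{t,j}$), hence for every $P_{k,t}$, and by linearity and the product rule for $\widehat m_t$ and $R_{(12),t}$. Consequently, for any such operator $X_t$,
\[
\frac{\ud}{\ud t}\bigl\langle\Psi_{N,t},X_t\Psi_{N,t}\bigr\rangle\;=\;i\bigl\langle\Psi_{N,t},[H_N-H_N^{\mathrm{GP}},X_t]\Psi_{N,t}\bigr\rangle\,,
\]
with $H_N-H_N^{\mathrm{GP}}=\sum_{j<k}Z_{N,t}(x_j,x_k)$ after redistributing the one-body mean field onto pairs.

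Applying this to the first summand of $\delta_N(t)$, bosonic symmetry collapses the pair sum to $\binom{N}{2}Z_{N,t}(x_1,x_2)$, and identity \eqref{eq:commut_r} of Lemma \ref{lemma:tools}(iii) replaces $[Z_{N,t}(x_1,x_2),\widehat m_t]$ with $[Z_{N,t}(x_1,x_2),R_{(12),t}]$; self-adjointness then gives
\[
\frac{\ud}{\ud t}\bigl\langle\Psi_{N,t},\widehat m_t\Psi_{N,t}\bigr\rangle\;=\;-N(N-1)\,\mathfrak{Im}\bigl\langle\Psi_{N,t},Z_{N,t}(x_1,x_2)R_{(12),t}\Psi_{N,t}\bigr\rangle\,.
\]
For the second summand, using time-independence of $g_\beta$ and the Heisenberg identity for $R_{(12),t}$,
\begin{equation*}
\begin{split}
-N(N-1)\,\mathfrak{Re}\,\frac{\ud}{\ud t}\bigl\langle\Psi_{N,t},g_\beta R_{(12),t}\Psi_{N,t}\bigr\rangle\;=\;&N(N-1)\,\mathfrak{Im}\bigl\langle\Psi_{N,t},[H_N,g_\beta]R_{(12),t}\Psi_{N,t}\bigr\rangle\\
&+\,N(N-1)\,\mathfrak{Im}\bigl\langle\Psi_{N,t},g_\beta[H_N-H_N^{\mathrm{GP}},R_{(12),t}]\Psi_{N,t}\bigr\rangle\,.
\end{split}
\end{equation*}
The commutator $[H_N,g_\beta]$ is purely kinetic, equal to $-2(\Delta g_\beta)(x_1-x_2)-2(\nabla g_\beta)(x_1-x_2)\cdot(\nabla_1-\nabla_2)$. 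The zero-energy scattering equation \eqref{eq:zesp-beta} converts the multiplicative piece into $V_Nf_\beta-W_\beta f_\beta$, whereas the derivative piece, upon exploiting $\nabla_1 g_\beta=-\nabla_2 g_\beta$ and the $1\leftrightarrow 2$ symmetry of $\Psi_{N,t}$ and $R_{(12),t}$, assembles exactly into $\delta_N^{(c)}(t)$.

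The remaining terms come from decomposing $[H_N-H_N^{\mathrm{GP}},R_{(12),t}]=\sum_{j<k}[Z_{N,t}(x_j,x_k),R_{(12),t}]$ according to the intersection of each pair $(j,k)$ with $\{1,2\}$. The $(1,2)$-contribution, combined with the surviving first-summand term $-N(N-1)\mathfrak{Im}\langle\Psi_{N,t},Z_{N,t}(x_1,x_2)R_{(12),t}\Psi_{N,t}\rangle$ and the $V_Nf_\beta-W_\beta f_\beta$ piece from the kinetic commutator (via the identity $V_N-V_Nf_\beta=V_Ng_\beta$), reassembles exactly into $\delta_N^{(b)}(t)$. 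Pairs $(1,m)$ and $(2,m)$ with $m\geqslant 3$, reduced by bosonic symmetry to appropriate multiples of the $(1,3)$ representative, give $\delta_N^{(d)}(t)$, while the residual mean-field pieces attached to $x_1,x_2$ recombine into $\delta_N^{(f)}(t)$. Finally, pairs $(j,k)\subset\{3,\dots,N\}$, whose commutator with $R_{(12),t}$ is expanded through identity \eqref{eq:commut_r2} of Lemma \ref{lemma:tools}(iii) and reduced by $\binom{N-2}{2}$-fold bosonic symmetry to the $(3,4)$ representative, produce $\delta_N^{(e)}(t)$.

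The main obstacle will be precisely this bookkeeping: one must verify that the numerical prefactors $N(N-1)$, $4N(N-1)$, $N(N-1)(N-2)$, $\tfrac{1}{2}N(N-1)(N-2)(N-3)$, and $N(N-2)$, together with the various mean-field subtractions distributed among different pair types, all balance exactly as prescribed by $\delta_N^{(b)}$--$\delta_N^{(f)}$. The scattering equation \eqref{eq:zesp-beta} is algebraically essential for trading the singular $V_N$ for the milder $W_\beta$ inside $Z^{(\beta)}_{N,t}$, whereas Lemma \ref{lemma:tools}(iii) provides the spinorial counterpart of Pickl's commutator identity, applied both at the level of the pair $(1,2)$ via \eqref{eq:commut_r} and at the level of a disjoint pair acting on $R_{(12),t}$ through the all-particle operators $\widehat m^a,\widehat m^b$ via \eqref{eq:commut_r2}.
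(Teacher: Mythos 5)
Your proposal takes the same route as the paper's proof: you introduce the spinorial mean-field generator $h^{\mathrm{GP}}$ exactly as in \eqref{eq:spin_hamiltonian}, use the Heisenberg identity \eqref{eq:derivative_projector_p}--\eqref{eq:derivative_projector_P} for the time-dependent projections and hence for $\widehat{m}_t$ and $R_{(12),t}$, reproduce Lemma \ref{lemma:ddt} for the first summand of $\delta_N$, expand the derivative of the correction term into $[H_N,g_\beta]R_{(12),t}$ plus $g_\beta\big[H_N-\sum_j h^{\mathrm{GP}}_j,R_{(12),t}\big]$, trade $-2\Delta g_\beta$ for $(V_N-W_\beta)f_\beta$ via the zero-energy scattering equation \eqref{eq:zesp-beta} so that this piece cancels while the gradient piece assembles into $\delta_N^{(c)}$, and decompose the remaining commutator by pair type to read off $\delta_N^{(b)}$ through $\delta_N^{(f)}$. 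The only cosmetic deviation is that you package $H_N-\sum_j h^{\mathrm{GP}}_j$ as $\sum_{j<k}Z_{N,t}(x_j,x_k)$ \emph{before} splitting by pair type, whereas the paper keeps the two-body and one-body pieces separate and groups them directly; your repackaging is valid but scatters mean-field contributions among pair classes differently (e.g.\ $Z_{N,t}(x_j,x_k)$ with $j,k\geqslant 3$ carries mean-field terms that must be redistributed to the $(1,m)/(2,m)$ class to reproduce $\delta_N^{(d)}$ and leave $\delta_N^{(e)}$ with only $V_N(x_3-x_4)$), which is precisely the bookkeeping you correctly flag as the remaining work.
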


\begin{proposition}\label{prop:each_g_abcdef}
 Under the hypothesis of Theorem \ref{theorem:main}, for any $j\in\{a,b,c,d,e,f\}$ one has
\begin{equation}\label{eq:bound_dN}
 |\delta_N^{(j)}\!(t)|\;\leqslant\;C(t)(\alpha^<_N(t)+N^{-\eta})\,,\qquad t\geqslant 0,\quad N\geqslant N_0\,,
\end{equation}
for some constants $\eta>0$ and $N_0\in\mathbb{N}$, and some function $C(t)>0$ depending on $\|u_t\|_{H^2}$\,, $\|v_t\|_{H^2}$\,, and $V$, but not on $N$.
\end{proposition}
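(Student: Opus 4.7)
The plan is to estimate each of the six quantities $\delta_N^{(j)}(t)$ with $j\in\{a,b,c,d,e,f\}$ separately, adapting to the spinor setting the projection-counting technique developed by Pickl in the Gross--Pitaevskii scaling. The common toolbox is: decomposition of the identity as $\mathbbm{1}=p_j+q_j$ on one or two selected particles; the shift property \eqref{eq:commutation} together with the commutator identities \eqref{eq:commut_r}--\eqref{eq:commut_r2}; the operator norms $\|\widehat m^{a,b}\|_{\mathrm{op}},\|R_{(ij)}\|_{\mathrm{op}}=\mathcal{O}(N^{-1+\xi})$ and $\|\widehat m^{c,d,e}\|_{\mathrm{op}}=\mathcal{O}(N^{-2+3\xi})$ of Lemma \ref{lemma:tools}(iv); the symmetry reduction $\|q_1\Psi_N\|^2=\langle\Psi_N,\widehat n^{\,2}\Psi_N\rangle\leq\alpha_N^<(t)$ via the ordering \eqref{eq:m-n-n2}; the dressed-operator bounds of Lemma \ref{lemma:dressed}; the scaled-potential bounds of Lemma \ref{lemma:potential}; and the $L^p$ estimates of Lemma \ref{prop:g} for $g_\beta$ (with a derivative counterpart for $\nabla g_\beta$). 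Residual $\sqrt{\alpha_N^<(t)}$ factors coming from cross pieces $p_j A q_j$ are reabsorbed into the claimed form using the elementary splitting $2\sqrt{ab}\leq a+b$ with one factor $N^{-\eta}$, at the price of enlarging the time-dependent prefactor $C(t)$.

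The first term $\delta_N^{(a)}=\mathrm{Tr}[\dot S(\gamma_N^{(1)}-|\varphi_t\rangle\langle\varphi_t|)]$ is decomposed by $\mathbbm{1}=p_1+q_1$ around $\dot S(x_1)$ into a diagonal cancellation $-\langle\varphi_t,\dot S\varphi_t\rangle\|q_1\Psi_N\|^2$, a cross piece $p_1\dot S q_1$, and a $q_1\dot S q_1$ contribution; each is bounded by $\|\dot S\|_\infty$ (furnished uniformly by assumption (A1)) times a power of $\|q_1\Psi_N\|$, and hence by $C(t)(\alpha_N^<(t)+N^{-\eta})$ after the square-root handling above. The mean-field correction $\delta_N^{(f)}$ and the gradient term $\delta_N^{(c)}$ proceed along the same lines: one expands $R_{(12)}$ via \eqref{eq:defnr} and pairs each factor $\widehat m^{a,b}$ of norm $\mathcal{O}(N^{-1+\xi})$ against the dressed norm $\|g_\beta(x_1-x_2)\,p_j\|_{\mathrm{op}}\leq C(t)\|g_\beta\|_2=\mathcal{O}(N^{-1-\beta/2})$ of Lemma \ref{lemma:dressed} (respectively an analogous bound for $\nabla g_\beta$); the combinatorial prefactor $N(N-1)$ is then defeated as $N^2\cdot N^{-1-\beta/2}\cdot N^{-1+\xi}=N^{-\beta/2+\xi}$. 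For the four-body term $\delta_N^{(e)}$, the inner commutator $[V_N(x_3-x_4),R_{(12)}]$ is rewritten by \eqref{eq:commut_r2} as a sum of contributions all containing $\widehat m^{c,d,e}$-operators of norm $\mathcal{O}(N^{-2+3\xi})$; combining with the refined bound $\|p_3 V_N(x_3-x_4)\Psi_N\|=\mathcal{O}(N^{-1})$ from Lemma \ref{lemma:potential} and with $\|g_\beta\|_2$, the prefactor $N^4$ is dominated. The three-body term $\delta_N^{(d)}$, whose inner operator $V_N(x_1-x_3)$ shares the label $1$ with $R_{(12)}$ and thus falls outside the direct scope of \eqref{eq:commut_r2}, requires expanding the commutator by hand; the resulting terms are tamed by $\|p_1 V_N(x_1-x_3)\Psi_N\|=\mathcal{O}(N^{-1})$ together with $\|g_\beta\|_{3/2}=\mathcal{O}(N^{-1-\beta})$.

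The main obstacle is $\delta_N^{(b)}$, which encodes the short-scale cancellation at the heart of the Gross--Pitaevskii scaling. The plan is to exploit the zero-energy scattering equation \eqref{eq:zesp-beta} rewritten as $-2\Delta f_\beta=(W_\beta-V_N)f_\beta$ in order to trade the singular $V_N$ against the softer shell potential $W_\beta$: upon integrating by parts against smooth two-body test spinors and using that by the defining choice of $R_\beta$ the function $f_\beta$ is constant outside the shell, one obtains $\int(V_N-W_\beta)f_\beta=\mathcal{O}(N^{-1-\beta})$ while $\int V_N f_\beta=8\pi a/N+\mathcal{O}(N^{-1-\beta})$, so that the mean-field subtraction $\tfrac{8\pi a}{N-1}(|u_t(x_1)|^2+|v_t(x_1)|^2+|u_t(x_2)|^2+|v_t(x_2)|^2)$ entering $Z_N^{(\beta)}$ and $Z_N$ cancels the leading contribution, and the residual is paired with $R_{(12)}$ of operator norm $\mathcal{O}(N^{-1+\xi})$ via the dressed bounds of Lemmas \ref{lemma:dressed}--\ref{lemma:potential}. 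The technically hardest step is precisely the bookkeeping of this scattering cancellation in the spinor framework, where the mean-field density reads $|u_t|^2+|v_t|^2$ and the matrix structure of $p=|\varphi_t\rangle\langle\varphi_t|$ must be preserved throughout the expansion of $R_{(12)}=p_1 p_2\widehat m^b+(p_1 q_2+q_1 p_2)\widehat m^a$; this is enabled by assumption (A4), which through Sobolev embedding gives uniform-in-time control of $\|u_t\|_\infty$ and $\|v_t\|_\infty$ in Lemma \ref{lemma:dressed}. A final choice of $\beta\in(0,1)$ sufficiently small and of $\xi<\beta/6$ in the weight \eqref{eq:weightm} balances all error exponents $N^{-\beta/2+\xi}, N^{-\beta+3\xi},\ldots$ and yields a common saving $N^{-\eta}$ with some $\eta>0$, thereby closing the proof of Proposition \ref{prop:each_g_abcdef}.
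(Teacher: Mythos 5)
Your overall strategy is the same as the paper's: a term-by-term estimation of $\delta_N^{(a)},\dots,\delta_N^{(f)}$ built on the shift property \eqref{eq:commutation}, the commutator identities \eqref{eq:commut_r}--\eqref{eq:commut_r2}, the operator-norm bounds of Lemma \ref{lemma:tools}(iv), the dressed bounds of Lemma \ref{lemma:dressed} and Lemma \ref{lemma:potential}, and the $L^p$ estimates of Lemma \ref{prop:g}. For $\delta_N^{(c)},\delta_N^{(d)}$ (the $V_N(x_1-x_3)$-commutator part), $\delta_N^{(e)}$, and $\delta_N^{(f)}$, your sketch is consistent with what the paper does, including the delegation of the hardest one-component computations to \cite{Pickl-RMP-2015}.

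There is, however, a genuine gap in your treatment of $\delta_N^{(a)}$. The cross piece $2\,\mathfrak{Re}\,\langle\Psi_{N,t},(q_t)_1\dot S(x_1,t)(p_t)_1\Psi_{N,t}\rangle$ is a priori only $\mathcal{O}(\sqrt{\alpha_N^<(t)})$, and there is \emph{no} hidden factor $N^{-\eta}$ in $\|\dot S\,(p_t)_1\Psi_{N,t}\|$: that quantity is simply $\mathcal{O}(1)$. Therefore the splitting $2\sqrt{ab}\le a+b$ ``with one factor $N^{-\eta}$'' cannot be invoked here -- there is nothing playing the role of the factor $b=N^{-\eta}$. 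What actually closes this estimate is the Pickl insertion trick: write $\mathbbm{1}_{\operatorname{Ran}(q_t)_1}=\widehat n_t^{-1/2}\widehat n_t^{1/2}$, commute $\widehat n_t^{1/2}$ through the operator $q_1\dot S\,p_1$ by means of the shift property \eqref{eq:commutation} so that it becomes $\widehat n_{1,t}^{1/2}$ on the $p_1$-side, and apply Cauchy--Schwarz; then Lemma \ref{lemma:nq} turns $\|\widehat n_t^{-1/2}(q_t)_1\Psi_{N,t}\|$ into $\|\widehat n_t^{1/2}\Psi_{N,t}\|$, and the crude bound $\widehat n_{1,t}\le\widehat n_t+N^{-1/2}\mathbbm{1}$ produces the $N^{-\eta}$ tail. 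This step is essential: without it the estimate for $\delta_N^{(a)}$ yields only $\sqrt{\alpha_N^<(t)}$, which is not of the claimed form.

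A second, smaller omission concerns $\delta_N^{(b)}$. Your description of the scattering cancellation is morally right, but the actual mechanism in the paper is not an integration by parts against test spinors; rather, after splitting off the $R\,Z_N$ piece (controlled via $\|(p_t)_1Z_{N,t}\Psi_{N,t}\|=\mathcal{O}(N^{-1})$ from Lemma \ref{lemma:potential}) and the $g_\beta$-dressed mean-field piece, one reduces the remaining $W_\beta f_\beta$-vs-mean-field expectation to the hypotheses of Lemma \ref{lemma:appendix_2}. Invoking that lemma requires verifying the uniform $H^1$-bound \eqref{eq:Phi_N} on $\Psi_{N,t}$, which the paper derives from the positivity of $V_N$, the $L^\infty$-bound on $S$, and integration of the energy derivative \eqref{eq:derivative_linear} -- none of which appears in your outline. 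Finally, your closing remark ``a final choice of $\beta\in(0,1)$ sufficiently small'' is slightly off: the error exponents you listed, $N^{-\beta/2+\xi}$, $N^{-\beta+3\xi}$, improve as $\beta$ \emph{increases}, so it is $\xi$ (not $\beta$) that must be taken small relative to $\beta$.
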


With Lemma \ref{lem:enest}, Proposition \ref{prop:alphaestimate}, and Proposition \ref{prop:each_g_abcdef} at hand, we are able to obtain \eqref{eq:gronwall}.

\begin{proof}[Proof of Proposition \ref{prop:bound}]
Integrating \eqref{eq:ddelta_abcdef} in time and using the bounds \eqref{eq:bound_dN} yields
\[
 |\delta_N(t)|\;\leqslant \;C(0)(\alpha^<_N(0)+N^{-\eta})+\int_0^t C(s)(\alpha^<_N(s)+N^{-\eta})\,\ud s\,.
\]
In turn, integrating  \eqref{eq:enest} in time and using the bound \eqref{eq:bound_dN} 
yields
\[
 |\mathcal{E}_N[\Psi_{N,t}]-\mathcal{E}^{\mathrm{GP}}[u_t,v_t]\,|\;\leqslant\;|\mathcal{E}_N[\Psi_{N,0}]-\mathcal{E}^{\mathrm{GP}}[u_0,v_0]\,|+\int_0^t C(s)(\alpha^<_N(s)+N^{-\eta})\,\ud s\,.
\]
Combining the last two inequalities above, and using \eqref{eq:alpha_beta_energy}, one then obtains
\[
 |\alpha_N(t)|\;\leqslant\;\widetilde{C}(t)\big(\alpha_N^<(0)+|\mathcal{E}_N[\Psi_{N,0}]-\mathcal{E}^{\mathrm{GP}}[u_0,v_0]|+N^{-\eta}\big)+\int_0^t\widetilde{C}(s)\,\alpha_N^<(s)\,\ud s
\]
for suitable $\widetilde{C}(t)>0$, thus concluding the proof.
\end{proof}

To complete our programme, we pass now to the proof of Lemma \ref{lem:enest}, Proposition \ref{prop:alphaestimate}, and Proposition \ref{prop:each_g_abcdef}.

\begin{proof}[Proof of  Lemma \ref{lem:enest}]
Let us consider the time derivative of each energy functional separately. For  $\mathcal{E}_N[\Psi_{N,t}]=N^{-1}\langle\Psi_{N,t},H_N\Psi_{N,t}\rangle_{\!\cH_N}$, the action of the time derivative on the two vectors $\Psi_{N,t}$ produces a null term, due to \eqref{eq:Cauchy_problem} and to the self-adjointness of $H_N$, so what remains is the expectation of the time derivative of the time-dependent part of $H_N$ itself. Owing to the bosonic symmetry, this is precisely
 \begin{equation}\label{eq:derivative_linear}
\frac{\ud}{\ud t}\mathcal{E}_N[\Psi_{N,t}]\;=\;\langle\Psi_{N,t}, \dot S(x_1,t)\Psi_{N,t}\rangle_{\!\cH_N}\,.
 \end{equation}
As for $ \mathcal{E}^{\mathrm{GP}}$, let us introduce the spinorial Hamiltonian
\begin{equation} \label{eq:spin_hamiltonian}
h^{\mathrm{GP}}\;:=\;\begin{pmatrix}
h^{(u,v)}_{11}&S_{12}\\S_{21}&h^{(u,v)}_{22}
\end{pmatrix}
\end{equation}
with entries defined in \eqref{eq:matrixS} and \eqref{eq:onebody-nonlin-Hamilt}. At each $t\geqslant 0$ the operator $h^{\mathrm{GP}}(t)$ acts on the one-body Hilbert space $L^2(\mathbb{R}^3)\otimes\mathbb{C}^2$. In terms of $h^{\mathrm{GP}}$, 
 \[
 \mathcal{E}^{\mathrm{GP}}[u_t,v_t]\;:=\;\Big\langle\!\!\begin{pmatrix}
 u_t \\   v_t
 \end{pmatrix}\!,\bigg[\,(h^{\mathrm{GP}})(t)-\begin{pmatrix}
 4\pi a\big( |u_t|^2+|v_t|^2\big) &0\\0&4\pi a\big( |u_t|^2+|v_t|^2\big)
 \end{pmatrix}\bigg]\! \begin{pmatrix}
 u_t\\ v_t
 \end{pmatrix}\!\!\Big\rangle_{\!L^2(\mathbb{R}^3)\otimes\mathbb{C}^2}\,.
 \]
Then, using \eqref{eq:coupled},
 \begin{equation}\label{eq:time_derivative_partial}
 \begin{split}
 &\frac{\ud}{\ud t}\mathcal{E}^{\mathrm{GP}}[u_t,v_t]\;=\; \\
&\quad=\; -\ii\,\Big\langle\!\!\begin{pmatrix}
 u_t \\   v_t
 \end{pmatrix}\!,\bigg[\,h^{\mathrm{GP}}-\begin{pmatrix}
 4\pi a\big( |u_t|^2+|v_t|^2\big) &0\\0&4\pi a\big( |u_t|^2+|v_t|^2\big)
 \end{pmatrix}, h^{\mathrm{GP}} \bigg]\! \begin{pmatrix}
 u_t\\ v_t
 \end{pmatrix}\!\!\Big\rangle_{\!L^2(\mathbb{R}^3)\otimes\mathbb{C}^2}\\
 &\qquad\quad +\Big\langle\!\!\begin{pmatrix}
 u_t \\   v_t
 \end{pmatrix}\!,\bigg(\,\frac{\ud}{\ud t}\begin{pmatrix}
 4\pi a\big( |u_t|^2+|v_t|^2\big) &0\\0&4\pi a\big( |u_t|^2+|v_t|^2\big)
 \end{pmatrix} \!\!\bigg)\! \begin{pmatrix}
 u_t\\ v_t
 \end{pmatrix}\!\!\Big\rangle_{\!L^2(\mathbb{R}^3)\otimes\mathbb{C}^2}\\
  &\qquad\quad+\Big\langle\!\!\begin{pmatrix}
 u_t \\   v_t
 \end{pmatrix}\!,\dot S(t)\! \begin{pmatrix}
 u_t\\ v_t
 \end{pmatrix}\!\!\Big\rangle_{\!L^2(\mathbb{R}^3)\otimes\mathbb{C}^2} \,.
\end{split}
 \end{equation}
For the second summand of \eqref{eq:time_derivative_partial} we compute, by the Leibniz rule,
\[
\begin{split}
\Big\langle&\!\!\begin{pmatrix}
u_t \\   v_t
\end{pmatrix}\!,\bigg(\,\frac{\ud}{\ud t}\begin{pmatrix}
|u_t|^2+|v_t|^2 &0\\0& |u_t|^2+|v_t|^2
\end{pmatrix} \!\!\bigg)\! \begin{pmatrix}
u_t\\ v_t
\end{pmatrix}\!\!\Big\rangle_{\!L^2(\mathbb{R}^3)\otimes\mathbb{C}^2}\\
=&\;\big\langle u_t,\Big[\big(\de_t\overline u_t\big)u_t+\overline u_t\big(\de_tu_t\big)\Big] u_t\big\rangle_{L^2(\mathbb{R}^3)}\;+\;\big\langle u_t,\Big[\big(\de_t\overline v_t\big)v_t+\overline v_t\big(\de_tv_t\big)\Big] u_t\big\rangle_{L^2(\mathbb{R}^3)}\\
&+\;\big\langle v_t,\Big[\big(\de_t\overline u_t\big)u_t+\overline u_t\big(\de_tu_t\big)\Big] v_t\big\rangle_{L^2(\mathbb{R}^3)}\;+\;\big\langle v_t,\Big[\big(\de_t\overline v_t\big)v_t+\overline v_t\big(\de_tv_t\big)\Big] v_t\big\rangle_{L^2(\mathbb{R}^3)}\\
\;=&\;\big\langle \de_t u_t,|u_t|^2u_t\big\rangle_{L^2(\mathbb{R}^3)}\;+\;\big\langle  u_t,|u_t|^2\big(\de_tu_t\big)\big\rangle_{L^2(\mathbb{R}^3)}+\big\langle \de_t v_t,|u_t|^2v_t\big\rangle_{L^2(\mathbb{R}^3)}\;+\;\big\langle  v_t,|u_t|^2\big(\de_tv_t\big)\big\rangle_{L^2(\mathbb{R}^3)}\\
&+\;\big\langle \de_t u_t,|v_t|^2u_t\big\rangle_{L^2(\mathbb{R}^3)}\;+\;\big\langle  u_t,|v_t|^2\big(\de_tu_t\big)\big\rangle_{L^2(\mathbb{R}^3)}+\big\langle \de_t v_t,|v_t|^2v_t\big\rangle_{L^2(\mathbb{R}^3)}\;+\;\big\langle  v_t,|v_t|^2\big(\de_tv_t\big)\big\rangle_{L^2(\mathbb{R}^3)}.
\end{split}
\] 
Bringing the latter expression back to spinorial form gives
\[
\begin{split}
\Big\langle\!\!\begin{pmatrix}
u_t \\   v_t
\end{pmatrix}\!,\bigg(\,\frac{\ud}{\ud t}&\begin{pmatrix}
|u_t|^2+|v_t|^2 &0\\0& |u_t|^2+|v_t|^2
\end{pmatrix} \!\!\bigg)\! \begin{pmatrix}
u_t\\ v_t
\end{pmatrix}\!\!\Big\rangle_{\!L^2(\mathbb{R}^3)\otimes\mathbb{C}^2}\\
=&\;\,\Big\langle\frac{\ud}{\ud t}\begin{pmatrix}
u_t \\   v_t
\end{pmatrix},\begin{pmatrix}
|u_t|^2+|v_t|^2 &0\\0& |u_t|^2+|v_t|^2
\end{pmatrix} \! \begin{pmatrix}
u_t\\ v_t
\end{pmatrix}\!\!\Big\rangle_{\!L^2(\mathbb{R}^3)\otimes\mathbb{C}^2}\\
&\;+\,\Big\langle\!\!\begin{pmatrix}
u_t\\ v_t
\end{pmatrix},\begin{pmatrix}
|u_t|^2+|v_t|^2 &0\\0& |u_t|^2+|v_t|^2
\end{pmatrix} \frac{\ud}{\ud t}\begin{pmatrix}
u_t \\   v_t
\end{pmatrix}\!\!\Big\rangle_{\!L^2(\mathbb{R}^3)\otimes\mathbb{C}^2},
\end{split}
\]
which, since by \eqref{eq:coupled} and \eqref{eq:spin_hamiltonian} the time derivatives produce $-\ii h^{\mathrm{GP}}$, yields
\[
\begin{split}
\Big\langle\!\!\begin{pmatrix}
u_t \\   v_t
\end{pmatrix}\!,\bigg(\,\frac{\ud}{\ud t}&\begin{pmatrix}
|u_t|^2+|v_t|^2 &0\\0& |u_t|^2+|v_t|^2
\end{pmatrix} \!\!\bigg)\! \begin{pmatrix}
u_t\\ v_t
\end{pmatrix}\!\!\Big\rangle_{\!L^2(\mathbb{R}^3)\otimes\mathbb{C}^2}\\
& \quad=\;-\,\ii\,\Big\langle\!\!\begin{pmatrix}\label{eq:appendix_2}
u_t \\   v_t
\end{pmatrix}\!,\bigg[\,\begin{pmatrix}
|u_t|^2+|v_t|^2 &0\\0& |u_t|^2+|v_t|^2
\end{pmatrix}, h^{\mathrm{GP}} \bigg]\! \begin{pmatrix}
u_t\\ v_t
\end{pmatrix}\!\!\Big\rangle_{\!L^2(\mathbb{R}^3)\otimes\mathbb{C}^2}.	
\end{split}
\]
This identity shows that an exact cancellation takes place between the first two summands of \eqref{eq:time_derivative_partial}, and using also $[h^{\mathrm{GP}},h^{\mathrm{GP}}]=\mathbb{O}$) one gets
\[
\frac{\ud}{\ud t}\mathcal{E}^{\mathrm{GP}}[u_t,v_t]\;=\;\Big\langle\!\!\begin{pmatrix}
u_t \\   v_t
\end{pmatrix}\!,\dot S(t)\! \begin{pmatrix}
u_t\\ v_t
\end{pmatrix}\!\!\Big\rangle_{\!L^2(\mathbb{R}^3)\otimes\mathbb{C}^2}.
\]
Comparing the quantities $\frac{\ud}{\ud t}\mathcal{E}_N[\Psi_{N,t}]$ and $\frac{\ud}{\ud t}\mathcal{E}^{\mathrm{GP}}[u_t,v_t]$ computed above with \eqref{eq:def_dN_a} yields finally the conclusion \eqref{eq:enest}.
\end{proof}

Next, in order to establish the identity \eqref{eq:ddelta_abcdef} of Proposition \ref{prop:alphaestimate}, let us single out the following fact.

\begin{lemma} \label{lemma:ddt}Under the assumptions of Theorem \ref{theorem:main}, one has
	\[
	\frac{\ud}{\ud t}\langle\Psi_{N,t},\widehat{m}_t \Psi_{N,t}\rangle_{\!\cH_N}\;=\;-N(N-1)\,{\mathfrak{Im}}\,\langle\Psi_{N,t},Z_{N,t}(x_1,x_2)\,R_{(12),t}\,\Psi_{N,t}\rangle_{\!\cH_N}\,,\qquad t\geqslant 0\,,
	\]
	with $Z_{N,t}(x_1,x_2)$ defined in \eqref{eq:defz}.
\end{lemma}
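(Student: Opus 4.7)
The plan is to differentiate directly, splitting the time dependence of $\widehat{m}_t$ from that of $\Psi_{N,t}$, and then exploit the two Schr\"odinger dynamics at hand. By the product rule together with $\ii\partial_t\Psi_{N,t}=H_N\Psi_{N,t}$ and the self-adjointness of $H_N$,
\[
\frac{\ud}{\ud t}\langle\Psi_{N,t},\widehat{m}_t\Psi_{N,t}\rangle_{\cH_N} \;=\; \ii\langle\Psi_{N,t},[H_N,\widehat{m}_t]\Psi_{N,t}\rangle_{\cH_N} + \langle\Psi_{N,t},(\partial_t\widehat{m}_t)\Psi_{N,t}\rangle_{\cH_N}.
\]

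The crucial step is to rewrite $\partial_t\widehat{m}_t$ itself as a commutator. Since $(u_t,v_t)$ solves \eqref{eq:coupled}, the spinor $\begin{pmatrix}u_t\\v_t\end{pmatrix}$ satisfies $\ii\partial_t\begin{pmatrix}u_t\\v_t\end{pmatrix}=h^{\mathrm{GP}}(t)\begin{pmatrix}u_t\\v_t\end{pmatrix}$ with $h^{\mathrm{GP}}$ the one-body spinorial Hamiltonian of \eqref{eq:spin_hamiltonian}. Thus $\partial_t p_t=-\ii[h^{\mathrm{GP}}(t),p_t]$ on the one-body space, and lifted to $\cH_N$ this gives $\partial_t(p_t)_j=-\ii[h^{\mathrm{GP}}_j,(p_t)_j]$ and analogously for $(q_t)_j$. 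Because $h^{\mathrm{GP}}_j$ commutes with $(p_t)_i,(q_t)_i$ for $i\neq j$, the Leibniz rule applied term by term to every summand of each $P_k$ yields
\[
\partial_t\widehat{m}_t \;=\; -\,\ii\,\Big[\sum_{j=1}^N h^{\mathrm{GP}}_j,\,\widehat{m}_t\Big].
\]
Substituting back gives
\[
\frac{\ud}{\ud t}\langle\Psi_{N,t},\widehat{m}_t\Psi_{N,t}\rangle_{\cH_N} \;=\; \ii\Big\langle\Psi_{N,t},\Big[H_N-\sum_{j=1}^N h^{\mathrm{GP}}_j,\,\widehat{m}_t\Big]\Psi_{N,t}\Big\rangle_{\cH_N},
\]
and the kinetic+magnetic pieces $-\Delta_j+S(x_j,t)$ cancel exactly between $H_N$ and $\sum_j h^{\mathrm{GP}}_j$, leaving
\[
H_N-\sum_{j}h^{\mathrm{GP}}_j \;=\; \sum_{j<k}V_N(x_j-x_k) - 8\pi a\sum_j\big(|u_t(x_j)|^2+|v_t(x_j)|^2\big).
\]

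Using the elementary identity $\sum_j f(x_j)=\frac{1}{N-1}\sum_{j<k}(f(x_j)+f(x_k))$ on the mean-field sum converts the above into $\sum_{j<k}Z_{N,t}(x_j,x_k)$ with $Z_{N,t}$ exactly as in \eqref{eq:defz}. Bosonic symmetry of $\Psi_{N,t}$ and of $\widehat{m}_t$ then collapses the pair sum to a single term:
\[
\frac{\ud}{\ud t}\langle\Psi_{N,t},\widehat{m}_t\Psi_{N,t}\rangle_{\cH_N} \;=\; \ii\,\frac{N(N-1)}{2}\,\langle\Psi_{N,t},[Z_{N,t}(x_1,x_2),\widehat{m}_t]\Psi_{N,t}\rangle_{\cH_N}.
\]
Now I invoke Lemma~\ref{lemma:tools}(iii) with $A_{12}=Z_{N,t}(x_1,x_2)$ to replace the commutator $[Z_{N,t}(x_1,x_2),\widehat{m}_t]$ by $[Z_{N,t}(x_1,x_2),R_{(12),t}]$. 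Since $Z_{N,t}(x_1,x_2)$ is a real-valued multiplication operator and $R_{(12),t}$ is self-adjoint by construction (being a sum of products of mutually commuting self-adjoint one-body projections with the self-adjoint $\widehat{m}^{a}_t,\widehat{m}^{b}_t$), their commutator is anti-self-adjoint and the standard identity $\ii\langle\Psi,[A,B]\Psi\rangle=-2\,\mathfrak{Im}\langle\Psi,AB\Psi\rangle$ for self-adjoint $A,B$ produces the prefactor $-N(N-1)$ and the asserted formula.

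The only delicate point in this plan is the second step: justifying that the many-body lift genuinely satisfies $\partial_t\widehat{m}_t=-\ii[\sum_j h^{\mathrm{GP}}_j,\widehat{m}_t]$. This rests on the mutual commutativity of the $(p_t)_j,(q_t)_j$ across different indices $j$ and on the $C^1$-in-time regularity of $(u_t,v_t)$ supplied by assumption (A4), which together make the Leibniz rule applicable to each finite sum $P_k$. Everything downstream is commutator algebra; the real technical content of the section is concentrated later on, where $R_{(12),t}$ and $Z_{N,t}$ must be controlled against the short-scale singularity of $V_N$.
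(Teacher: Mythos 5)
Your proof is correct and takes essentially the same route as the paper: product rule on the inner product, rewriting $\partial_t\widehat{m}_t$ as the commutator $-\ii[\sum_j h^{\mathrm{GP}}_j,\widehat{m}_t]$ via $\partial_t p_t=-\ii[h^{\mathrm{GP}},p_t]$ lifted through the $P_k$'s, exact cancellation of the one-body parts, bosonic symmetrisation down to the pair $(1,2)$, replacement of $\widehat{m}_t$ by $R_{(12),t}$ via Lemma~\ref{lemma:tools}(iii), and the final $\mathfrak{Im}$-identity for self-adjoint operators. The only marginal addition you make is spelling out the combinatorial identity $\sum_j f(x_j)=\tfrac{1}{N-1}\sum_{j<k}(f(x_j)+f(x_k))$, which the paper leaves implicit under "bosonic symmetry."
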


\begin{proof} 
First, owing to \eqref{eq:coupled} and to the definition \eqref{eq:spin_hamiltonian} of $h^{\mathrm{GP}}$, 
	\begin{equation}\label{eq:derivative_projector_p}
\begin{split}
	\frac{\ud}{\ud t}\,p_t\;&=\;\frac{\ud}{\ud t}\bigg(\:\Big| \!\!\begin{pmatrix} u_t \\ v_t\end{pmatrix}\!\!\Big\rangle\Big\langle\!\! \begin{pmatrix} u_t \\ v_t\end{pmatrix}\!\!\Big|\:\bigg)\;=\; -\ii\,[h^{\mathrm{GP}}(t),p_t] \\
\frac{\ud}{\ud t}\,q_t\;&=\;-\frac{\ud}{\ud t}\,p_t\;=\;\ii\,[h^{\mathrm{GP}}(t),p_t] \;=\;-\ii\,[h^{\mathrm{GP}}(t),q_t]\,,
\end{split}
	\end{equation}
whence, differentiating in time in \eqref{eq:defPk},
	\begin{equation}\label{eq:derivative_projector_P}
	\frac{\ud}{\ud t}\,P_k\;=\;-\ii\Big[\sum_{j=1}^Nh^{\mathrm{GP}}_j,P_k\Big],\qquad k\in\{0,\dots,N\}\,.
	\end{equation}
When the time derivative of $\langle\Psi_{N,t},\widehat{m}_t \Psi_{N,t}\rangle_{\!\cH_N}$ hits the $\Psi_{N,t}$'s, this produces a commutator term $[H_N,\widehat{m}_t]$, owing to \eqref{eq:Cauchy_problem}, whereas when the time derivatives hits each $P_k$ in $\widehat{m}_t$, this produces a commutator term of the form \eqref{eq:derivative_projector_P}. Thus,
	\[
	\frac{\ud}{\ud t}\,\langle\Psi_{N,t},\widehat{m}_t \Psi_{N,t}\rangle_{\!\cH_N}\;=\;\ii \,\big\langle\Psi_{N,t},\big[H_N-\sum_{j=1}^N h^{\mathrm{GP}}_j,\,\widehat{m}_t\big]\Psi_{N,t}\big\rangle_{\!\cH_N}\,.
	\]
In the r.h.s.~above an exact cancellation occurs between the terms $\sum_{j=1}^N(-\Delta_{x_j})+\sum_{j=1}^N S(x_j,t)$ given by $H_N$ and the same terms given by $\sum_{j=1}^N h^{\mathrm{GP}}_j$, and what remains is
	\[
	\frac{\ud}{\ud t}\,\langle\Psi_{N,t},\widehat{m}_t \Psi_{N,t}\rangle_{\!\cH_N}\;=\;\ii \,\big\langle\Psi_{N,t},\big[\sum_{i<j}^NV_N(x_i-x_j)-\sum_{i=1}^N8\pi a(|u_t(x_i)|^2+|v_t(x_i)|^2,\,\widehat{m}_t\big]\Psi_{N,t}\big\rangle_{\!\cH_N}\,.
	\]
Because of the bosonic symmetry of $\Psi_{N,t}$ and $\widehat{m}_t$, the identity above reads also
	\[
	\begin{split}
		&\frac{\ud}{\ud t}\,\langle\Psi_{N,t},\widehat{m}_t \Psi_{N,t}\rangle_{\!\cH_N}\;= \\
&\;\; =\;{\textstyle\frac{1}{2}}\,\ii\,N(N-1)\big\langle\Psi_{N,t},\big[V_N(x_1-x_2)\\
&\qquad\qquad\qquad\qquad-{\textstyle\frac{8\pi a}{N-1}}(|u_t(x_1)|^2+|v_t(x_1)|^2)+|u_t(x_2)|^2+|v_t(x_2)|^2),\,\widehat{m}_t\big]\Psi_{N,t}\big\rangle_{\!\cH_N}\\
	&\;\;=\;{\textstyle\frac{1}{2}}\,\ii\,N(N-1)\langle\Psi_{N,t},[Z_{N,t}(x_1,x_2),\,\widehat{m}_t]\Psi_{N,t}\rangle_{\!\cH_N}\,,
	\end{split}
	\]
	where $Z_{N,t}(x_1,x_2)$ is defined in \eqref{eq:defz}. Last,
	\[\begin{split}
	\frac{\ud}{\ud t}\,\langle\Psi_{N,t},\widehat{m}_t \Psi_{N,t}\rangle_{\!\cH_N}\;&=\;{\textstyle\frac{1}{2}}\,\ii\,N(N-1)\,\langle\Psi_{N,t},[Z_{N,t}(x_1,x_2),R_{(12),t}]\,\Psi_{N,t}\rangle_{\!\cH_N}\\
	&=\;-N(N-1)\,\mathfrak{Im}\,\langle\Psi_{N,t},Z_{N,t}(x_1,x_2)R_{(12),t}\,\Psi_{N,t}\rangle_{\!\cH_N}\,,
	\end{split}
	\]
where in the first identity we used \eqref{eq:commut_r} of Lemma \ref{lemma:tools}(iii) and in the second identity we used the property $\langle\varphi,[A,B]\varphi\rangle\;=\;2\,\ii\,\mathfrak{Im}\langle\varphi, AB\varphi\rangle$ of bounded symmetric operators $A$ and $B$.
\end{proof}



\begin{proof}[Proof of Proposition \ref{prop:alphaestimate}]
It follows at once from the definition \eqref{eq:def_deltaN} of $\delta_N$ and from Lemma \ref{lemma:ddt} above that
\[
 \begin{split}
  \frac{\ud}{\ud t}\delta_N(t)\;&=\;-N(N-1)\,{\mathfrak{Im}}\,\langle\Psi_{N,t},Z_{N,t}(x_1,x_2)\,R_{(12),t}\,\Psi_{N,t}\rangle_{\!\cH_N} \\
&\qquad -N(N-1)\,\frac{\ud}{\ud t}\mathfrak{Re}\langle\Psi_{N,t},g_\beta(x_1-x_2)\,R_{(12),t}\,\Psi_{N,t}\rangle_{\!\cH_N} \,.
 \end{split}
\]
In the r.h.s.~above the time derivative can either hit the $\Psi_{N,t}$'s, thus producing $H_N\Psi_{N,t}$ via \eqref{eq:Cauchy_problem}, or the operator $R_{(12),t}$: in the latter case, we see from the definition \eqref{eq:defnr} of $R_{(ij)}$ and from \eqref{eq:derivative_projector_p} that
	\[
	\frac{\ud}{\ud t}\,R_{(k\ell)}\;=\;-\ii\Big[\sum_{j=1}^Nh^{\mathrm{GP}}_j,R_{(k\ell)}\Big]\,.
	\]
Therefore,
	\[
	\begin{split}
	\frac{\ud}{\ud t}\delta_N(t)\;&=\;-N(N-1)\,{\mathfrak{Im}}\,\langle\Psi_{N,t},Z_{N,t}(x_1,x_2)\,R_{(12),t}\,\Psi_{N,t}\rangle_{\!\cH_N}\\
	&\qquad-\,N(N-1)\,\mathfrak{Im}\,\big\langle\Psi_{N,t},g_\beta(x_1-x_2)\big[\sum_{j=1}^N h^{\mathrm{GP}}_j,R_{(12),t}\big]\Psi_{N,t}\big\rangle_{\!\cH_N}\\
	&\qquad-\,N(N-1)\,\mathfrak{Im}\,\langle\Psi_{N,t},g_\beta(x_1-x_2)R_{(12,t)}\,H_N\,\Psi_{N,t}\rangle_{\!\cH_N}\\
	&\qquad+\,N(N-1)\,\mathfrak{Im}\,\langle\Psi_{N,t},H_N \,g_\beta(x_1-x_2)R_{(12),t}\,\Psi_{N,t}\rangle_{\!\cH_N}\\
	&=\;-N(N-1)\,{\mathfrak{Im}}\,\langle\Psi_{N,t},Z_{N,t}(x_1,x_2)\,R_{(12),t}\,\Psi_{N,t}\rangle_{\!\cH_N}\\
	&\qquad +N(N-1)\,\mathfrak{Im}\,\langle\Psi_{N,t},[\,H_N,g_\beta(x_1-x_2)]\,R_{(12),t}\,\Psi_{N,t}\rangle_{\!\cH_N}\\
	&\qquad+N(N-1)\,\mathfrak{Im}\,\langle\Psi_{N,t},g_\beta(x_1-x_2)\big[H_N-\sum_{j=1}^Nh^{\mathrm{GP}}_j,R_{(12),t}\big]\,\Psi_{N,t}\rangle_{\!\cH_N}\,.
	\end{split}
	\]
In the last summand above, both $g_\beta(x_1-x_2)$ and $R_{(12)}$ break the full bosonic symmetry: as a consequence, $H_N-\sum_{j=1}^Nh^{GP}_j$ produces several terms, depending on the presence or absence of the variables $x_1$ and $x_2$. We find 
	\begin{align*}
	&\frac{\ud}{\ud t}\delta_N(t)\;=\;-N(N-1)\,{\mathfrak{Im}}\,\langle\Psi_{N,t},Z_{N,t}(x_1,x_2)\,R_{(12),t}\,\Psi_{N,t}\rangle_{\!\cH_N} \\
	&\;+N(N-1)\,\mathfrak{Im}\,\langle\Psi_{N,t},[\,H_N,g_\beta(x_1-x_2)]\,R_{(12),t}\,\Psi_{N,t}\rangle_{\!\cH_N} \\
	&\;+N(N-1)\mathfrak{Im}\langle\Psi_N,g_\beta(x_1-x_2)[Z(x_1,x_2),R_{(12),t}]\Psi_N\rangle_{\!\cH_N} \\
	&\;+N(N-2)\mathfrak{Im}\langle\Psi_{N,t},g_\beta(x_1-x_2)\big(|u_t(x_1)|^2+|u_t(x_2)|^2+|v_t(x_1)|^2+|v_t(x_2)|^2,R_{(12),t} \big)\Psi_{N,t}\rangle_{\!\cH_N} \\
&\;+N(N-1)(N-2)\mathfrak{Im}\,\langle\Psi_{N,t},g_\beta(x_1-x_2)[V_N(x_1-x_3)+8\pi a(|u_t(x_3)|^2+|v_t(x_3)|^2),R_{(12),t} ]\Psi_{N,t}\rangle_{\!\cH_N} \\
&\;+{\textstyle\frac{1}{2}}\,N(N-1)(N-2)(N-3)\mathfrak{Im}\,\langle\Psi_{N,t},g_\beta(x_1-x_2)[V_N(x_3-x_4),R_{(12),t} ]\Psi_{N,t}\rangle_{\!\cH_N}\,. 
	\end{align*}
The last \emph{three} summands in the r.h.s.~above are recognised to be, respectively, $\delta_N^{(f)}(t)$, $\delta_N^{(d)}(t)$, and $\delta_N^{(e)}(t)$, whence
\[
 \begin{split}
  \frac{\ud}{\ud t}\delta_N(t)\;=&\;\delta_N^{(d)}(t) + \delta_N^{(e)}(t)+\delta_N^{(f)}(t)+N(N-1)\,\mathfrak{Im}\,\langle\Psi_{N,t},[\,H_N,g_\beta(x_1-x_2)]\,R_{(12),t}\,\Psi_{N,t}\rangle_{\!\cH_N} \\
 &\;-N(N-1)\,\mathfrak{Im}\,\langle\Psi_{N,t},(1-g_\beta(x_1-x_2))\, Z_{N,t}(x_1,x_2)\,R_{(12),t}\,\Psi_{N,t}\rangle_{\!\cH_N} \\
&\;-N(N-1)\,\mathfrak{Im}\,\langle\Psi_{N,t},g_\beta(x_1-x_2)\,R_{(12),t}\, Z_{N,t}(x_1,x_2)\,\Psi_{N,t}\rangle_{\!\cH_N} \,.
 \end{split}
\]
By means of the identity
	\begin{equation*} 
	(1-g_\beta(x_1-x_2))Z_{N,t}(x_1,x_2)=Z_{N,t}^{(\beta)}(x_1,x_2)+(V_N(x_1-x_2)-W_\beta(x_1-x_2))f_\beta(x_1-x_2),
	\end{equation*}
that follows from \eqref{eq:defng}, \eqref{eq:defz_beta}, and \eqref{eq:defz}, the above expression for $\frac{\ud}{\ud t}\delta_N(t)$ takes the form 
\[
 \begin{split}
  \frac{\ud}{\ud t}\delta_N(t)\;&=\;\delta_N^{(d)}(t) + \delta_N^{(e)}(t)+\delta_N^{(f)}(t) \\
&\qquad -N(N-1)\,\mathfrak{Im}\Big(\langle\Psi_{N,t},Z^{(\beta)}_{N,t}(x_1,x_2)R_{(12),t}\Psi_{N,t}\rangle_{\!\cH_N}+ \\
&\qquad \qquad\qquad\qquad\qquad+ \langle\Psi_{N,t},g_\beta(x_1-x_2)R_{(12),t}\,Z_{N,t}(x_1,x_2)\Psi_{N,t}\rangle_{\!\cH_N}\Big)\,, \\
&\qquad -N(N-1)\,\mathfrak{Im}\,\langle\Psi_{N,t},(V_N(x_1-x_2)-W_\beta(x_1-x_2))f_\beta(x_1-x_2)R_{(12),t}\,\Psi_{N,t}\rangle_{\!\cH_N}\\
&\qquad +N(N-1)\,\mathfrak{Im}\,\langle\Psi_{N,t},[\,H_N,g_\beta(x_1-x_2)]\,R_{(12),t}\,\Psi_{N,t}\rangle_{\!\cH_N} \\
& = \;\delta_N^{(b)}(t) + \delta_N^{(d)}(t) + \delta_N^{(e)}(t)+\delta_N^{(f)}(t) \\
&\qquad -N(N-1)\,\mathfrak{Im}\,\langle\Psi_{N,t},(V_N(x_1-x_2)-W_\beta(x_1-x_2))f_\beta(x_1-x_2)R_{(12),t}\,\Psi_{N,t}\rangle_{\!\cH_N}\\
&\qquad +N(N-1)\,\mathfrak{Im}\,\langle\Psi_{N,t},[\,H_N,g_\beta(x_1-x_2)]\,R_{(12),t}\,\Psi_{N,t}\rangle_{\!\cH_N}\,.
 \end{split}
\]
Last, let us focus on the last two summands in the r.h.s.~above. Precisely at this level a cancellation occurs in which the difference $V_N-W_\beta$ is controlled by the commutator $[H_N,g_\beta]$, at the cost of the further term $\delta_N^{(c)}$ that is going to appear in a moment. We compute
\[
 \begin{split}
  [H_N&,g_\beta(x_1-x_2)]\;=\;-[H_N,f_\beta(x_1-x_2)]\;=\;[\Delta_{x_1}+\Delta_{x_2},f_\beta(x_1-x_2)] \\
 &=\;(\Delta_{x_1}+\Delta_{x_2})f_\beta(x_1-x_2)+2(\nabla_{x_1}f_\beta(x_1-x_2))\nabla_{x_1}+2(\nabla_{x_2}f_\beta(x_1-x_2))\nabla_{x_2} \\
&=\;(V_N(x_1-x_2)-W_\beta(x_1-x_2))f_\beta(x_1-x_2)\\
&\quad\,\,\,\,-2(\nabla_{x_1}g_\beta(x_1-x_2))\nabla_{x_1}-2(\nabla_{x_2}g_\beta(x_1-x_2))\nabla_{x_2}
 \end{split}
\]
having used \eqref{eq:defng} in the first identity and the zero-energy scattering equation \eqref{eq:zesp-beta} in the last one. We thus see that the $(V_N-W_\beta)f_\beta$-term gets cancelled out in the above expression for $\frac{\ud}{\ud t}\delta_N(t)$, whereas the $(\nabla g)\nabla$-terms produce precisely the expression \eqref{eq:def_dN_c} for $\delta_N^{(c)}$. The conclusion is
\[
 \frac{\ud}{\ud t}\delta_N(t)\;=\;\delta_N^{(b)}(t) +\delta_N^{(c)}(t) +\delta_N^{(d)}(t) + \delta_N^{(e)}(t)+\delta_N^{(f)}(t)\,,
\]
which completes the proof.
\end{proof}

Last, we establish the bounds \eqref{eq:bound_dN}.

\begin{proof}[Proof of Proposition \ref{prop:each_g_abcdef}]
Let us discuss each case $\delta_N^{(j)}$, $j\in\{a,b,c,d,e,f\}$ separately.

\medskip

\noindent\textbf{Term $\delta_N^{(a)}$.} We recall that
\[
\delta_N^{(a)}(t)\;=\;\langle\Psi_{N,t}, \dot S(x_1,t)\Psi_{N,t}\rangle_{\!\cH_N}-\Big\langle \begin{pmatrix}u_t\\v_t\end{pmatrix} ,\,\dot S(t)\begin{pmatrix}u_t\\v_t\end{pmatrix}\Big\rangle_{L^2(\mathbb{R}^3)\otimes\mathbb{C}^2}\,.
\]
Inserting $\mathbbm{1}=p_t+q_t$ into the first summand yields
\[
\begin{split}
\delta_N^{(a)}(t)\;=\;&\langle\Psi_{N,t}, (p_t)_1\dot S(x_1,t)(p_t)_1\Psi_{N,t}\rangle_{\!\cH_N}+\langle\Psi_{N,t}, (q_t)_1\dot S(x_1,t)(q_t)_1\Psi_{N,t}\rangle_{\!\cH_N}\\
&\;+2 \mathfrak{Re} \langle\Psi_{N,t}, (q_t)_1\dot S(x_1,t)(p_t)_1\Psi_{N,t}\rangle_{\!\cH_N}-\Big\langle \begin{pmatrix}u_t\\v_t\end{pmatrix} ,\,\dot S(x,t)\begin{pmatrix}u_t\\v_t\end{pmatrix}\Big\rangle_{L^2(\mathbb{R}^3)\otimes\mathbb{C}^2}\,.
\end{split}
\]
We then use the identity
\begin{equation}\label{eq:starid}
p_1A(x_1)p_1=p_1\Big\langle\begin{pmatrix}u\\v\end{pmatrix} ,\,A(x)\begin{pmatrix}u\\v\end{pmatrix}\Big\rangle_{L^2(\mathbb{R}^3)\otimes\mathbb{C}^2}\,,
\end{equation}
which is valid for any 2x2 operator-valued matrix $A(x)$, the $L^\infty$-boundedness of $\dot S$ (see assumption (A1)), and the invertibility of $\widehat{n}_t^{1/2}$ on the range of $(q_t)_1$ (i.e., $\mathbbm{1}_{\operatorname{Ran}(q_t)_1}=\widehat{n}_t^{-1/2}\widehat{n}_t^{1/2}$), so as to obtain
\begin{align} \label{eq:delta_a_partial1}
\delta_N^{(a)}(t)\;\leqslant\;|\delta_N^{(a)}(t)|\;\leqslant\;&\Big(1-\|(p_t)_1\Psi_{N,t}\|^2\Big)\,\bigg|\Big\langle \begin{pmatrix}u_t\\v_t\end{pmatrix} ,\,\dot S(t)\begin{pmatrix}u_t\\v_t\end{pmatrix}\Big\rangle_{L^2(\mathbb{R}^3)\otimes\mathbb{C}^2}\bigg|\\
&+\|\dot S\|_{L^\infty_t L^\infty_x}\|(q_t)_1\Psi_{N,t}\|^2 \label{eq:delta_a_partial2}\\
&+2 \big|	\langle\Psi_{N,t}, \hat n^{-1/2}_t\widehat n^{1/2}_t\,(q_t)_1\dot S(x_1,t)(p_t)_1\Psi_{N,t}\rangle_{\cH_N}\big|.\label{eq:delta_a_partial3}
\end{align}
The term \eqref{eq:delta_a_partial1} is controlled by $\|\dot S\|_{L^\infty_tL^\infty_x}\,\|(q_t)_1\Psi_{N,t}\|^2$ (indeed $1-\|(p_t)_1\Psi_{N,t}\|^2=\|(q_t)_1\Psi_{N,t}\|^2$). In the term \eqref{eq:delta_a_partial3} we shift $\widehat{n}_t^{1/2}$ to $\widehat{n}_{1,t}^{1/2}$ by means of \eqref{eq:commutation}. This and a Schwarz inequality yield
\begin{equation} \label{eq:delta_a_partial4}
\begin{split}
&|\delta_N^{(a)}(t)|\;\leqslant\;2\,\|\dot S\|_{L^\infty_tL^\infty_x}\,\bigg(\|(q_t)_1\Psi_{N,t}\|^2+\Big|\langle\Psi_{N,t}, \widehat n^{-1/2}_t\,(q_t)_1\dot S(x_1,t)\widehat n^{1/2}_{1,t}(p_t)_1\Psi_{N,t}\rangle_{\cH_N}\Big|\bigg)\\
&\leqslant\;\widetilde C\;\bigg(\|(q_t)_1\Psi_{N,t}\|^2+\|\widehat n^{-1/2}_t(q_t)_1\Psi_{N,t}\|\sqrt{\langle\Psi_{N,t},\widehat n_{1,t}^{1/2}(p_t)_1\dot S(x_1,t)^2(p_t)_1\widehat n_{1,t}^{1/2}\Psi_{N,t}\rangle_{\cH_N}}\bigg),
\end{split}
\end{equation}
for some constant $\widetilde C>0$. Moreover, owing to \eqref{eq:starid},
\[
\|p_1\dot S(x_1,t)^2p_1\|_{\mathrm{op}}\;=\;\|p_1\|_{\mathrm{op}}\;\Big|\Big\langle \begin{pmatrix}u\\v\end{pmatrix} ,\,\dot S(x,t)^2\begin{pmatrix}u\\v\end{pmatrix}\Big\rangle_{L^2(\mathbb{R}^3)\otimes\mathbb{C}^2}\Big|\;\leqslant\;\|\dot S\|_{L^\infty_t L^\infty_x}^2\,,
\]
and hence
\[
\sqrt{\langle\Psi_{N,t},\widehat n_{1,t}^{1/2}(p_t)_1\dot S(x_1,t)^2(p_t)_1\widehat n_{1,t}^{1/2}\Psi_{N,t}\rangle_{\cH_N}}\;\leqslant\;\|\dot S\|_{L^\infty_t L^\infty_x}\|\widehat n_{1,t}^{1/2}\Psi_{N,t}\|\,;
\]
also,
\[
\|(q_t)_1\Psi_{N,t}\|^2\;=\;\|\widehat n^{1/2}_t\Psi_{N,t}\|^2
\]
due to \eqref{eq:relation_q_n}, and
\[
\|\widehat n^{-1/2}_t(q_t)_1\Psi_{N,t}\|\;\leqslant\;\|\widehat n^{1/2}_t\Psi_{N,t}\|
\]
due to Lemma \ref{lemma:nq}. These facts, together with the operator bound $\widehat{n}_1\;\leqslant\;\widehat{n}+N^{-1/2}\mathbbm{1}$, give
\[
\begin{split}
|\delta_N^{(a)}(t)|\;&\leqslant\;\widehat{C}\big(\|\widehat n^{1/2}_t\Psi_{N,t}\|^2+\|\widehat n^{1/2}_t\Psi_{N,t}\|\|\widehat n^{1/2}_{1,t}\Psi_{N,t}\|\big)\\
&\leqslant\;\widehat{C}\bigg(\|\widehat n^{1/2}_t\Psi_{N,t}\|^2+\|\widehat n^{1/2}_t\Psi_{N,t}\|\sqrt{\|\widehat{n}^{1/2}_t\Psi_{N,t}\|^2+\frac{1}{\sqrt{N}}}\bigg)\\
&\leqslant\;\widehat{C}\Big(\|\widehat n^{1/2}_t\Psi_{N,t}\|^2+\|\widehat n^{1/2}_t\Psi_{N,t}\|^2+\frac{1}{\,N^{1/4}}\|\widehat{n}^{1/2}_t\Psi_{N,t}\|\Big)\;\leqslant\;C\Big(\|\widehat n^{1/2}_t\Psi_{N,t}\|^2+\frac{1}{\sqrt{N}}\Big)
\end{split}
\]
for some constants $\widehat C,C>0$. Last, applying \eqref{eq:m-n-n2}, we conclude
\[
|\delta_N^{(a)}(t)|\;\leqslant\;C\Big(\alpha_N^<(t)+\frac{1}{\sqrt{N}}\Big).
\] 

\bigskip

\noindent\textbf{Term $\delta_N^{(b)}$.} This term is crucial, for it is the only one containing, through $Z_{N,t}^{(\beta)}$, the actual difference between $W_\beta$ and the effective non-linear potential -- and \emph{this} difference is controllable, unlike the analogous difference with $V_N$ in place of $W_\beta$. Concerning the $Z_{N,t}$-term in $\delta_N^{(b)}$, which alone would not be controllable either, the nearby $g_\beta$ allows for an efficient estimate too.

We start with splitting
\begin{align}
\delta_N^{(b)}(t)\;=&\;-N(N-1)\mathfrak{Im}\langle\Psi_{N,t},g_\beta(x_1-x_2)R_{(12),t}\;Z_{N,t}(x_1,x_2)\Psi_{N,t}\rangle_{\!\cH_N} \label{eq:gamma_b2}\\
&\;-N(N-1)\mathfrak{Im}\langle\Psi_{N,t},\,Z_{N,t}^{(\beta)}(x_1,x_2)R_{(12),t}\Psi_{N,t}\rangle_{\!\cH_N} \label{eq:gamma_b1}\;.
\end{align}

In order to bound \eqref{eq:gamma_b2} we observe that from \eqref{eq:defnr} each summand of $R_{(12),t}$ contains at least one $p_t$, either in the variable $x_1$ or $x_2$. Since \eqref{eq:gamma_b2} is symmetric under exchange of $x_1\leftrightarrow x_2$, it follows that  $(p_t)_1(q_t)_2$ and $(q_t)_1(p_t)_2$ give the same contribution. Then
\[
\begin{split}
|\eqref{eq:gamma_b2}|\;\leqslant\;2\,N^2\|g_\beta(x_1-x_2)(p_t)_1\|_{\mathrm{op}}\big(\|\widehat{m}^a_t\|_{\mathrm{op}}+\|\widehat{m}^b_t\|_{\mathrm{op}}\big)\|(p_t)_1 Z_{N,t}(x_1,x_2)\Psi_{N,t}\|\,,
\end{split}
\]
having used the $p$'s coming from $R_{(12)}$ to multiply both $g_\beta$ and $Z_{N,t}(x_1,x_2)$. By means of the bounds \eqref{eq:ma}, \eqref{eq:mb}, \eqref{eq:dressed2}, \eqref{eq:dressedpotential}, and \eqref{eq:g2}, and the fact that the most singular contribution to $Z_{N,t}$ is given by $V_N$, we obtain
\begin{equation} \label{eq:gamma_b2_final}
|\eqref{eq:gamma_b2}|\;\leqslant\; \widehat{C}(t)\, N^{1+\xi}\|g_\beta\|_2\|(p_t)_1V_N(x_1,x_2)\Psi_{N,t}\|\;\leqslant\; \widetilde{C}(t)\,N^{-1-\frac{\beta}{2}+\xi},
\end{equation}
for suitable $\widehat{C}(t),\widetilde{C}(t)>0$ that depend on $\|u_t\|_{\infty}$ and $\|v_t\|_{\infty}$ but not on $N$. $Z_{N,t}$ contains also terms depending on $|u_t|^2$ and $|v_t|^2$ that are bounded the same way.

The summand \eqref{eq:gamma_b1}, in turn, is split as 
\begin{align}
&|\eqref{eq:gamma_b1}|\;\leqslant\;N^2\,\big|\langle\Psi_{N,t},\big(W_\beta(x_1-x_2)f_\beta(x_1-x_2)\nonumber\\
&\qquad\qquad-{\textstyle \frac{8\pi a}{N-1}}\,(|u_t(x_1)|^2+|v_t(x_1)|^2+|u_t(x_2)|^2+|v_t(x_2)|^2)\big)R_{(12),t}\Psi_{N,t}\rangle_{\!\cH_N}\big| \label{eq:gamma_b3}\\
&+ N \,\big|\langle\Psi_{N,t},8\pi a (|u_t(x_1)|^2+|v_t(x_1)|^2+|u_t(x_2)|^2+|v_t(x_2)|^2)g_\beta(x_1-x_2)R_{(12),t}\Psi_{N,t}\rangle_{\!\cH_N}\big|, \label{eq:gamma_b4}
\end{align}
having used $f_\beta=1-g_\beta$ and the definition \eqref{eq:defz_beta} of $Z_{N,t}^{(\beta)}$.
We now recognise that the summand \eqref{eq:gamma_b3} can be estimated by means of a general result from \cite{Pickl-RMP-2015} which for convenience we state in Lemma \ref{lemma:appendix_2}. Indeed, the potential $W_\beta f_\beta$ does satisfy the conditions \eqref{eq:Wtilde_1}, \eqref{eq:Wtilde_2}, \eqref{eq:Wtilde_3} of Lemma \ref{lemma:appendix_2}: condition \eqref{eq:Wtilde_1} is obvious from \eqref{eq:def_W_beta} and \eqref{eq:R_beta}; condition \eqref{eq:Wtilde_2} follows from \eqref{eq:def_W_beta} and the uniform boundedness of $f_\beta$; condition \eqref{eq:Wtilde_3} is explicitly checked in \cite[Lemma 5.1]{Pickl-RMP-2015}. Condition \eqref{eq:Phi_N} is satisfied too, where the vector $\Phi_N$ of Lemma \ref{lemma:appendix_2} is, in our case, precisely $\Psi_{N,t}$. Indeed, due to the positivity of $V_N$,
\[
\begin{split}
\mathcal{E}_N[\Psi_{N,t}]&\;=\;\|\nabla_1\Psi_{N,t}\|^2+\langle \Psi_{N,t}, S (x_1,t) \Psi_{N,t}\rangle_{\cH_N}+\frac{1}{2}(N-1)\langle\Psi_{N,t}, V_N(x_1-x_2)\Psi_{N,t}\rangle_{\cH_N}\\
&\;\geqslant\;\|\nabla_1\Psi_{N,t}\|^2+\langle \Psi_{N,t}, S (x_1,t) \Psi_{N,t}\rangle_{\cH_N}\,,
\end{split}
\] 
whence
\begin{equation} \label{eq:estimate_H1}
\|\nabla_1\Psi_{N,t}\|^2\;\leqslant\;\big|\mathcal{E}_N[\Psi_{N,t}]\big|+\|S\|_{L^\infty_t\,L^\infty_x}\,.
\end{equation}
On the other hand, integrating the bound 
\[
\frac{\ud}{\ud t}\mathcal{E}_N[\Psi_{N,t}]\;\leqslant\; \|\dot S\|_{L^\infty_t\,L^\infty_x}
\]
(see \eqref{eq:derivative_linear} above), yields
\begin{equation} \label{eq:estimate_energy}
\mathcal{E}_N[\Psi_{N,t}]\;\leqslant\;G(t)
\end{equation}
for some positive and $N$-\emph{independent} function $G(t)$. Thus, \eqref{eq:estimate_H1}-\eqref{eq:estimate_energy} prove condition \eqref{eq:Phi_N}. Therefore, Lemma \ref{lemma:appendix_2} applies and
\begin{equation}\label{eq:gamma_b3_final}
|\eqref{eq:gamma_b3}|\;\leqslant\;{c}(t)\,(\alpha^<_N(t)+N^{-\eta'})\,
\end{equation}
for some ${c}(t)>0$.
The summand \eqref{eq:gamma_b4} is estimated straightforwardly as
\begin{equation}
\begin{split}\label{eq:gamma_b4_final}
|\eqref{eq:gamma_b4}|&\;\leqslant\;C\,N\|g_\beta(x_1-x_2)p_1\|_{\mathrm{op}}\big(\|u_t\|_\infty^2+\|v_t\|_\infty^2\big)\big(\|\widehat{m}^a\|_{\mathrm{op}}+\|\widehat{m}^b\|_{\mathrm{op}}\big)\\
&\;\leqslant\; \widetilde{c}(t) N^{-1-\beta/2+\xi},
\end{split}
\end{equation}
thanks to \eqref{eq:ma}, \eqref{eq:mb}, \eqref{eq:dressed2}, and \eqref{eq:g2},  where $C$ is a positive constant and $\widetilde{ c}(t)>0$ depends on $\|u_t\|_{\infty}$, $\|v_t\|_{\infty}$\,.

Choosing $\xi$ small enough, \eqref{eq:gamma_b3_final} and \eqref{eq:gamma_b4_final} yield
\[
|\eqref{eq:gamma_b1}|\;\leqslant\;\text {max}\{c(t),\widetilde{c}(t)\}\,(\alpha^<_N(t)+N^{-\eta''})
\]
for some $\eta''>0$, which, combined with \eqref{eq:gamma_b2_final}, again with $\xi$ small enough, finally gives
\[
|\delta_N^{(b)}(t)|\;\leqslant\; C(t)\,(\alpha^<_N(t)+N^{-\eta})
\]
for some $\eta>0$ and $C(t):=\text {max}\{\widetilde{C}(t),c(t), \widetilde{ c}(t)\}$.

\bigskip

\noindent\textbf{Term $\delta_N^{(c)}$.} The term
\[
\delta_N^{(c)}(t)\;=\;-4N(N-1)\mathfrak{Im}\langle\Psi_{N,t},\nabla_1g_\beta(x_1-x_2)\nabla_1R_{(12),t}\Psi_{N,t}\rangle_{\!\cH_N}
\]
has the very same structure as the term $\gamma_c$ discussed in \cite[page 38]{Pickl-RMP-2015}: therefore, re-doing the same computations therein we obtain
\[
|\delta_N^{(c)}(t)|\;\leqslant\; C(t)\,(\alpha^<_N(t)+N^{-\eta})
\]
for some $\eta>0$ and some $C(t)>0$ depending on $\|u_t\|_{H^2}$ and $\|v_t\|_{H^2}$ but not on $N$.

\bigskip

\noindent\textbf{Term $\delta_N^{(d)}$.} 	
Let us split
\begin{align}
&\delta_N^{(d)}(t)\;=\;N(N-1)(N-2)\mathfrak{Im}\langle\Psi_{N,t},g_\beta(x_1-x_2)\Big[V_N(x_1-x_3),R_{(12),t}\Big]\Psi_{N,t}\rangle_{\!\cH_N} \label{eq:gamma_d_1}\\
&+N(N-1)(N-2)\mathfrak{Im}\langle\Psi_{N,t},g_\beta(x_1-x_2)\Big[8\pi a(|u_t(x_3)|^2+|v_t(x_3|^2),R_{(12),t}\Big]\Psi_{N,t}\rangle_{\!\cH_N}\,. \label{eq:gamma_d_2}
\end{align}
Since the summand \eqref{eq:gamma_d_1} has the very same structure as the quantity $\gamma_d$ defined in \cite[Definition 6.3]{Pickl-RMP-2015}, one can merely repeat the analysis of \cite[Appendix A.2]{Pickl-RMP-2015} in order to bound it and to obtain
\begin{equation} \label{eq:gamma_d_final1}
|\eqref{eq:gamma_d_1}|\;\leqslant\; \widetilde{C}(t)\,(\alpha^<_N+N^{-\eta'})
\end{equation}
for some $\eta'>0$ and $\widetilde{C}(t)>0$ depending on $\|u_t\|_{\infty}$ and $\|v_t\|_{\infty}$ but not on $N$. In turn, we bound \eqref{eq:gamma_d_2} by means of \eqref{eq:commut_r2} in the form $[K_{34},R_{(12)}]$, and collecting all terms this produces
\[
\begin{split}
|&\eqref{eq:gamma_d_2}| \;\leqslant \;8\pi\,a\,N^3\big|\langle\Psi_{N,t},g_\beta(x_1-x_2)\big[|u_t(x_3)|^2+|v_t(x_3)|^2\,,\, (p_t)_1(p_t)_2(p_t)_3(p_t)_4\,\widehat{m}^c_t\big]\Psi_{N,t}\rangle_{\!\cH_N}\big|\\
+&8\pi\,a\,N^3\big|\langle\Psi_{N,t},g_\beta(x_1-x_2)\big[|u_t(x_3)|^2+|v_t(x_3)|^2\,,\,(p_t)_1(p_t)_2((p_t)_3(q_4)_t+(q_t)_3(p_t)_4)\,\widehat{m}^d_t\Big]\Psi_{N,t}\rangle_{\!\cH_N}\big|\\
+&8\pi\,a\,N^3\big|\langle\Psi_{N,t},g_\beta(x_1-x_2)\big[|u_t(x_3)|^2+|v_t(x_3)|^2\,,\,((p_t)_1(q_t)_2+(q_t)_1(p_t)_2)(p_t)_3(p_t)_4\,\widehat{m}^d_t\Big]\Psi_{N,t}\rangle_{\!\cH_N}\big|\\
+&8\pi\,a\,N^3\big|\langle\Psi_{N,t},g_\beta(x_1-x_2)\times\\
&\qquad\times\big[|u_t(x_3)|^2+|v_t(x_3)|^2\,,\,((p_t)_1(q_t)_2+(q_t)_1(p_t)_2)((p_t)_3(q_t)_4+(q_t)_3(p_t)_4)\,\widehat{m}^e_t\big]\Psi_{N,t}\rangle_{\!\cH_N}\big|\,.
\end{split}
\]
In each summand above there is at least one among $p_1$ and $p_2$ which we commute through $|u_t(x_3)|^2+|v_t(x_3)|^2$ until it hits $g_\beta$, and using $\|g_{12}p_1\|_{\mathrm{op}}=\|g_{12}p_2\|_{\mathrm{op}}$ we get
\[
|\eqref{eq:gamma_d_2}| \;\leqslant \;16\pi\,a\,N^3\|g_\beta(x_1-x_2)(p_t)_1\|_{\mathrm{op}}\,\, \big\||u_t|^2+|v_t|^2\big\|_\infty \big(\|\widehat{m}_t^c\|_{\mathrm{op}}+4\,\|\widehat{m}_t^d\|_{\mathrm{op}}+4\,\|\widehat{m}_t^e\|_{\mathrm{op}}\big)\,.
\]
Further, using the bounds \eqref{eq:mcde}, \eqref{eq:dressed2}, and \eqref{eq:g2}, we obtain
\begin{equation}\label{eq:gamma_d_final2}
|\eqref{eq:gamma_d_2}|\;\leqslant\; \widehat{C}(t)\,N^{-\beta/2+3\xi}
\end{equation}
for some $\widehat{C}(t)>0$ depending on $\|u_t\|_\infty$ and $\|v_t\|_\infty$ but not on $N$.
Finally, for $\xi$ small enough, \eqref{eq:gamma_d_final1} and \eqref{eq:gamma_d_final2} give
\[
|\delta_N^{(d)}(t)|\;\leqslant \;C(t)\,(\alpha^<_N+N^{-\eta})
\]
for some $\eta>0$ and for $C(t):=\text {max}\{\widetilde{C}(t),\widehat C(t)\}$.

\bigskip

\noindent\textbf{Term $\delta_N^{(e)}$.} Also for the term
\[
\delta_N^{(e)}(t)\;=\;\dfrac{1}{2}N(N-1)(N-2)(N-3)\mathfrak{Im}\langle\Psi_{N,t},g_\beta(x_1-x_2)\Big[V_N(x_3-x_4),R_{(12),t}\Big]\Psi_{N,t}\rangle_{\!\cH_N}
\]
we use \eqref{eq:commut_r2} in the form $[K_{34},R_{(12)}]$, and we get
\[
\begin{split}
|\delta_N^{(e)}(t)| &\;\leqslant \;\,N^4\,\big|\langle\Psi_{N,t},g_\beta(x_1-x_2)\big[V_N(x_3-x_4)\,,\, (p_t)_1(p_t)_2(p_t)_3(p_t)_4\,\widehat{m}^c_t\big]\Psi_{N,t}\rangle_{\!\cH_N}\big|\\
+&\,N^4\,\big|\langle\Psi_{N,t},g_\beta(x_1-x_2)\big[V_N(x_3-x_4)\,,\,(p_t)_1(p_t)_2((p_t)_3(q_t)_4+(q_t)_3(p_t)_4)\,\widehat{m}^d_t\big]\Psi_{N,t}\rangle_{\!\cH_N}\big|\\
+&\,N^4\,\big|\langle\Psi_{N,t},g_\beta(x_1-x_2)\big[V_N(x_3-x_4)\,,\,((p_t)_1(q_t)_2+(q_t)_1(p_t)_2)(p_t)_3(p_t)_4\,\widehat{m}^d_t\big]\Psi_{N,t}\rangle_{\!\cH_N}\big|\\
+&\,N^4\,\big|\langle\Psi_{N,t},g_\beta(x_1-x_2)\times\\
&\qquad\times\big[V_N(x_3-x_4)\,,\,((p_t)_1(q_t)_2+(q_t)_1(p_t)_2)((p_t)_3(q_t)_4+(q_t)_3(p_t)_4)\,\widehat{m}^e_t\big]\Psi_{N,t}\rangle_{\!\cH_N}\big|\,.
\end{split}
\]
As done for $\delta_N^{(d)}$, we commute one $p_1$ or $p_2$ through, until when it hits $g_\beta$. Moreover, we write $V_N=\mathbbm{1}_{\operatorname{supp}V_N}V_N$, and either $V_N(x_3-x_4)$ can be commuted through so as to multiply $\Psi_{N,t}$ in the left entry of the scalar product, or it already multiplies $\Psi_{N,t}$ in the right entry. The $\mathbbm{1}_{\operatorname{supp}V_N}$'s will then be used to provide extra $N$-decay. We thus find
\[
\begin{split}
|\delta_N^{(e)}(t)|\;\leqslant\;& C\,N^4\, \|g_\beta(x_1-x_2)(p_t)_1\|_{\mathrm{op}}\big(\|\mathbbm{1}_{\mathrm{supp}V_N}(x_3-x_4)(p_t)_3\|_{\mathrm{op}}+\|(p_t)_3\mathbbm{1}_{\mathrm{supp}V_N}(x_3-x_4)\|_{\mathrm{op}}\big)\\
&\quad\times\big(\|\widehat{m}^c_t\|_{\mathrm{op}}+\|\widehat{m}^d_t\|_{\mathrm{op}} + \|\widehat{m}^e_t\|_{\mathrm{op}} \big) \|V_N(x_3-x_4)\Psi_{N,t}\|\,,
\end{split}
\]
for some constant $C>0$.
We complete the estimate using the bounds \eqref{eq:mcde}, \eqref{eq:dressed2}, \eqref{eq:dressed3}, \eqref{eq:potential}, \eqref{eq:dressedpotential}, and \eqref{eq:g2}, together with $\|\mathbbm{1}_{\mathrm{supp}V_N}\|_2\;= \;C'\,N^{-3/2}$, and we get
\[
|\delta_N^{(e)}(t)|\;\leqslant\; C(t) N^{-\frac{\beta}{2}+3\xi}
\]
for $C(t)>0$ that depends on $\|u_t\|_{\infty}$ and $\|v_t\|_{\infty}$ but not on $N$. Taking $\xi$ small enough we obtain the desired estimate.

\bigskip

\noindent\textbf{Term $\delta_N^{(f)}$.} In
\[
\delta_N^{(f)}(t)\;=\;N(N-2)\mathfrak{Im}\langle\Psi_{N,t},g_\beta(x_1-x_2)\big[|u_t(x_1)|^2+|u_t(x_2)|^2+|v_t(x_1)|^2+|v_t(x_2)|^2,R_{(12),t}\big]\Psi_{N,t}\rangle_{\!\cH_N}
\]
the function $g_\beta$ can be always commuted so as to become adjacent to one of the $p$'s contained in $R_{(12),t}$. Thus, in the usual way, we get
\[
\begin{split}
|\delta_N^{(f)}(t)|&\;\leqslant\; \widetilde{C}(t)\, N^2\|g_\beta(x_1-x_2)(p_t)_1\|_{\mathrm{op}}\big(\|\widehat{m}^a_t\|_{\mathrm{op}}+\|\widehat{m}^b_t\|_{\mathrm{op}}\big)\\
&\;\leqslant\; \widehat{C}(t)\, N^{2}\|g_\beta\|_2\big(\|\widehat{m}^a_t\|_{\mathrm{op}}+\|\widehat{m}^b_t\|_{\mathrm{op}}\big)\\
&\;\leqslant\; C(t) N^{-\beta/2+\xi}
\end{split}
\]
for suitable functions $\widetilde{C}(t),\widehat{C}(t),C(t)>0$ depending on $\|u_t\|_{\infty}$ and $\|v_t\|_{\infty}$, having used \eqref{eq:ma}, \eqref{eq:mb}, \eqref{eq:dressed2}, and \eqref{eq:g2}. Taking $\xi$ small enough we obtain the desired estimate.
\end{proof}

%
%
%
%

\section*{Acknowledgements}
For this work both authors had the pleasure to benefit from instructive discussions with N.~Benedikter, S.~Cenatiempo, N.~Leopold, P.~T.~Nam, P.~Pickl, L.~Pitaevskii, M.~Porta, G.~Roati, C.~Saffirio, and S.~Stringari, which took place on the occasion of the their recent visits at SISSA, as well as from many exchanges with G.~Dell'Antonio and A.~Trombettoni at SISSA.
We also warmly acknowledge the kind hospitality of N.~Benedikter at the University of Copenhagen, of S.~Cenatiempo at the GSSI L'Aquila, and of  M.~Porta and B.~Schlein at the University of Zurich. This work is partially supported by the 2014-2017 MIUR-FIR grant ``\emph{Cond-Math: Condensed Matter and Mathematical Physics}'' code RBFR13WAET and by the mobility financial support of INdAM-GNFM.

\appendix
\section{A useful Lemma} 

The following result can be obtained by a suitable, actually straightforward adaptation of \cite[Lemma A.4]{Pickl-RMP-2015} to our two-component formalism.

This result is in fact crucial, as it connects the `projection counting' analysis for the derivation of the non-linear dynamics in the case of soft-scaling potential ($\beta<1$) to the actual Gross-Pitaevskii case of interest, $\beta=1$.

In our discussion this connection is needed when estimating the term $\delta_N^{(b)}$ in the course of the proof of Proposition \ref{prop:each_g_abcdef}.

\begin{lemma}\label{lemma:appendix_2}
	For fixed $a>0$ and $\beta\in(0,1)$, let $(\widetilde{W}_{\beta,N})_{N\in\mathbb{N}}$ be a sequence of spherically symmetric, positive, and compactly supported functions $\widetilde{W}_{\beta,N}\in L^\infty(\mathbb{R}^3,\mathbb{R})$ such that
	\begin{align}
	&\exists \;R>0\,\,\,\text {such that }, \,\,\forall N\in\mathbb{N},\,\text { one has }\, \widetilde{W}_{\beta,N}\;=\;0\,,\;\;\;\text {for}\,\,\, |x|\;\geqslant\; R \,N^{-\beta} \label{eq:Wtilde_1}\\
	& \lim_{N\to\infty}N^{1-3\beta}\|\widetilde{W}_{\beta,N}\|_\infty\;<\;+\infty \label{eq:Wtilde_2}\\
	&\exists\; \delta>0\,\,\,\text {such that }\,\,\,\lim_{N\to\infty}\,N^\delta\big|\|N\widetilde{W}_{\beta_N}\|_1-8\pi a\big|\;<\;+\infty\,. \label{eq:Wtilde_3}
	\end{align}
	Let a sequence $(\Phi_N)_{N\in\mathbb{N}}$ be such that $\Phi_N\in\cH_N$ and 
	\begin{equation} \label{eq:Phi_N}
	\sup_N\,\|\nabla_1\Phi_N\|_{\cH_N} \;\leqslant\;K
	\end{equation}
	for some $K>0$, and let $\alpha_{N,\Phi_N}^<$ be the functional $\alpha_N^<$ defined as in \eqref{eq:def_a<} relative to $\Phi_N$. Moreover, let $(u_t,v_t)$ be a solution to the system of non-linear Schr\"odinger equations \eqref{eq:GPsystem_extended} with the same constant $a$ considered here, and, correspondingly, let $R_{(12),t}$ be defined as in \eqref{eq:defnr}.
	Then,
	\begin{equation}
	\begin{split}
	N^2\Big|\big\langle \Phi_N,\Big[&\widetilde{W}_{\beta,N}-{\textstyle\frac{8\pi a}{N-1}}\big(|u_t(x_1)|^2+|v_t(x_1)|^2+|u_t(x_2)|^2+|v_t(x_2)|^2\big)\Big]R_{(12),t}\,\Phi_N\big\rangle_{\cH_N} \Big|\\
	&\;\leqslant\;D(t)(\alpha_{N,\Phi_N}^<(t)+N^{-\eta'})
	\end{split}	
	\end{equation}
	for some $\eta'>0$ and some positive function $D(t)$ depending on $\|u_t\|_{H^2}$, $\|v_t\|_{H^2}$, and K, but not on $N$.

\end{lemma}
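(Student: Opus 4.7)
The plan is to mimic closely the proof of \cite[Lemma~A.4]{Pickl-RMP-2015}, the only genuine novelty being the use of the spinorial identity \eqref{eq:starid}: whenever a projection $(p_t)_j$ sandwiches a multiplication operator $A(x_j)$, one has $(p_t)_j A(x_j)(p_t)_j = (p_t)_j\cdot\int A\,\rho_t\,dx$ with $\rho_t := |u_t|^2 + |v_t|^2$, so that scalar expectations collapse to integrals against the combined spatial density. By assumption (A4), $u_t,v_t\in H^2$, hence $\rho_t\in H^2\cap L^\infty$, and every Sobolev- and $H^2$-based bound employed in \cite{Pickl-RMP-2015} transfers verbatim upon the substitution $|\varphi_t|^2\to\rho_t$.

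First I would insert the identity $\mathbbm{1} = (p_t)_1(p_t)_2 + \big[\mathbbm{1} - (p_t)_1(p_t)_2\big]$ on the left of the square bracket and split the inner product into a \emph{fully projected} contribution and a \emph{residual} one containing at least one of $(q_t)_1,(q_t)_2$ to the left of $[\widetilde W_{\beta,N}-\mathrm{MF}]$. In the fully projected sector, \eqref{eq:starid} collapses both $(p_t)_1(p_t)_2\,\widetilde W_{\beta,N}(x_1-x_2)\,(p_t)_1(p_t)_2$ and the corresponding sandwich of the mean-field term to scalar multiples of $(p_t)_1(p_t)_2$; a Taylor expansion of $\rho_t(y+z)\rho_t(y)$ on the support $|z|\lesssim N^{-\beta}$ of $\widetilde W_{\beta,N}$, combined with the convergence rate \eqref{eq:Wtilde_3}, controls the scalar difference by an inverse power of $N$, which once multiplied by the prefactor $N^2$ and by the $\mathcal{O}(N^{-1+\xi})$ norm of $R_{(12),t}$ from \eqref{eq:ma}--\eqref{eq:r} produces the target $N^{-\eta'}$ for $\xi$ chosen small enough.

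For the residual piece I would apply a Cauchy--Schwarz splitting that distributes a factor of $\widetilde W_{\beta,N}^{1/2}$ on either side, and estimate one of the resulting factors with the Sobolev-type bound $\|\widetilde W_{\beta,N}^{1/2}\Psi\|_2 \leqslant C\,\|\widetilde W_{\beta,N}^{1/2}\|_3\,\|\nabla_1\Psi\|_2$. Combined with the $L^{3/2}$-size of $\widetilde W_{\beta,N}$ dictated by \eqref{eq:Wtilde_1}--\eqref{eq:Wtilde_2} and the $H^1$-hypothesis \eqref{eq:Phi_N} on $\Phi_N$, this inequality produces a negative power of $N$ per occurrence of the singular potential. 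The residual $(q_t)_j$-factor is then controlled, via \eqref{eq:m-n-n2} and Lemma~\ref{lemma:nq}, by $\|\widehat m_t^{1/2}\Phi_N\|\leqslant \sqrt{\alpha^<_{N,\Phi_N}(t)}$, while the mean-field subtraction is handled analogously, with the $L^\infty$-bound on $\rho_t$ replacing the more delicate Sobolev estimate since the prefactor $(N-1)^{-1}$ already furnishes the needed decay. Young's inequality finally recombines these factors into the target bound $D(t)\big(\alpha^<_{N,\Phi_N}(t)+N^{-\eta'}\big)$.

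The main obstacle will be the bookkeeping of the \emph{mixed} terms, in which $(p_t)_1(p_t)_2$ appears on one side of $[\widetilde W_{\beta,N}-\mathrm{MF}]$ and $(p_t)_1(q_t)_2$ or $(q_t)_1(p_t)_2$ on the other: there neither the Taylor-expansion trick of the fully projected sector nor the pure Cauchy--Schwarz-with-gradient bound is directly applicable, so one must shift the weights $\widehat m_t^a,\widehat m_t^b$ contained in $R_{(12),t}$ through $\widetilde W_{\beta,N}$ via \eqref{eq:commutation} and extract an additional $\|\widetilde W_{\beta,N}^{1/2}(p_t)_j\|_{\mathrm{op}}$-type gain in the spirit of Lemma~\ref{lemma:dressed} before closing the estimate. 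Verifying that each such rearrangement produces only scalar integrals against $\rho_t$, so that the spinorial structure never inflates the constants beyond a multiplicative dependence on $\|u_t\|_{H^2}+\|v_t\|_{H^2}$, is the key bookkeeping exercise distinguishing the spinor adaptation from the original scalar argument of \cite[Lemma~A.4]{Pickl-RMP-2015}.
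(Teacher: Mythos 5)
The paper itself does not supply a proof of this lemma: it only remarks that the result ``can be obtained by a suitable, actually straightforward adaptation of \cite[Lemma A.4]{Pickl-RMP-2015} to our two-component formalism.'' Your sketch follows exactly that recipe — mimic Pickl's scalar argument with the density $|\varphi_t|^2$ replaced by $\rho_t=|u_t|^2+|v_t|^2$ via the spinorial identity \eqref{eq:starid} — and the steps you single out (projection splitting, Taylor expansion of $\rho_t$ against the short-range $\widetilde W_{\beta,N}$ together with \eqref{eq:Wtilde_3}, the $L^{3/2}$-Sobolev control using \eqref{eq:Phi_N}, the weight-shift and dressed-projection treatment of the mixed terms) are precisely the ingredients of that adaptation, so your proposal is consistent with the approach the paper defers to.
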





\def\cprime{$'$}

\end{document}